\tikzset{comod/.style={rectangle, minimum width=25pt, minimum height=10pt, draw, inner sep=1pt}}
\def\fillgray{black!20}
\def\fillA{black!20}
\def\fillB{red!50}
\def\fillC{blue!50}
\def\fillClight{blue!40}
\def\fillD{green!50}
\def\fillcomp{\fillB}
\def\C{\mathbb{C}}
\newcommand\Ps{\ensuremath{\textcolor{blue}{P_{\mathrm{s}}}}}
\newcommand\Pd{\ensuremath{\textcolor{blue}{P_{\mathrm{d}}}}}
\def\id{\ensuremath{\mathrm{id}}}
\renewcommand{\-}[0]{\nobreakdash-\hspace{0pt}}
\tikzset{arrow/.style={decoration={
    markings,
    mark=at position #1 with \arrow{angle 60}},
    postaction=decorate}
}
\tikzset{reversed arrow/.style={decoration={
    markings,
    mark=at position #1 with \arrow{angle 60 reversed}},
    postaction=decorate}
}
\tikzset{keyvertexcolour/.initial=black}
\tikzset{vertex colour/.style={keyvertexcolour={#1}}}
\tikzset {triangle/.style={
        draw,
        shape border rotate=-90,
        isosceles triangle,
        isosceles triangle apex angle=80,        
        minimum height=2em}}
\tikzset {triangleup/.style={
        draw,
        shape border rotate=90,
        isosceles triangle,
        isosceles triangle apex angle=80,
         minimum height=2em}}
\newlength\vertexradius
\newlength\innerradius
\def\halfanglesep{24}
\def\tripleanglesep{24}
\def\sideangle{30}
\def\tripleanglesep{40}
        \pgfextractx{\pgf@x}{\pgfpointpolar{90+\halfanglesep}{\innerradius}}
        \pgfextracty{\pgf@y}{\pgfpointpolar{90+\halfanglesep}{\innerradius}}
        \pgfextractx{\pgf@x}{\pgfpointpolar{90-\halfanglesep}{\innerradius}}
        \pgfextracty{\pgf@y}{\pgfpointpolar{90-\halfanglesep}{\innerradius}}
        \pgfextractx{\pgf@x}{\pgfpointpolar{90}{\innerradius}}
        \pgfextracty{\pgf@y}{\pgfpointpolar{90}{\innerradius}}
        \pgfextractx{\pgf@x}{\pgfpointpolar{\sideangle+\halfanglesep}{\innerradius}}
        \pgfextracty{\pgf@y}{\pgfpointpolar{\sideangle+\halfanglesep}{\innerradius}}
        \pgfextractx{\pgf@x}{\pgfpointpolar{\sideangle-\halfanglesep}{\innerradius}}
        \pgfextracty{\pgf@y}{\pgfpointpolar{\sideangle-\halfanglesep}{\innerradius}}
        \pgfextractx{\pgf@x}{\pgfpointpolar{\sideangle}{\innerradius}}
        \pgfextracty{\pgf@y}{\pgfpointpolar{\sideangle}{\innerradius}}
        \pgfextractx{\pgf@x}{\pgfpointpolar{-\sideangle+\halfanglesep}{\innerradius}}
        \pgfextracty{\pgf@y}{\pgfpointpolar{-\sideangle+\halfanglesep}{\innerradius}}
        \pgfextractx{\pgf@x}{\pgfpointpolar{-\sideangle-\halfanglesep}{\innerradius}}
        \pgfextracty{\pgf@y}{\pgfpointpolar{-\sideangle-\halfanglesep}{\innerradius}}
        \pgfextractx{\pgf@x}{\pgfpointpolar{-90-\halfanglesep}{\innerradius}}
        \pgfextracty{\pgf@y}{\pgfpointpolar{-90-\halfanglesep}{\innerradius}}
        \pgfextractx{\pgf@x}{\pgfpointpolar{-90+\halfanglesep}{\innerradius}}
        \pgfextracty{\pgf@y}{\pgfpointpolar{-90+\halfanglesep}{\innerradius}}
        \pgfextractx{\pgf@x}{\pgfpointpolar{-90-\tripleanglesep}{\innerradius}}
        \pgfextracty{\pgf@y}{\pgfpointpolar{-90-\tripleanglesep}{\innerradius}}
        \pgfextractx{\pgf@x}{\pgfpointpolar{-90}{\innerradius}}
        \pgfextracty{\pgf@y}{\pgfpointpolar{-90}{\innerradius}}
        \pgfextractx{\pgf@x}{\pgfpointpolar{-90+\tripleanglesep}{\innerradius}}
        \pgfextracty{\pgf@y}{\pgfpointpolar{-90+\tripleanglesep}{\innerradius}}
        \pgfextractx{\pgf@x}{\pgfpointpolar{-90}{\innerradius}}
        \pgfextracty{\pgf@y}{\pgfpointpolar{-90}{\innerradius}}
        \pgfextractx{\pgf@x}{\pgfpointpolar{180+\sideangle+\halfanglesep}{\innerradius}}
        \pgfextracty{\pgf@y}{\pgfpointpolar{180+\sideangle+\halfanglesep}{\innerradius}}
        \pgfextractx{\pgf@x}{\pgfpointpolar{180+\sideangle-\halfanglesep}{\innerradius}}
        \pgfextracty{\pgf@y}{\pgfpointpolar{180+\sideangle-\halfanglesep}{\innerradius}}
        \pgfextractx{\pgf@x}{\pgfpointpolar{180+\sideangle}{\innerradius}}
        \pgfextracty{\pgf@y}{\pgfpointpolar{180+\sideangle}{\innerradius}}
        \pgfextractx{\pgf@x}{\pgfpointpolar{180-\sideangle+\halfanglesep}{\innerradius}}
        \pgfextracty{\pgf@y}{\pgfpointpolar{180-\sideangle+\halfanglesep}{\innerradius}}
        \pgfextractx{\pgf@x}{\pgfpointpolar{180-\sideangle-\halfanglesep}{\innerradius}}
        \pgfextracty{\pgf@y}{\pgfpointpolar{180-\sideangle-\halfanglesep}{\innerradius}}
        \pgfextractx{\pgf@x}{\pgfpointpolar{180-\sideangle}{\innerradius}}
        \pgfextracty{\pgf@y}{\pgfpointpolar{180-\sideangle}{\innerradius}}
\def\labelsize{0.7}
\def\nwangle{180-\sideangle}
\def\neangle{\sideangle}
\def\swangle{180+\sideangle}
\def\seangle{-\sideangle}
\theoremstyle{plain}
\newtheorem{theorem}{Theorem}
\newtheorem{lemma}[theorem]{Lemma}
\theoremstyle{definition}
\newtheorem{definition}[theorem]{Definition}
\renewcommand{\-}[0]{\nobreakdash-\hspace{0pt}}
\newcommand\vc[1]{\begin{tabular}{@{}l}#1\end{tabular}}
\def\calign@preamble{%
   &\hfil\strut@
    \setboxz@h{\@lign$\m@th\displaystyle{##}$}%
    \ifmeasuring@\savefieldlength@\fi
    \set@field
    \hfil
    \tabskip\alignsep@
}
\let\cmeasure@\measure@
\patchcmd\cmeasure@{\divide\@tempcntb\tw@}{}{}{}
\patchcmd\cmeasure@{\divide\@tempcntb\tw@}{}{}{}
\patchcmd\cmeasure@{\ifodd\maxfields@
  \global\advance\maxfields@\@ne
  \fi}{}{}{}    
\newenvironment{calign}
{%
  \let\align@preamble\calign@preamble
  \let\measure@\cmeasure@
  \align
}
{%
  \endalign
}  
\newcommand\ignore[1]{}
\begin{document}

\title{A 2-Categorical Analysis of Complementary Families, Quantum Key Distribution and the Mean King Problem}

\author{Krzysztof Bar
\institute{Department of Computer Science,
University of Oxford}
\email{krzysztof.bar@cs.ox.ac.uk}
\and
Jamie Vicary
\institute{Centre for Quantum Technologies, University of Singapore\\
and Department of Computer Science, University of Oxford}
\email{jamie.vicary@cs.ox.ac.uk}
}
\def\titlerunning{2-Categories, Complementary Families, Quantum Key Distribution and the Mean King Problem}
\def\authorrunning{Krzysztof Bar \& Jamie Vicary}

\def\vertexsize{1.0}

\newcommand\cat[1]{\ensuremath{\text{\textbf{#1}}}}

\maketitle

\begin{abstract}
This paper explores the use of 2\-categorical technology for describing and reasoning about complex quantum procedures. We give syntactic definitions of a family of complementary measurements, and of quantum key distribution, and show that they are equivalent. We then show abstractly that either structure gives a solution to the Mean King problem, which we also formulate 2\-categorically.
\end{abstract}

\section{Introduction}

The 2\-categorical approach to quantum information is now well-developed in its basic aspects~\cite{BarVicary,StayVicary,Vicary:2012hqt}. The central aim is to use the structure of a symmetric monoidal 2\-category to describe quantum procedures in an abstract way, such that the ordinary versions of these procedures are recovered when we apply the formalism in \cat{2Hilb}, the symmetric monoidal 2\-category of 2\-Hilbert spaces~\cite{b97-hda2}. This builds on the highly successful categorical quantum mechanics research programme of Abramsky, Coecke and collaborators~\cite{Abramsky:2004ac,ac08-cqm}, in which quantum procedures are axiomatized in terms of monoidal 1\-categories.

The key advantage of the 2\-categorical setup is that many important structures, such as teleportation, dense coding and complementary observables, can be defined by single 2\-categorical equations. These typically have direct physical interpretations, with the defining equation for a structure following immediately from a careful physical description of its required properties. A computer algebra system \emph{TwoVect}~\cite{r11-2vect, 2vect} allows these equations to be directly evaluated computationally.

In this paper, we show that the formalism can be applied successfully to more sophisticated quantum procedures: measurements in a complementary family of bases, quantum key distribution (QKD), and the Mean King problem. For each scenario, we write down a 2\-categorical equation that defines the entire procedure in a precise way. For example, here is the defining diagram for BB84 QKD:
\begin{align*}
\begin{aligned}
\begin{tikzpicture} [scale=0.3,thick]
\draw [fill=black!10, draw=none] (-9,-7.25) rectangle +(6,12.75);
\draw [fill=black!10, draw=none] (0.75,-7.25) rectangle +(5.65,12.75);
\node (A) [Vertex, scale=\vertexsize, vertex colour=white] at (-4,-2) {};
\node (B) [Vertex, scale=\vertexsize] at (2.5,-0.5) {};
\node (C) [Vertex, scale=\vertexsize, vertex colour=white] at (0, 2.5) {};
\node (D) [Vertex, scale=\vertexsize] at (1.5, 1.5) {};
\draw [out=up, in=down, looseness=0.5] (-4,-2) to (2.5,-0.5);
\draw [out=up, in=down] (1.5, 1.5) to (0, 2.5);
\fill [fill=\fillcomp, draw, fill opacity=0.8]  (0.5,5)
to (0.5, 3)
to [out=down, in=down, looseness=2] (-0.5, 3)
to (-0.5,5);
\fill [fill=\fillcomp, draw, fill opacity=0.8]  (3,5) 
to (3,1)
to [out=down, in=down, looseness=2](2,1)
to (2, 1)
to [out=up, in=up, looseness=2] (1,1)
to [out=down, in=left](2.5,-0.5)
to [out=right, in=down](4,1)
to (4,5);
\fill [fill=\fillcomp, fill opacity=0.8, draw] (-7.5, 5)
to (-7.5, -2.5)
to [out=down, in=down, looseness=2](-4.5, -2.5)
to [out=up, in=up, looseness=2](-3.5, -2.5)
to [out=down, in=down, looseness=1.9](-8.5, -2.5)
to (-8.5, 0)
to (-8.5, 5);
\fill [fill=\fillC, fill opacity=0.8, draw] (-6.5, 5)
to (-6.5, -2.75)
to [out=down, in=down, looseness=2](-5.5, -2.75)
to [out=up, in=left] (-4,-2)
to [out=left, in=down] (-5.5, -1.25)
to (-5.5, 2.5)
to [out=up, in=up, looseness=1.5] (-2.5, 2.5)
to (-2.5, 1.75)
to [out=down, in=down, looseness=2] (-1.5,1.75)
to (-1.5,1.75)
to [out=up, in=left] (0, 2.5)
to [out=left, in=down] (-1.5, 3.25)
to [out=up, in=down](-1.5, 5);
\fill [fill=\fillC, fill opacity=0.8, draw] (5,5)
to (5,2.25)
to [out=down, in=right](3.5, 1.5)
to [out=right, in=up](5, 0.75)
to (5, 0.25)
to [out=down, in=right](3.5,-0.5)
to [out=right, in=up](5,-1.25)
to [out=down, in=down, looseness=2](6,-1.25)
to (6,5);
\draw (3.5, -0.5) to (2.5, -0.5);
\draw (3.5, 1.5) to (1.5, 1.5);
\node [] at (-6,-6.5){\sc alice};
\node [] at (-1,-6.5){\sc bob};
\node [] at (3.75,-6.5){\sc eve};
\foreach \x/\ya/\yb/\text in
  {-6/-6.5/-5.3/{\sc alice}: choose random bit,
   -6/-5.5/-4.3/{\sc alice}: copy the bit,
   -6/-4.5/-3.3/{\sc alice}: choose a random basis,
   -4.3/-3.5/-2.0/{\sc alice}: controlled preparation,
   5.5/-2.5/-1.8/{\sc eve}: choose a random basis,
   -1/-1.5/-1.25/{\sc eve}: intercept system,
   2.1/-0.5/-0.5/{\sc eve}: controlled measurement,
   2.5/0.5/0.4/{\sc eve}: copy measurement result,
   1.1/1.5/1.5/{\sc eve}: prepare counterfeit system,
   -2/2.5/1.15/{\sc bob}: choose a random basis,
   -0.3/3.5/2.5/{\sc bob}: controlled measurement,
   -4/4.5/3.8/{\sc alice, bob}: compare bases   }
{
  \node (z) at (-25,\ya) [anchor=west, font=\footnotesize] {\text\vphantom{p|}};
  \draw [black!70, ->, ultra thin] (z.east) to (\x,\yb);
}
\end{tikzpicture}
\end{aligned}
\end{align*}
We have annotated this diagram with text to show how the different parts are to be interpreted, but we emphasize that this annotation is superfluous: everything about the flow of quantum and classical information is captured by the diagram itself. To complete the abstract definition of BB84 quantum key distribution, we require that this diagram is equal to a second diagram which encodes the intended result of the procedure.

The main contributions of this paper are as follows:
\begin{itemize}
\item Definitions~\ref{def:controlledcomplementarity},~\ref{BB84QKD},~\ref{E91QKD} and \ref{Def:Mean king problem scheme} give 2\-categorical equations whose solutions in \cat{2Hilb} correspond exactly to implementations of a family of complementary observables, BB84 QKD, E91 QKD, and solutions of the Mean King problem respectively.
\item In Theorem~\ref{thm:equivalent} we show that the 2\-categorical definition for a family of complementary measurements is equivalent to that for  QKD. While an equivalence between these notions seems generally expected in the community, we are not able to find an existing crisp proof in the literature.
\item In Theorem~\ref{lemma: MKP correctness} we give a graphical proof of correctness of Klappenecker and Rottleer's solution~\cite{Gothic} to the Mean King problem. This is roughly the same complexity as the original proof, but quite different in nature. The graphical proof makes clear the role played by complementarity.
\end{itemize}
A significant result on the categorical basis of quantum key distribution was given by Coecke and Perdrix in~\cite[Proposition~7.4]{coeckeperdrix}, which demonstrates the correctness of QKD based on a pair of complementary observables. Our work goes beyond this result, as we work with arbitrary families of complementary observables rather than a single pair, and we further show that every implementation of QKD gives rise to a family of complementary observables.

A primary avenue of future work arising from our results will be investigating the existence of nonstandard models. It has been shown that a category of groupoids, profunctors and spans admits combinatorial `toy models' of teleportation, as solutions to a 2\-categorical equation, from which ordinary quantum teleportation can be recovered by applying a 2\-functor into \cat{2Hilb}~\cite{BarVicary}. It will be interesting to use the results of this paper to investigate whether combinatorial toy models of quantum key distribution can also be built in that setting.

\paragraph{Remark on colour and transparency.} The diagrams in this paper make essential use of colour and transparency. We therefore recommend reading this paper on a screen, or as a colour printout. For printing we recommend Adobe Reader, as some other PDF viewers do not correctly handle transparency.

\subsection{The 2\-category \cat{2Hilb} of 2\-Hilbert spaces}

It has been argued in~\cite{Vicary:2012hqt} that the 2\-category of 2\-Hilbert spaces~\cite{b97-hda2} is the correct 2\-categorical setting in which to analyze quantum informatic procedures. We recall the following construction of \cat{2Hilb}, which is most useful for calculation. For more details, see the papers cited above.
\begin{definition}
The symmetric monoidal 2\-category \cat{2Hilb} has \textit{objects} given by natural numbers, \textit{1\-morphisms} given by matrices of finite-dimensional Hilbert spaces, and \textit{2\-morphisms} given by matrices of linear maps. Details of the compositional structure of \cat{2Hilb} are available in the references given.
\end{definition}

\noindent
This gives the formal categorical semantics that forms the primary model of our abstract syntax, introduced in Section~\ref{sec: The topological formalism}.

\subsection{The topological formalism}
\label{sec: The topological formalism}

The basic 2\-categorical structures on which the theory is built have simple graphical representations~\cite{Vicary:2012hqt}, thanks to the graphical notation for monoidal 2\-categories. This graphical formalism involves surfaces, lines and vertices. Their basic interpretation is as follows:
\begin{center}
\begin{tabular}{lll}
\bf Category theory & \bf Geometry & \bf Interpretation
\\
Objects & Surfaces & Classical information
\\
1\-Morphisms & Lines & Quantum systems
\\
2-Morphisms & Vertices & Physical operations
\end{tabular}
\end{center}
Composite diagrams involving many vertices are interpreted as a series of actions that place over time, with time flowing from bottom to top. In the graphical calculus, composition of 1\-morphisms is given by horizontal juxtaposition, and composition of 2\-morphisms by vertical juxtaposition. The tensor product is given by `overlaying' regions one above the other, perpendicular to the plane of the page and the tensor unit is expressed by an unlabelled, empty region.

We desire the ability to take the formal adjoint of 2\-cells, represented graphically by flipping a diagram about a horizontal axis.
\begin{definition}
A \textit{dagger 2\-category} is a 2\-category equipped with an involutive operation $\dag$ on 2\-cells, such that for all $\mu: F \Rightarrow G$ we have $\mu ^\dag: G \Rightarrow F$, which is functorial and compatible with the rest of the monoidal 2\-category structure.
\end{definition}
\begin{definition}
A 2\-cell $\mu$ is \emph{unitary} when $\mu \circ \mu ^\dag = \id$ and $\mu ^\dag \circ \mu = \id$.
\end{definition}

The core graphical theory makes use of only a small number of graphical components. They give the formal syntax for our theory. We summarize them here, along with their interpretations, which are motivated in detail in~\cite{Vicary:2012hqt}.
\allowdisplaybreaks
\def\tys{0.8}
\def\aascale{1.4}
\def\aaspace{\hspace{30pt}}
\setlength\fboxsep{0pt}
\def\sep{5pt}
\def\innersep{3pt}
\def\littlegap{12pt}
\def\boxmargin{0.15cm}
\newcommand{\centerdia}[1]{#1}
\newcommand\newtwocell[2]{\begin{aligned}
\begin{tikzpicture}[scale=\aascale, yscale=\tys]
    #1
    \draw [black!20]
        ([xshift=-\boxmargin, yshift=-\boxmargin] current bounding box.south west)
        rectangle
        ([xshift=\boxmargin, yshift=\boxmargin] current bounding box.north east);
\end{tikzpicture}
\end{aligned}
\hspace{10pt} \makebox[80pt][l]{\vc{#2}}}
\newcommand\separatetwocells{\\[\sep]}
\begin{calign}
\newtwocell{
    \draw [fill=white, draw=none] (0.2,-0.5)
        to (0.5,-0.5)
        to (1.8,-0.5)
        to (1.8,-1.5)
        to (0.2,-1.5)
        to (0.2,-0.5);
    \draw [thick] (1,-0.5)
        to (1,-1.5);
}
{Quantum system}
\hspace{\littlegap}&\hspace{\littlegap}
\newtwocell{
    \draw [fill=\fillClight, draw=none] (0.2,0.5)
        to (0.5,0.5)
        to (1.8,0.5)
        to (1.8,1.5)
        to (0.2,1.5)
        to (0.2,0.5);
}
{Classical system}
\separatetwocells
\label{eq:top1}
\newtwocell{
    \draw [white] (1.8,-0.5) rectangle (0.2,-1.5);
    \draw [fill=\fillClight, draw=none] (0.2,-0.5)
        to (0.5,-0.5)
        to (1,-0.5)
        to (1,-1.5)
        to (0.2,-1.5)
        to (0.2,-0.5);
    \draw [thick] (1,-0.5)
        to (1,-1.5);
}
{Right-hand boundary\\of classical system}
\hspace{\littlegap}&\hspace{\littlegap}
\newtwocell{
    \draw [white] (1.8,-0.5) rectangle (0.2,-1.5);
    \draw [fill=\fillClight, draw=none] (1.8,-0.5)
        to (1,-0.5)
        to (1,-1.5)
        to (1.8,-1.5)
        to (1.8,-0.5);
    \draw [thick] (1,-0.5)
        to (1,-1.5);
}
{Left-hand boundary\\of classical system}
\separatetwocells
\newtwocell{
    \draw [fill=\fillClight, draw=none] (0.2,-0.5)
        to (0.5,-0.5)
        to [out=down, in=down, looseness=1.5] (1.5, -0.5)
        to (1.8,-0.5)
        to (1.8,-1.5)
        to (0.2,-1.5)
        to (0.2,-0.5);
    \draw [thick] (0.5,-0.5)
        to [out=down, in=down, looseness=1.5] (1.5,-0.5);
}
{Copy classical\\information}
\hspace{\littlegap}&\hspace{\littlegap}
\newtwocell{
    \draw [fill=\fillClight, draw=none] (0.2,0.5)
        to (0.5,0.5)
        to [out=up, in=up, looseness=1.5] (1.5,0.5)
        to (1.8,0.5)
        to (1.8,1.5)
        to (0.2,1.5)
        to (0.2,0.5);
    \draw [thick] (0.5,0.5)
        to [out=up, in=up, looseness=1.5] (1.5,0.5);
}
{Compare classical\\information}
\separatetwocells
\label{eq:top3}
\newtwocell{
    \draw [white] (0.2,1) to (1.8,1);
    \draw [fill=\fillClight, thick] (0.5,1.5)
        to [out=down, in=down, looseness=1.5] (1.5,1.5);
    \draw [thick, white] (1,0.5) to (1,1);
}
{Create uniform\\classical information}
\hspace{\littlegap}&\hspace{\littlegap}
\newtwocell{
    \draw [white] (0.2,-1) to (1.8,-1);
    \draw [fill=\fillClight, thick] (0.5,-1.5)
        to [out=up, in=up, looseness=1.5] (1.5,-1.5);
    \draw [thick, white] (1,-0.5) to (1,-1);
}
{Delete classical\\information}
\end{calign}
These components are required to satisfy a set of axioms, which amount to saying that the boundary of a region is topological, and that holes can be eliminated:
\def\aascale{0.7}
\begin{calign}
\label{eq:topeq1}
\def\quad{\hspace{0.3cm}}
\begin{aligned}
\begin{tikzpicture}[scale=\aascale,xscale=0.8, yscale=\tys]
\draw [use as bounding box, draw=none] (-0.5,0) rectangle (2.3,2);
\draw [white] (-0.5,0) to (3.1,2);
\draw [fill=\fillClight, draw=none] (-0.5,0) to (0.3,0) to (0.3,1)
    to [out=up, in=up, looseness=2] (1.3,1)
    to [out=down, in=down, looseness=2] (2.3,1)
    to (2.3,2) to (-0.5,2);
\draw [thick] (0.3,0) to (0.3,1)
    to [out=up, in=up, looseness=2] (1.3,1)
    to [out=down, in=down, looseness=2] (2.3,1)
    to (2.3,2);
\end{tikzpicture}
\end{aligned}
\quad=\quad
\begin{aligned}
\begin{tikzpicture}[scale=\aascale,xscale=0.8, yscale=\tys]
\draw [use as bounding box, draw=none] (1,0) rectangle (0,2);
\draw [white] (0,0) to (2,2);
\draw [fill=\fillClight, draw=none] (0,0)
    to (1,0)
    to (1,2)
    to (0,2);
\draw [thick] (1,0) to (1,2);
\end{tikzpicture}
\end{aligned}
\quad=\quad
\begin{aligned}
\begin{tikzpicture}[scale=\aascale,xscale=0.8, yscale=\tys]
\draw [use as bounding box, draw=none] (-0.5,0) rectangle (2.3,-2);
\draw [white] (-0.5,0) to (3.1,-2);
\draw [fill=\fillClight, draw=none] (-0.5,0) to (0.3,0) to (0.3,-1)
    to [out=down, in=down, looseness=2] (1.3,-1)
    to [out=up, in=up, looseness=2] (2.3,-1)
    to (2.3,-2) to (-0.5,-2);
\draw [thick] (0.3,0) to (0.3,-1)
    to [out=down, in=down, looseness=2] (1.3,-1)
    to [out=up, in=up, looseness=2] (2.3,-1)
    to (2.3,-2);
\end{tikzpicture}
\end{aligned}
&&
\begin{aligned}
\begin{tikzpicture}[scale=\aascale,xscale=0.8, yscale=\tys]
\draw [use as bounding box, draw=none] (-2.3,0) rectangle (0.5,2);
\draw [white] (0.5,0) to (-3.1,2);
\draw [fill=\fillClight, draw=none] (0.5,0) to (-0.3,0) to (-0.3,1)
    to [out=up, in=up, looseness=2] (-1.3,1)
    to [out=down, in=down, looseness=2] (-2.3,1)
    to (-2.3,2) to (0.5,2);
\draw [thick] (-0.3,0) to (-0.3,1)
    to [out=up, in=up, looseness=2] (-1.3,1)
    to [out=down, in=down, looseness=2] (-2.3,1)
    to (-2.3,2);
\end{tikzpicture}
\end{aligned}
\quad=\quad
\begin{aligned}
\begin{tikzpicture}[scale=\aascale,xscale=0.8, yscale=\tys]
\draw [use as bounding box, draw=none] (-1,0) rectangle (0,2);
\draw [white] (0,0) to (-2,2);
\draw [fill=\fillClight, draw=none] (0,0)
    to (-1,0)
    to (-1,2)
    to (-0,2);
\draw [thick] (-1,0) to (-1,2);
\end{tikzpicture}
\end{aligned}
\quad=\quad
\begin{aligned}
\begin{tikzpicture}[scale=\aascale,xscale=0.8, yscale=\tys]
\draw [use as bounding box, draw=none] (-2.3,0) rectangle (0.5,-2);
\draw [white] (0.5,0) to (-3.1,-2);
\draw [fill=\fillClight, draw=none] (0.5,0) to (-0.3,0) to (-0.3,-1)
    to [out=down, in=down, looseness=2] (-1.3,-1)
    to [out=up, in=up, looseness=2] (-2.3,-1)
    to (-2.3,-2) to (0.5,-2);
\draw [thick] (-0.3,0) to (-0.3,-1)
    to [out=down, in=down, looseness=2] (-1.3,-1)
    to [out=up, in=up, looseness=2] (-2.3,-1)
    to (-2.3,-2);
\end{tikzpicture}
\end{aligned}
\\
\label{eq:topeq2}
\begin{aligned}
\begin{tikzpicture}[yscale=\tys]
\draw [fill=\fillClight, draw=none] (0.5,0.5) rectangle (2.5,2.5);
\draw [fill=white, thick] (1,1.5)
    to [out=up, in=up, looseness=2] (2,1.5)
    to [out=down, in=down, looseness=2] (1,1.5);
\end{tikzpicture}
\end{aligned}
\quad=\quad
\begin{aligned}
\begin{tikzpicture}[yscale=\tys]
\draw [fill=\fillClight, draw=none] (0.5,0.5) rectangle (2.5,2.5);
\end{tikzpicture}
\end{aligned}
&&
\begin{aligned}
\begin{tikzpicture}[scale=0.6, yscale=-0.85, yscale=\tys]
\draw [thin, red, fill=\fillClight, opacity=0.8] (1.5,1.27) to (1,1) to (1.5,0.73) to (2,1);
\draw [thin, red, fill=\fillC, opacity=1] (1.5,1.27) to (1,1) to (1.5,0.73) to (2,1);
\draw [fill=\fillClight, draw=none] (1,2) to [out=down, in=up, in looseness=1.3] (2.6,0) to (2,0) to [out=up, in=down, in looseness=1.3] (0.4,2)
    to (1,2)
    to [out=up, in=up, looseness=1.5] (2,2) to [out=down, in=up, in looseness=1.3] (0.4,0) to (1,0)
    to [out=up, in=down, in looseness=1.3] (2.6,2)
    to [out=up, in=down] (2,4)
    to (1,4)
    to [out=down, in=up] (0.4,2);
\draw [thick, opacity=1] (0.4,0)
    to [out=up, in=down, out looseness=1.3] (2,2)
    to [out=up, in=up, looseness=1.5] (1,2)
    to [out=down, in=up, in looseness=1.3] (2.6,0);
\draw [thick] (1,0)
    to [out=up, in=down, in looseness=1.3] (2.6,2)
    to [out=up, in=down] (2,4);
\draw [thick] (2,0)
    to [out=up, in=down, in looseness=1.3] (0.4,2)
    to [out=up, in=down] (1,4);
\end{tikzpicture}
\end{aligned}
\quad=\quad
\begin{aligned}
\begin{tikzpicture}[scale=0.6, yscale=-0.85, yscale=\tys]
\draw [fill=\fillClight, draw=none] (0.4,0)
    to [out=up, in=down, out looseness=1.5] (0.4,2)
    to (1,2)
    to [out=down, in=up, out looseness=1.5] (1,0);
\draw [fill=\fillClight, draw=none] (2.6,0)
    to [out=up, in=down, out looseness=1.5] (2.6,2)
    to (2,2)
    to [out=down, in=up, out looseness=1.5] (2,0);
\draw [fill=\fillClight, draw=none] (0.4,2)
    to (1,2)
    to [out=up, in=up, looseness=1.5] (2,2)
    to (2.6,2)
    to [out=up, in=down] (2,4)
    to (1,4) to [out=down, in=up] (0.4,2);
\draw [thick, opacity=1] (2.6,0)
    to [out=up, in=down, in looseness=1.5] (2.6,2)
    to [out=up, in=down] (2,4);
\draw [thick] (0.4,0)
    to [out=up, in=down, in looseness=1.5] (0.4,2)
    to [out=up, in=down] (1,4);
\draw [thick] (2,0) 
    to (2,2)
    to [out=up, in=up, looseness=1.5] (1,2)
    to (1,0);
\end{tikzpicture}
\end{aligned}
\end{calign}
All rotations and mirrored versions of the last of these axioms are also imposed. The net effect of these axioms is that any two connected networks of copying, comparison, creation and deletion operations, with the same number of inputs and the same number of outputs, will be equal. It follows that every such region carries the structure of a commutative dagger-Frobenius algebra in a canonical way. Note that the symmetric monoidal 2\-category structure is used crucially in the last equation here, allowing one region to pass above another.
\begin{definition}
In a symmetric monoidal 2\-category, an object has a \emph{topological boundary} if it is equipped with the data~\eqref{eq:top1}--\eqref{eq:top3} satisfying equations~\eqref{eq:topeq1}--\eqref{eq:topeq2}.
\end{definition}

\noindent
We will assume throughout that we are working with dagger 2\-categories whose objects are equipped with topological boundaries.

\subsection{Controlled operations}
In this paper, a key role will be played by the concept of a \textit{controlled family of measurements} which we define here in a new way. This has the following definition in the 2\-categorical formalism.
\begin{definition}
\label{def:controlledfamily}
A \textit{controlled family of measurements} is a unitary 2\-cell of the following type:
\begin{equation}
\label{eq:controlledexample}
\begin{aligned}
\begin{tikzpicture} [thick,scale=0.7, xscale=-1, yscale=\tys]
\node (A)[Vertex, scale=\vertexsize, vertex colour=white] at (0,3.25){};
\draw  [arrow={0.5}]  (0,2)  to (0,3.25);
\draw [fill=\fillcomp, fill opacity=0.8] (-0.35, 4.5)
to (-0.35, 3.65)
to [out=down, in=down, looseness=2] (0.35, 3.65)
to (0.35, 4.5);
\draw [fill=\fillC, fill opacity=0.8, draw=none](2, 2) to (1,2)
to (1, 2)
to (1, 2.75)
to [out=up, in=right](0,3.25)
to [out=right, in=down](1, 3.75)
to (1, 4.5)
to (2,4.5);
\draw (1, 2)
to (1, 2.75)
to [out=up, in=right](0,3.25)
to [out=right, in=down](1, 3.75)
to (1, 4.5);
\end{tikzpicture}
\end{aligned}
\end{equation}
\end{definition}

\noindent
The left-hand pool of classical information represents classical data that will determine the measurement basis, which we will always draw in blue. The line at the bottom-right of the diagram represents the quantum system to be measured. The upper-right pool of classical information represents the classical result of the measurement, which we will always draw in red.

We interpret these measurements as perfectly fine-grained (that is, non-degenerate), and projective. The motivation for the definition above is made clear by analyzing its models in \cat{2Hilb}.
\begin{lemma}
In \cat{2Hilb}, a controlled family of measurements corresponds precisely to a Hilbert space equipped with a list of orthonormal bases.
\end{lemma}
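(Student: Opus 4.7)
The plan is to unpack diagram~\eqref{eq:controlledexample} in \cat{2Hilb} and reduce unitarity to the orthonormal-basis condition.

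First, I would identify each region with its image in \cat{2Hilb}. An object equipped with a topological boundary corresponds to a natural number together with its canonical (standard-basis) commutative dagger-Frobenius algebra, which is the unique such structure available on that object. So I would take the blue (control) region to be a natural number $k$, the red (outcome) region to be a natural number $d$, and the quantum line to be a 1-morphism whose underlying datum is a single finite-dimensional Hilbert space $H$.

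Second, I would read off the source and target 1-morphisms of the 2-cell. The bottom boundary of~\eqref{eq:controlledexample} represents ``classical control $k$ together with quantum system $H$,'' which in \cat{2Hilb} assembles into a 1-morphism whose matrix entries, indexed by the control value $i\in\{1,\dots,k\}$, are each equal to $H$. The top boundary represents ``classical control $k$ together with classical outcome $d$,'' whose matrix entries indexed by $i$ are each equal to $\C^d$. A 2-cell between these is therefore equivalent to a family $(U_i:H\to\C^d)_{i=1}^k$ of linear maps, one for each control value.

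Third, I would translate unitarity. Because the source and target 1-morphisms share the same $k$-fold indexing by the control value, vertical composition in this hom-category acts entry-wise on $i$: the compositions $\mu\circ\mu^\dag$ and $\mu^\dag\circ\mu$ are slot-wise just $U_iU_i^\dag$ and $U_i^\dag U_i$ respectively. Hence $\mu$ is unitary if and only if each $U_i$ is a unitary isomorphism, which is equivalent to the vectors $\{U_i^\dag e_j\}_{j=1}^d\subset H$ forming an orthonormal basis of $H$ (and forces $d=\dim H$). So a controlled family of measurements is precisely a Hilbert space $H$ with a list of $k$ orthonormal bases; the converse direction, building a unitary 2-cell from any such list, is immediate.

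I expect the main work to be in the second step: carefully justifying, via the topological-boundary axioms~\eqref{eq:topeq1}--\eqref{eq:topeq2}, that tensoring with the classical control really does take the form of indexing over $i\in\{1,\dots,k\}$ and that the resulting 2-cell genuinely decouples as a column $(U_i)_i$ with no off-diagonal coupling between distinct control values. Once that decoupling is established, the remainder of the argument reduces to the standard linear-algebra fact that unitary maps $H\to\C^d$ correspond to orthonormal bases of $H$.
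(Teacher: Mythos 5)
Your proposal is correct and follows essentially the same route as the paper: unpack the 2\-cell in \cat{2Hilb} as a control-indexed family of linear maps into the red outcome object, identify unitarity of the 2\-cell with unitarity of each member, and use the canonical basis carried by the red region to convert those unitaries into a list of orthonormal bases of the incoming Hilbert space. The only difference is presentational -- you spell out the 1\-morphism bookkeeping (and the forcing of $d=\dim H$) that the paper dismisses as immediate with a citation to the 2\-categorical formalism.
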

\begin{proof}
Given a 2\-cell $\zeta$ of the type~\eqref{eq:controlledexample} in \cat{2Hilb}, write $n$ for the dimension of the blue object, and $m$ for the dimension of the red object. Then it is immediate that $\zeta$ constitutes a list of length $n$, whose entries are $m$-by-$m$ matrices~\cite{Vicary:2012hqt}. For $\zeta$ to be unitary means exactly that each $m$-by-$m$ matrix is unitary. So we have a list of $n$ unitary operators. However, the red region comes equipped with a canonical commutative dagger-Frobenius algebra structure, and hence a canonical orthonormal basis. Writing the unitaries in terms of this basis, it is clear that the data of $\zeta$ is canonically equivalent to a list of $n$ orthonormal bases for the incoming $m$-dimensional Hilbert space.
\end{proof}

\noindent
The unitarity property of~\eqref{eq:controlledexample} takes the following graphical form:
\begin{equation}
\label{eq:controlledmeasurementunitarity}
\begin{aligned}
\begin{tikzpicture} [thick,scale=0.7, xscale=-1, yscale=\tys]
\node (A)[Vertex, scale=\vertexsize, vertex colour=white] at (0,3.25){};
\node (A)[Vertex, scale=\vertexsize, vertex colour=white] at (0,4.75){};
\draw [thick, arrow={0.5}] (0,2.5) to  (0, 3.25);
\draw [thick, arrow={0.7}] (0, 4.75) to  (0,5.5);
\draw [fill=\fillcomp, fill opacity=0.8] (-0.35, 4.35)
to (-0.35, 3.65)
to [out=down, in=down, looseness=2] (0.35, 3.65)
to (0.35, 4.35)
to [out=up, in=up, looseness=2](-0.35, 4.35);
\draw [fill=\fillC, fill opacity=0.8, draw=none](2, 2.5) to (1,2.5)
to (1, 2.5)
to (1, 2.75)
to [out=up, in=right](0,3.25)
to [out=right, in=down](1, 3.75)
to (1, 4.25)
to [out=up, in=right](0, 4.75)
to [out=right, in=down](1, 5.25)
to (1, 5.5)
to (2,5.5);
\draw (1,2.5)
to (1, 2.5)
to (1, 2.75)
to [out=up, in=right](0,3.25)
to [out=right, in=down](1, 3.75)
to (1, 4.25)
to [out=up, in=right](0, 4.75)
to [out=right, in=down](1, 5.25)
to (1, 5.5);
\end{tikzpicture}
\end{aligned}
\quad=\quad
\begin{aligned}
\begin{tikzpicture} [thick,scale=0.7, xscale=-1, yscale=\tys]
\draw [thick] (0,2.5) to (0,5.5);
\draw [fill=\fillC, fill opacity=0.8, draw=none](2, 2.5) to (1,2.5)
to (1, 2.5)
to (1, 2.75)
to (1, 4.25)
to (1, 5.5)
to (2,5.5);
\draw (1,2.5)
to (1, 2.5)
to (1, 2.75)
to (1, 4.25)
to (1, 5.5);
\end{tikzpicture}
\end{aligned}
\qquad\qquad
\begin{aligned}
\begin{tikzpicture} [thick,scale=0.7, xscale=-1, yscale=\tys]
\node (A)[Vertex, scale=\vertexsize, vertex colour=white] at (0,3.4){};
\node (A)[Vertex, scale=\vertexsize, vertex colour=white] at (0,4.6){};
\draw [thick, arrow={0.6}] (0,3.4) to (0,4.6);
\draw [fill=\fillcomp, fill opacity=0.8] (-0.35, 5.5)
to (-0.35, 5)
to [out=down, in=down, looseness=2] (0.35, 5)
to (0.35, 5.5);
\draw [fill=\fillcomp, fill opacity=0.8] (0.35, 2.5)
to (0.35, 3)
to [out=up, in=up, looseness=2](-0.35, 3)
to (-0.35, 2.5);
\draw [fill=\fillC, fill opacity=0.8, draw=none](2, 2.5) to (1,2.5)
to (1, 2.5)
to (1, 2.9)
to [out=up, in=right](0,3.4)
to [out=right, in=down](1, 3.9)
to (1, 4.1)
to [out=up, in=right](0, 4.6)
to [out=right, in=down](1, 5.1)
to (1, 5.5)
to (2,5.5);
\draw (1,2.5)
to (1, 2.5)
to (1, 2.9)
to [out=up, in=right](0,3.4)
to [out=right, in=down](1, 3.9)
to (1, 4.1)
to [out=up, in=right](0, 4.6)
to [out=right, in=down](1, 5.1)
to (1, 5.5);
\end{tikzpicture}
\end{aligned}
\quad=\quad
\begin{aligned}
\begin{tikzpicture} [thick,scale=0.7, xscale=-1, yscale=\tys]
\draw [fill=\fillcomp, fill opacity=0.8, draw=none] (-0.35, 5.5)
to (-0.35, 2.5)
to (0.35, 2.5)
to (0.35, 5.5);
\draw (-0.35, 2.5) to (-0.35, 5.5);
\draw (0.35, 2.5) to (0.35, 5.5);
\draw [fill=\fillC, fill opacity=0.8, draw=none](2, 2.5) to (1,2.5)
to (1, 2.5)
to (1, 2.75)
to (1, 4.25)
to (1, 5.5)
to (2,5.5);
\draw (1,2.5)
to (1, 2.5)
to (1, 2.75)
to (1, 4.25)
to (1, 5.5);
\end{tikzpicture}
\end{aligned}
\end{equation}

\begin{definition}[Conjugate measurement bases]
Following the standard conventions, a controlled measurement with respect to the \textit{conjugate} set of bases is represented by mirroring the diagram about a vertical axis:
\begin{equation}
\label{eq:conjugatemeasurement}
\begin{aligned}
\begin{tikzpicture} [thick,scale=0.7, yscale=\tys]
\node (A)[Vertex, scale=\vertexsize, vertex colour=white] at (0,3.25){};
\draw  [reversed arrow={0.5}]  (0,2)  to (0,3.25);
\draw [fill=\fillcomp, fill opacity=0.8] (-0.35, 4.5)
to (-0.35, 3.65)
to [out=down, in=down, looseness=2] (0.35, 3.65)
to (0.35, 4.5);
\draw [fill=\fillC, fill opacity=0.8, draw=none](2, 2) to (1,2)
to (1, 2)
to (1, 2.75)
to [out=up, in=right](0,3.25)
to [out=right, in=down](1, 3.75)
to (1, 4.5)
to (2,4.5);
\draw (1, 2)
to (1, 2.75)
to [out=up, in=right](0,3.25)
to [out=right, in=down](1, 3.75)
to (1, 4.5);
\end{tikzpicture}
\end{aligned}
\quad:=\quad
\left(
\begin{aligned}
\begin{tikzpicture} [thick,scale=0.7, xscale=-1, yscale=\tys]
\node (A)[Vertex, scale=\vertexsize, vertex colour=white] at (0,3.25){};
\draw  [arrow={0.5}]  (0,2)  to (0,3.25);
\draw [fill=\fillcomp, fill opacity=0.8] (-0.35, 4.5)
to (-0.35, 3.65)
to [out=down, in=down, looseness=2] (0.35, 3.65)
to (0.35, 4.5);
\draw [fill=\fillC, fill opacity=0.8, draw=none](2, 2) to (1,2)
to (1, 2)
to (1, 2.75)
to [out=up, in=right](0,3.25)
to [out=right, in=down](1, 3.75)
to (1, 4.5)
to (2,4.5);
\draw (1, 2)
to (1, 2.75)
to [out=up, in=right](0,3.25)
to [out=right, in=down](1, 3.75)
to (1, 4.5);
\end{tikzpicture}
\end{aligned}
\right) _*
\quad\equiv\quad
\begin{aligned}
\begin{tikzpicture} [thick,scale=0.7, yscale=\tys]
\node (A)[Vertex, scale=\vertexsize, vertex colour=white] at (0,3.25){};
\draw [thick, arrow={0.7}] (0,3.25) to (0,3.5)
to [out=up, in=up, looseness=2](0.5, 3.5)
to  (0.5,3) to (0.5,2);
\draw [fill=\fillcomp, fill opacity=0.8] (-1.25, 4.5)
to (-1.25, 2.85)
to [out=down, in=down, looseness=1.1](-0.25, 2.85)
to [out=up, in=up, looseness=2] (0.25, 2.85)
to [out=down, in=down, looseness=1.1](-1.75, 2.85)
to (-1.75, 4.5);
\draw [fill=\fillC, fill opacity=0.8, draw=none](2, 2) to (1,2)
to (1, 2)
to (1, 3.5)
to [out=up, in=up, looseness=1.2](-0.5, 3.5)
to [out=down, in=left](0, 3.25)
to [out=left, in=up](-0.5, 3)
to [out=down, in=down, looseness=2](-1, 3)
to (-1, 4.5)
to (2,4.5);
\draw (1,2)
to (1, 2)
to (1, 3.5)
to [out=up, in=up, looseness=1.2](-0.5, 3.5)
to [out=down, in=left](0, 3.25)
to [out=left, in=up](-0.5, 3)
to [out=down, in=down, looseness=2](-1, 3)
to (-1, 4.5);
\end{tikzpicture}
\end{aligned}
\end{equation}
Here we decompose the conjugation operation into a composition of adjoint and transpose operations. The blue classical data controlling the choice of basis is now naturally on the right-hand side.
\end{definition}

However, we may want to change the side of the classical data controlling the choice of basis \textit{without} passing to the conjugate set of bases. To do this, we use the symmetric monoidal 2\-category structure to directly move the blue classical region to the other side.  In order to distinguish this from the conjugate controlled measurement~\eqref{eq:conjugatemeasurement}, we represent it as a black vertex.
\begin{definition}[Control from the other side]
\label{def:flipcontrol}
We use a black vertex to indicate control of the measurement and encoding vertices from the other side:
\begin{calign}
\begin{aligned}
\begin{tikzpicture} [thick,scale=0.7, xscale=-1, yscale=-1, yscale=\tys]
\node (A)[Vertex, scale=\vertexsize] at (0,3.25){};
\draw  [arrow={0.4}]  (0,2)  to (0,3.25);
\draw [fill=\fillcomp, fill opacity=0.8] (-0.35, 4.5)
to (-0.35, 3.65)
to [out=down, in=down, looseness=2] (0.35, 3.65)
to (0.35, 4.5);
\draw [fill=\fillC, fill opacity=0.8, draw=none](2, 2) to (1,2)
to (1, 2)
to (1, 2.75)
to [out=up, in=right](0,3.25)
to [out=right, in=down](1, 3.75)
to (1, 4.5)
to (2,4.5);
\draw (1, 2)
to (1, 2.75)
to [out=up, in=right](0,3.25)
to [out=right, in=down](1, 3.75)
to (1, 4.5);
\end{tikzpicture}
\end{aligned}
\quad:=\quad
\begin{aligned}
\begin{tikzpicture} [thick,scale=0.7, scale=-1, yscale=\tys]
\node (A)[Vertex, scale=\vertexsize, vertex colour=white] at (0,3.25){};
\draw [fill=\fillC, fill opacity=0.8, draw=none](2, 2)to (1,2)
to (1, 2)
to [out=up, in=right](0,2.5)
to [out=left, in=left, looseness=2](0, 4)
to [out=right, in=down](1, 4.5)
to (2,4.5);
\draw [fill=\fillC, fill opacity=0.8, draw](0,2.5)
to [out=left, in=left, looseness=3](0,3.25)
to [out=left, in=left, looseness=3](0, 4)
to [out=left, in=left, looseness=2](0, 2.5);
\draw (1, 2)
to [out=up, in=right](0,2.5)
to [out=left, in=left, looseness=3](0,3.25)
to [out=left, in=left, looseness=3](0, 4)
to [out=right, in=down](1, 4.5);
\draw  [arrow={0.4}]  (0,2)  to (0,3.25);
\draw [fill=\fillcomp, fill opacity=0.8] (-0.35, 4.5)
to (-0.35, 3.65)
to [out=down, in=down, looseness=2](0.35, 3.65)
to (0.35, 4.5);
\end{tikzpicture}
\end{aligned}
&
\begin{aligned}
\begin{tikzpicture} [thick,scale=0.7, yscale=\tys]
\node (A)[Vertex, scale=\vertexsize] at (0,3.25){};
\draw  [arrow={0.4}]  (0,2)  to (0,3.25);
\draw [fill=\fillcomp, fill opacity=0.8] (-0.35, 4.5)
to (-0.35, 3.65)
to [out=down, in=down, looseness=2] (0.35, 3.65)
to (0.35, 4.5);
\draw [fill=\fillC, fill opacity=0.8, draw=none](2, 2) to (1,2)
to (1, 2)
to (1, 2.75)
to [out=up, in=right](0,3.25)
to [out=right, in=down](1, 3.75)
to (1, 4.5)
to (2,4.5);
\draw (1, 2)
to (1, 2.75)
to [out=up, in=right](0,3.25)
to [out=right, in=down](1, 3.75)
to (1, 4.5);
\end{tikzpicture}
\end{aligned}
\quad:=\quad
\begin{aligned}
\begin{tikzpicture} [thick,scale=0.7, yscale=\tys]
\node (A)[Vertex, scale=\vertexsize, vertex colour=white] at (0,3.25){};
\draw [fill=\fillC, fill opacity=0.8, draw=none](2, 2)to (1,2)
to (1, 2)
to [out=up, in=right](0,2.5)
to [out=left, in=left, looseness=2](0, 4)
to [out=right, in=down](1, 4.5)
to (2,4.5);
\draw [fill=\fillC, fill opacity=0.8, draw](0,2.5)
to [out=left, in=left, looseness=3](0,3.25)
to [out=left, in=left, looseness=3](0, 4)
to [out=left, in=left, looseness=2](0, 2.5);
\draw (1, 2)
to [out=up, in=right](0,2.5)
to [out=left, in=left, looseness=3](0,3.25)
to [out=left, in=left, looseness=3](0, 4)
to [out=right, in=down](1, 4.5);
\draw  [arrow={0.4}]  (0,2)  to (0,3.25);
\draw [fill=\fillcomp, fill opacity=0.8] (-0.35, 4.5)
to (-0.35, 3.65)
to [out=down, in=down, looseness=2](0.35, 3.65)
to (0.35, 4.5);
\end{tikzpicture}
\end{aligned}
\end{calign}
\end{definition}

\noindent
Arrows indicating dual objects will generally be omitted for simplicity.

\subsection{Projectors}

We will often need to constrain the value held by a pool of classical data, which we do with projectors of different kinds.

\begin{definition}
Given an object $\cat C \equiv \cat{Hilb}^n$ in \cat{2Hilb}, a \textit{classical data projector} is an element of the canonical $n$-element basis for the vector space $\mathrm{Hom} _{\cat{2Hilb}} (\id _\cat C, \id _\cat C)$.
\end{definition}

\noindent
These projectors act to constrain the classical data stored in a region to a particular value. We write them as floating labels that decorate our regions. The following lemma establishes some of their key properties.
\begin{lemma}[Properties of classical data projectors]
\label{lem:projectorproperties}
For diagrams in \cat{2Hilb}, we can use classical data projectors to decompose the identity, and two adjacent projectors annihilate unless they are identical:
\begin{calign}
\begin{aligned}
\begin{tikzpicture}[scale=0.8, yscale=\tys]
\draw [draw=none, fill=\fillC, fill opacity=0.8] (0,0) rectangle (2,2);
\end{tikzpicture}
\end{aligned}
\quad=\quad
\sum _{a=1} ^n \,
\begin{aligned}
\begin{tikzpicture}[scale=0.8, yscale=\tys]
\draw [draw=none, fill=\fillC, fill opacity=0.8] (0,0) rectangle (2,2);
\node at (1,1) {$a$};
\end{tikzpicture}
\end{aligned}
&
\begin{aligned}
\begin{tikzpicture}[scale=0.8, yscale=\tys]
\draw [draw=none, fill=\fillC, fill opacity=0.8] (0,0) rectangle (2,2);
\node at (1,1) {$a\hspace{0.4cm}b$};
\end{tikzpicture}
\end{aligned}
\quad=\quad
\delta _{a,b} \,
\begin{aligned}
\begin{tikzpicture}[scale=0.8, yscale=\tys]
\draw [draw=none, fill=\fillC, fill opacity=0.8] (0,0) rectangle (2,2);
\node at (1,1) {$a$};
\end{tikzpicture}
\end{aligned}
\end{calign}
The projectors can move freely around within regions, much like scalars in the theory of monoidal categories. Furthermore, labelled regions can be connected and disconnected arbitrarily:
\begin{calign}
\begin{aligned}
\begin{tikzpicture}[thick, scale=0.8, yscale=\tys]
\draw [draw=none, fill=\fillC, fill opacity=0.8] (0,0) to (0,2) to (0.5,2) to [out=down, in=down, looseness=1.5] (1.5,2) to (2,2) to (2,0) to (1.5,0) to [out=up, in=up, looseness=1.5] (0.5,0) to (0,0);
\draw [fill=white, fill opacity=1] (0.5,2) to [out=down, in=down, looseness=1.5] (1.5,2);
\draw [fill=white, fill opacity=1] (0.5,0) to [out=up, in=up, looseness=1.5] (1.5,0);
\node at (1,1) {$a$};
\end{tikzpicture}
\end{aligned}
\quad=\quad
\begin{aligned}
\begin{tikzpicture}[thick , scale=0.8, yscale=\tys]
\draw [draw=none, fill=\fillC, fill opacity=0.8] (0,0) rectangle (2,2);
\draw [draw=none, fill=white, fill opacity=1] (0.5,2) rectangle (1.5,0);
\draw (0.5,2) to (0.5,0);
\draw (1.5,2) to (1.5,0);
\node at (1.75,1) {$a$};
\node at (0.25,1) {$a$};
\end{tikzpicture}
\end{aligned}
&
\begin{aligned}
\begin{tikzpicture}[thick, scale=0.8, yscale=\tys]
\draw [draw=none, fill=\fillC, fill opacity=0.8] (0,0) rectangle (1,2);
\node at (0.5,1) {$a$};
\draw (0,0) to (0,2);
\draw (1,0) to (1,2);
\end{tikzpicture}
\end{aligned}
\quad=\quad
\begin{aligned}
\begin{tikzpicture}[thick, scale=0.8, yscale=\tys]
\draw [fill=\fillC, fill opacity=0.8] (0,0) to [out=up, in=up, looseness=2] (1,0);
\draw [fill=\fillC, fill opacity=0.8] (0,2) to [out=down, in=down, looseness=2] (1,2);
\node at (0.5,0.25) {$a$};
\node at (0.5,1.75) {$a$};
\end{tikzpicture}
\end{aligned}
\end{calign}
\end{lemma}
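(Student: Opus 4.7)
The plan is to translate every graphical identity into a statement about the finite-dimensional commutative algebra $A := \mathrm{Hom}_{\cat{2Hilb}}(\id_\cat{C}, \id_\cat{C})$. Since $\cat C \simeq \cat{Hilb}^n$, natural transformations of the identity 1\-morphism are parametrized by $n$\-tuples of complex scalars, so $A \cong \C^n$ with canonical basis $\{e_1, \ldots, e_n\}$. By definition these $e_a$ are exactly the classical data projectors appearing as labels in the diagrams.

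The first step is to identify the graphical copy, compare, create, and delete components with the structure maps of the standard commutative special dagger-Frobenius algebra on $\C^n$: comultiplication $\Delta(e_a) = e_a \otimes e_a$, multiplication $e_a \cdot e_b = \delta_{ab}\, e_a$, unit $1 \mapsto \sum_a e_a$, and counit $e_a \mapsto 1$. This identification follows from the topological axioms~\eqref{eq:topeq1}--\eqref{eq:topeq2}, which force each region to carry a commutative dagger-Frobenius algebra structure; in \cat{2Hilb} this structure is pinned down to the pointwise one on $\C^n$, whose classical points are precisely the canonical basis.

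With this algebraic dictionary in place, each of the four equations becomes routine. The identity-decomposition is the partition of unity $1 = \sum_a e_a$ read off $\C^n$. The annihilation identity $e_a e_b = \delta_{ab} e_a$ is the defining multiplication on the canonical basis. The handle/splitting identity expresses copyability: the left-hand region surrounds a hole whose boundary implements $\mu \circ \Delta$ acting on the label, and since $\Delta(e_a) = e_a \otimes e_a$ the $a$\-label distributes across the two strands to yield the disconnected right-hand side with both components labelled $a$. The final connect/disconnect identity uses $\epsilon(e_a) = 1$: each create/delete bubble carrying label $a$ contributes the scalar $1$, so any configuration of matching-labelled bubbles is interchangeable with the unadorned labelled region.

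The main obstacle is step one, namely confirming that the topological Frobenius structure in \cat{2Hilb} coincides with the pointwise algebra on $\C^n$ equipped with its standard basis. This rests on the classification of commutative dagger-Frobenius algebras in $\cat{Hilb}$ as algebras of functions on finite sets, which fixes the basis canonically. Once that identification is settled, the remaining verifications reduce to elementary bookkeeping in $\C^n$.
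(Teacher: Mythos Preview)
Your proposal is correct and is precisely the direct unpacking the paper has in mind when it writes ``straightforward, but omitted for reasons of space'': identify $\mathrm{Hom}_{\cat{2Hilb}}(\id_\cat{C},\id_\cat{C})\cong\C^n$, recognise the projectors as the canonical basis idempotents, and read off each identity from the pointwise Frobenius structure. One minor slip: the connected region in the third identity implements $\Delta \circ e_a \circ \mu$ (compare, then project, then copy, reading bottom to top) rather than $\mu \circ \Delta$, but your appeal to copyability $\Delta(e_a)=e_a\otimes e_a$ yields the correct conclusion regardless.
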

\begin{proof}
Straightforward, but omitted for reasons of space.
\end{proof}

We can also define a different type of projector, which constrains the values of two separate regions of classical data to be the same, or to be different. We will always apply these projectors to regions that are coloured blue in our notation. There will always be exactly 2 blue regions in every diagram where we use the projectors, so it will be unambiguous to which regions they `attach'.
\begin{definition}[Same-value and different-value projectors]
In a symmetric monoidal 2\-category whose hom-categories are \cat{Ab}-enriched, for an object with topological boundary, the \textit{same-value projector} $\Ps$ and \textit{different-value projector} $\Pd$ are defined as follows:
\begin{align}
\Ps \quad&:=\quad \begin{aligned}
\begin{tikzpicture}
 \draw [fill=\fillC, fill opacity=0.8, draw=none] (0.2,0.6)
        to(0.5,0.6)
        to [out=down, in=down, looseness=1.5](1.5,0.6)
        to(1.8,0.6)
        to(1.8, -0.6)
        to(1.5, -0.6)
        to [out=up, in=up, looseness=1.5](0.5, -0.6)
        to(0.2, -0.6);
    \draw [thick] (0.5,0.6)
        to [out=down, in=down, looseness=1.5] (1.5,0.6);    
    \draw [thick] (0.5,-0.6)
        to [out=up, in=up, looseness=1.5] (1.5, -0.6);
\end{tikzpicture}
\end{aligned}
\\
\Pd \quad&:=\quad \begin{aligned}
\begin{tikzpicture}
    \draw [fill=\fillC, fill opacity=0.8, draw=none] (1.5,0.6)
        to(1.8,0.6)
        to(1.8, -0.6)
        to(1.5, -0.6);
    \draw [fill=\fillC, fill opacity=0.8, draw=none] (0.2,0.6)
        to(0.5,0.6)
        to(0.5, -0.6)
        to(0.2, -0.6);  
    \draw [thick] (0.5,0.6)
        to (0.5,-0.6);     
    \draw [thick] (1.5,0.6)
        to (1.5, -0.6);
\end{tikzpicture}
\end{aligned}
\quad-\quad
\begin{aligned}
\begin{tikzpicture}
 \draw [fill=\fillC, fill opacity=0.8, draw=none] (0.2, 0.6)
        to(0.5, 0.6)
        to [out=down, in=down, looseness=1.5](1.5, 0.6)
        to(1.8, 0.6)
        to(1.8, -0.6)
        to(1.5, -0.6)
        to [out=up, in=up, looseness=1.5](0.5, -0.6)
        to(0.2, -0.6);
    \draw [thick] (0.5, 0.6)
        to [out=down, in=down, looseness=1.5] (1.5, 0.6);
    \draw [thick] (0.5,-0.6)
        to [out=up, in=up, looseness=1.5] (1.5, -0.6);
\end{tikzpicture}
\end{aligned}
\end{align}
\end{definition}

\noindent
The main 2\-category we are concerned with is \cat{2Hilb}, in which hom-categories are indeed $\cat{Ab}$-enriched.
\begin{lemma}
The projectors $\Ps$ and $\Pd$ satisfy $\Ps ^2 = \Ps$, $\Pd ^2 = \Pd$, $\Ps \circ \Pd = \Pd \circ \Ps = 0$ and $\Ps + \Pd = \id$.
\end{lemma}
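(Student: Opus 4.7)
My plan is to prove the single nontrivial identity $\Ps^2 = \Ps$ graphically using the topological boundary axioms, and then deduce the remaining three identities algebraically from $\Ps^2 = \Ps$ together with the definition $\Pd := \id - \Ps$.

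For $\Ps^2 = \Ps$, I would stack two copies of the $\Ps$ diagram vertically, identifying the two blue output strands of the lower copy with the two blue input strands of the upper copy. In the resulting composite picture, the downward-bulging arc at the top of the lower $\Ps$ and the upward-bulging arc at the bottom of the upper $\Ps$ together enclose a lens-shaped white region sitting inside the surrounding blue region. This is precisely the configuration on the left-hand side of the first equality in~\eqref{eq:topeq2}, which asserts that a white bubble inside a blue region can be removed without changing the 2-cell. Applying this axiom collapses the stacked diagram into a single copy of $\Ps$. (Algebraically this is the content of the specialness relation $\mu \circ \Delta = \id$ of the canonical commutative dagger-Frobenius algebra on the blue region: writing $\Ps = \Delta \circ \mu$, one has $\Ps^2 = \Delta \circ (\mu \circ \Delta) \circ \mu = \Delta \circ \mu = \Ps$.)

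Once $\Ps^2 = \Ps$ is in hand, the remaining three identities are immediate. The equation $\Ps + \Pd = \id$ holds by the very definition of $\Pd$. Orthogonality follows from $\Ps \circ \Pd = \Ps \circ (\id - \Ps) = \Ps - \Ps^2 = 0$, and symmetrically $\Pd \circ \Ps = 0$. Finally, $\Pd^2 = (\id - \Ps)^2 = \id - 2\Ps + \Ps^2 = \id - \Ps = \Pd$, using $\Ps^2 = \Ps$ in the last step.

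The only real point of care is verifying that the stacked arcs in $\Ps \circ \Ps$, possibly after a small topological deformation permitted by~\eqref{eq:topeq1}, genuinely bound a single enclosed white lens matching the LHS of the bubble-elimination axiom rather than some more exotic configuration. Once the composite diagram is put into this standard form, the rest of the argument is bookkeeping.
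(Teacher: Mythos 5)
Your proposal is correct, and since the paper omits its own argument ("straightforward graphical proof"), yours is exactly the intended one: the stacked copies of $\Ps$ produce a closed white bubble inside the blue region, which the first equation of~\eqref{eq:topeq2} eliminates (equivalently, compare${}\circ{}$copy${}={}\id$, the specialness you mention), giving $\Ps^2=\Ps$; the remaining identities then follow purely algebraically from $\Pd:=\id-\Ps$ and the assumed \cat{Ab}-enrichment with composition distributing over sums. The only cosmetic caveat is that the canonical commutative dagger-Frobenius algebra lives on the horizontal composite of the two boundary 1\-cells rather than literally "on the blue region", but this does not affect the argument.
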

\begin{proof}
Straightforward graphical proof, omitted for reasons of space.
\end{proof}

\noindent
We will assume throughout that we are working in an \cat{Ab}-enriched 2\-category, and so these projectors are well-defined. The tensor product, vertical and horizontal composition in the 2\-category all distribute over the additive structure introduced by these projectors. Distributivity of 2\-cell composition with respect to addition is illustrated by the following. Note, that we can apply the projectors $\Ps$ , $\Pd$ whenever the appropriate regions have any open boundary:
\begin{equation}
\Pd\,
\begin{aligned}
\begin{tikzpicture}[scale=0.4, thick]
\draw [fill=\fillC, fill opacity=0.8, draw=none] (0, 0) to (0,3)
to (1,3)
to (1,0) ;
\draw [fill=\fillC, fill opacity=0.8, draw=none] (2, 0) to (2,3)
to (3,3)
to (3,0) ;
\draw (0,0) to (0,3);
\draw (1,0) to (1,3);
\draw (2,0) to (2,3);
\draw (3,0) to (3,3);
\end{tikzpicture}
\end{aligned}
\quad=\quad
\left(
 \begin{aligned}
\begin{tikzpicture}
    \draw [fill=\fillC, fill opacity=0.8, draw=none] (1.5,0.6)
        to(1.8,0.6)
        to(1.8, -0.6)
        to(1.5, -0.6);
    \draw [fill=\fillC, fill opacity=0.8, draw=none] (0.2,0.6)
        to(0.5,0.6)
        to(0.5, -0.6)
        to(0.2, -0.6);  
    \draw [thick] (0.5,0.6)
        to (0.5,-0.6);     
    \draw [thick] (1.5,0.6)
        to (1.5, -0.6);
\end{tikzpicture}
\end{aligned}
\quad-\quad
\begin{aligned}
\begin{tikzpicture}
 \draw [fill=\fillC, fill opacity=0.8, draw=none] (0.2, 0.6)
        to(0.5, 0.6)
        to [out=down, in=down, looseness=1.5](1.5, 0.6)
        to(1.8, 0.6)
        to(1.8, -0.6)
        to(1.5, -0.6)
        to [out=up, in=up, looseness=1.5](0.5, -0.6)
        to(0.2, -0.6);
    \draw [thick] (0.5, 0.6)
        to [out=down, in=down, looseness=1.5] (1.5, 0.6);
    \draw [thick] (0.5,-0.6)
        to [out=up, in=up, looseness=1.5] (1.5, -0.6);
\end{tikzpicture}
\end{aligned}
\right)
\,\,\circ\,\,
\begin{aligned}
\begin{tikzpicture}[scale=0.4, thick]
\draw [fill=\fillC, fill opacity=0.8,draw=none] (0, 0) to (0,3)
to (1,3)
to (1,0) ;
\draw [fill=\fillC, fill opacity=0.8,draw=none] (2, 0) to (2,3)
to (3,3)
to (3,0) ;
\draw (0,0) to (0,3);
\draw (1,0) to (1,3);
\draw (2,0) to (2,3);
\draw (3,0) to (3,3);
\end{tikzpicture}
\end{aligned}
\quad=\quad
\begin{aligned}
\begin{tikzpicture}[scale=0.4, thick]
\draw [fill=\fillC, fill opacity=0.8, draw=none] (0, 0) to (0,3)
to (1,3)
to (1,0) ;
\draw [fill=\fillC, fill opacity=0.8, draw=none] (2, 0) to (2,3)
to (3,3)
to (3,0) ;
\draw (0,0) to (0,3);
\draw (1,0) to (1,3);
\draw (2,0) to (2,3);
\draw (3,0) to (3,3);
\end{tikzpicture}
\end{aligned}
\,\,-\,\,
\begin{aligned}
\begin{tikzpicture}[scale=0.4, thick]
\draw [fill=\fillC, fill opacity=0.8, draw=none] (1,0)
to (1,2)
to [out=up, in=up](2,2) 
to (2,0)
to (3,0)
to (3,3)
to (2,3)
to [out=down, in=down](1,3)
to (0,3)
to (0,0);
\draw (0,0) to (0,3);
\draw (1,0) to (1,2)
to [out=up, in=up](2,2) to (2,0);
\draw (3,0) to (3,3);
\draw (1,3) [out=down, in=down] to (2,3);
\end{tikzpicture}
\end{aligned}
\end{equation}

\subsection{Attaching controlled phases}

\begin{definition}
A \textit{controlled phase} $\phi$ is an unitary endomorphism of a family of boundaries:
\begin{equation}
\begin{aligned}
\begin{tikzpicture}[scale=0.35,thick,yscale=0.8]
\draw [fill=\fillcomp, draw=none, fill opacity=0.8] (-3,-2.75)
to (-1,-2.75)
to (-1,3.25)
to (-3, 3.25);
\draw (-1,3.25) to (-1, -2.75);
\draw [fill=\fillcomp, draw=none, fill opacity=0.8] (3,-2.75)
to (1,-2.75)
to (1,3.25)
to (3, 3.25);
\draw (1,3.25) to (1, -2.75);
\draw [fill=\fillC, draw=none, fill opacity=0.8] (4,-3)
to (2,-3)
to (2,3)
to (4, 3);
\draw (2,3) to (2, -3);
\draw [fill=\fillC, draw=none, fill opacity=0.8] (-4,-3)
to (-2,-3)
to (-2,3)
to (-4, 3);
\draw (-2,3) to (-2, -3);
\node (p)[minimum width=65pt, minimum height=19pt,draw, fill=white, fill opacity=1, scale=0.8] at (0, 0) {$\phi$};
\node (c) [Vertex, scale=0.5, vertex colour=white] at ([xshift=-28.5pt, yshift=1pt]p.south) {};
\node (c) [Vertex, scale=0.5, vertex colour=white] at ([xshift=28.5pt, yshift=1pt]p.south) {};
\node (c) [Vertex, scale=0.5, vertex colour=white] at ([xshift=57pt, yshift=1pt]p.south){};
\node (c) [Vertex, scale=0.5, vertex colour=white] at ([xshift=-57pt, yshift=1pt]p.south) {};
\end{tikzpicture}
\end{aligned}
\end{equation}
The white nodes decorating the 2\-cell $\phi$ indicate the boundaries to which it attaches.
\end{definition}

In \cat{2Hilb}, such a structure gives a controlled phase in the ordinary sense: a family of unit complex numbers, indexed by the values of the classical information of the regions to which the phase is connected. The result of such a controlled phase is to render the overall wavefunction of the system entangled, without introducing any classical statistical correlation between local measurement results~\cite{Vicary:2012hqt}.

\section{Complementary families of measurements}

In Definition~\ref{def:controlledfamily} we introduced a 2\-categorical axiomatization of a controlled family of measurements. In this Section, we add the extra requirement that any two distinct measurements in the family are \textit{complementary}. In this case, we say that we have a \textit{complementary family} of measurements. These play an essential role in quantum key distribution and the Mean King problem, which we study in Sections~\ref{sec:qkd} and~\ref{sec:meanking}.

For a single pair of nondegenerate measurements to be complementary is a standard condition in quantum information, sometimes also known as \textit{unbiasedness}.
\begin{definition}
Two bases $\{\ket {a_i}\}$, $\{\ket{b_j}\}$ of a finite-dimensional Hilbert space $H$ are \textit{complementary}, or \textit{unbiased}, when for all $i$, $j$ we have ${\lvert\langle a_i|b_j\rangle\rvert}^2=\text{dim}(H) ^{-1}$.
\end{definition}

\noindent
A first characterization of this property in terms of monoidal categories was given by Coecke and Duncan~\cite{Duncan:2009}, and a 2\-categorical definition was given in~\cite{Vicary:2012hqt}.

\subsection{Basic definition}

\begin{definition}[Complementary family]
\label{def:controlledcomplementarity}
A \textit{complementary family of measurements}, or simply a \emph{complementary family}, is an ordinary family of measurements as given in Definition~\ref{def:controlledfamily}, such that there exists some unitary 2\-cell $\phi$ satisfying the following equation:
\begin{equation}
\label{eq:complementaryfamily}
\begin{aligned}
\begin{tikzpicture}[scale=0.4, thick]
\node (a) [Vertex, scale=\vertexsize, vertex colour=white]at (0.5,-0.25) {};
\node (b) [Vertex, scale=\vertexsize, vertex colour=white]at (1.5, 2.) {};
\node (c) [Vertex, scale=\vertexsize] at (1.5, 3.00) {};
\draw [fill=\fillcomp, fill opacity=0.8, draw=none] (-1,4)
    to (-1,1.75)
    to [out=down, in=\nwangle, out looseness=1.5] (a.center)
    to [out=\neangle, in=down, in looseness=1.5] (2,1.5)
    to [out=up, in=\seangle] (b.center)
    to [out=\swangle, in=up] +(-0.5,-0.5)
    to [out=down, in=down, looseness=2] +(-1,0)
    to (0.0,4);
\draw (-1,4)
    to (-1,1.75)
    to [out=down, in=\nwangle, out looseness=1.5] (a.center)
    to [out=\neangle, in=down, in looseness=1.5] (2,1.5)
    to [out=up, in=\seangle] (b.center)
    to [out=\swangle, in=up] +(-0.5,-0.5)
    to [out=down, in=down, looseness=2] +(-1,0)
    to (0.0,4);
\draw (b.center)
    to [out=up, in=down] (c.center);  
\draw [fill=\fillcomp, fill opacity=0.8, draw=none] (1,4) to (1,3.5)
    to [out=down, in=\nwangle] (c.center)
    to [out=\neangle, in=down] +(0.5,0.5)
    to (2,4);
\draw (1,4) to (1,3.5)
    to [out=down, in=\nwangle] (c.center)
    to [out=\neangle, in=down] +(0.5,0.5)
    to (2,4);
\draw [fill=\fillC, fill opacity=0.8, draw=none] (-3,4.0)
        to (-2.0,4.0)
        to (-2, 3)
        to [out=down, in=left] (-1,2)
        to [out=left, in=up] (-2, 1)
        to (-2.0, 0.25)
        to [out=down, in=left] (-1, -0.25)
        to [out=left, in=up] (-2.0, -0.75)
        to (-3,-0.75);        
\draw(-2.0,4.0)
        to (-2, 3)
        to [out=down, in=left] (-1,2)
        to [out=left, in=up] (-2, 1)
        to (-2.0, 0.25)
        to [out=down, in=left] (-1, -0.25)
        to [out=left, in=up] (-2.0, -0.75);         
\draw [fill=\fillC, fill opacity=0.8, draw=none] (4, 4.0)
    to  (3.0, 4.0)
    to [out=down, in=right] (c.center)
    to [out=right, in=up] (3.0, 2.0)
    to (3.0, -0.75)
    to (4, -0.75);
\draw(3.0, 4.0)
    to [out=down, in=right] (c.center)
    to [out=right, in=up] (3.0, 2.0)
    to (3.0, -0.75);
\draw(-1,-0.25) to (0.5, -0.25);
\draw(-1,2) to (1.5, 2);
\draw (0.5,-0.75 -| a.center) to (a.center);
\node at (-4,1.5) {$\textcolor{blue}{\Pd}$};
\foreach \x/\ya/\yb/\text in
  {0.21/0/-.18/Measure in left basis,
   0.5/1/0.9/Copy result,
   1.2/2/1.9/Encode in left basis,
   1.2/3/3/Measure in right basis}
{
  \node (z) at (-12,\ya) [anchor=west, font=\scriptsize] {\text\vphantom{p|}};
  \draw [black!70, ->, ultra thin] (z.east) to (\x,\yb);
}
\end{tikzpicture}
\end{aligned}
\quad\,\,=\quad
\frac{\Pd}{n}\,
\begin{aligned}
\begin{tikzpicture}[scale=0.4, thick]
\node (a) [Vertex, scale=\vertexsize, vertex colour=white] at (0.0, 0.1) {};
\draw [fill=\fillcomp, fill opacity=0.8, draw=none] (-0.5,3.5)
    to (-0.5,0.5)
    to [out=down,  in=left] (a.center) 
    to [out=right,  in=down](0.5,0.5)
    to (0.5,3.5);
\draw (-0.5,3.5)
    to (-0.5,0.5)
    to [out=down,  in=left] (a.center) 
    to [out=right,  in=down](0.5,0.5)
    to (0.5,3.5);
\draw [fill=\fillcomp, fill opacity=0.8, draw=none] (1.5,3.5)
    to (1.5,0.5)
    to [out=down,  in=left] (2,0)
    to [out=right, in=down](2.5,0.5)
    to (2.5,3.5);    
\draw (1.5,3.5)
    to (1.5,0.5)
    to [out=down,  in=left] (2,0)
    to [out=right, in=down](2.5,0.5)
    to (2.5,3.5);
\draw (0.0,-1.0) to (a.center);
\draw [fill=\fillC, fill opacity=0.8, draw=none] (-2.5,3.5)
to (-1.5,3.5)
    to (-1.5,1.5)
    to [out=down,  in=left] (-0.5,0)
    to [out=left, in=up] (-1.5, -1)
    to (-2.5, -1);
\draw (-1.5,3.5)
    to (-1.5,1.5)
    to [out=down,  in=left] (-0.5,0)
    to [out=left, in=up] (-1.5, -1);
\draw [fill=\fillC, fill opacity=0.8, draw=none] (4.5,3.5)
    to (3.5,3.5)
    to (3.5,-1)
    to (4.5, -1);
\draw (3.5,3.5)to (3.5,-1);
\draw (-0.5, 0) to (0,0);
\node (p)[minimum width=65pt, draw, fill=white, fill opacity=1] at (1, 2) {$\phi$};
\node [Vertex, vertex colour=white, scale=0.5] at ([xshift=-71pt, yshift=1pt]p.south){};   
\node [Vertex, vertex colour=white, scale=0.5] at ([xshift=-14.5pt, yshift=1pt]p.south){};   
\node [Vertex, vertex colour=white, scale=0.5] at ([xshift=14.5pt, yshift=1pt]p.south){};   
\node [Vertex, vertex colour=white, scale=0.5] at ([xshift=71pt, yshift=1pt]p.south){};   
\foreach \x/\ya/\yb/\text in
  {2/0.25/0/Create random data,
   0.2/1.25/0/Measure in left basis,
   1.5/2.25/2.0/Controlled phase}
{
  \node (z) at (6,\ya) [anchor=west, font=\scriptsize] {\text\vphantom{p|}};
  \draw [black!70, ->, ultra thin] (z.west) to (\x,\yb);
}
\end{tikzpicture}
\end{aligned}
\end{equation}
\end{definition}

\noindent
The black measurement vertex is as defined in~\ref{def:flipcontrol}. Attenuation of the region controlling the measurement choice is a notation allowing to avoid obstructing the rest of the diagram. This definition has an immediate physical motivation. On the left-hand side, we measure a quantum system in some particular basis, copy the result, and then re-encode the result back into a quantum state using the same basis. Then, according to a second basis guaranteed to be different to the first thanks to the projector $\Pd$, we make a new measurement, represented by the black vertex, on the re-encoded state. The right-hand side of the equation says that this entire procedure must be equivalent to doing the original measurement with respect to the original basis, but then choosing the second measurement result uniformly at random, up to the application of some overall phase that allows the wavefunctions to be entangled without introducing any classical correlation.

That this definition is correct for ordinary quantum theory is immediate from previous results on the 2\-categorical characterization of complementary measurements.
\begin{lemma}
In \cat{2Hilb}, the complementary families are exactly Hilbert spaces equipped with a collection of pairwise-complementary orthonormal bases.
\end{lemma}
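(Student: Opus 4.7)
The plan is to leverage the previous lemma, which already identifies controlled families of measurements in \cat{2Hilb} with Hilbert spaces equipped with lists of orthonormal bases, and then unpack the equation~\eqref{eq:complementaryfamily} at the level of matrix coefficients. Write the data of the controlled family as an $n$-tuple of orthonormal bases $\{|a_i^{(k)}\rangle\}_{i=1}^{d}$ for a $d$-dimensional Hilbert space $H$, indexed by $k=1,\ldots,n$. The claim to prove is that a unitary $\phi$ solving~\eqref{eq:complementaryfamily} exists if and only if every pair of distinct bases in this list is complementary.

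First I would interpret both sides of~\eqref{eq:complementaryfamily} as 2-morphisms in \cat{2Hilb} with matching source and target, by reading each picture according to the dictionary in Section~\ref{sec: The topological formalism} and Definitions~\ref{def:controlledfamily} and~\ref{def:flipcontrol}. The left-hand side acts as follows: on a quantum input $|\psi\rangle$ and a choice of basis index $k$, the measure--copy--encode subdiagram projects to $|a_i^{(k)}\rangle$ and records $i$ in a red pool; the black vertex (the ``control from the other side'' measurement of Definition~\ref{def:flipcontrol}) then reads out the coefficient of $|a_j^{(l)}\rangle$ in $|a_i^{(k)}\rangle$ and re-emits $|a_j^{(l)}\rangle$ on the quantum wire; and the $\textcolor{blue}{\Pd}$ projector restricts the sum to $k\neq l$. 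Extracting the coefficient at each classical value $(k,i,l,j)$ with $k\neq l$ yields the scalar $\langle a_j^{(l)} \,|\, a_i^{(k)} \rangle$, multiplying the rank-one linear map $|a_j^{(l)}\rangle\langle a_i^{(k)}|$ on the quantum wire.

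Next I would compute the right-hand side in the same way. The measure-in-basis-$k$ vertex again projects to $|a_i^{(k)}\rangle$ and records $i$, the uniform creations on the two right-hand blue/red regions produce the pair $(l,j)$ with coefficient $1/\sqrt{n}$ per pool (so that the combined normalisation matches the prefactor $1/n$ in front), and the output quantum wire carries $|a_j^{(l)}\rangle$. The controlled phase $\phi$ contributes a scalar $\phi(k,i,l,j)$ at each classical value, so at coefficient level the RHS equals $\tfrac{1}{n}\phi(k,i,l,j)$ times the same rank-one map as above, still restricted to $k\neq l$ by $\Pd$. Comparing coefficients at each classical value, equation~\eqref{eq:complementaryfamily} becomes the family of scalar identities
\[
\langle a_j^{(l)} \,|\, a_i^{(k)} \rangle \;=\; \tfrac{1}{\sqrt{n}}\,\phi(k,i,l,j) \qquad (k\neq l,\; i,j\in\{1,\ldots,d\}).
\]
Unitarity of the controlled phase $\phi$ means exactly that each $\phi(k,i,l,j)$ is a unit complex number, so such a $\phi$ exists if and only if $|\langle a_j^{(l)} | a_i^{(k)} \rangle| = 1/\sqrt{n}$ for all $k\neq l$ and all $i,j$; and this is precisely pairwise complementarity of the bases (noting $n=d$ since the red region, whose dimension equals the number of measurement outcomes, has dimension $d$).

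The main obstacle is the bookkeeping around the black vertex: one must carefully expand Definition~\ref{def:flipcontrol} to verify that the second measurement really reads off the coefficient $\langle a_j^{(l)} | a_i^{(k)} \rangle$ and not its conjugate, and must reconcile the single scalar factor $1/n$ in the prefactor with the two independent uniform creations in the diagram. Once those conventions are pinned down, the reduction to scalar identities is direct and the complementarity equivalence follows immediately.
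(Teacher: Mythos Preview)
Your approach is sound and yields the result, but it is genuinely different from the paper's. The paper does not compute matrix coefficients at all: it inserts classical data projectors $a$ and $b$ into the two blue regions, which collapses equation~\eqref{eq:complementaryfamily} to the single-pair complementarity equation already characterised in~\cite{Vicary:2012hqt}, and conversely sums over projectors via Lemma~\ref{lem:projectorproperties} to rebuild the controlled equation from the pairwise ones. Your route is more elementary and self-contained (it does not rely on the external characterisation in~\cite{Vicary:2012hqt}), at the cost of the coefficient bookkeeping you flag; the paper's route stays entirely inside the graphical formalism and is one line long, but defers the actual content to the cited reference.

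Two small corrections to your write-up. First, neither side of~\eqref{eq:complementaryfamily} has a quantum output: the black vertex $c$ is a measurement, not an encoding, so after it there is only classical data. Thus the component at fixed $(k,i,l,j)$ is a linear functional $\langle a_j^{(l)}|a_i^{(k)}\rangle\cdot\langle a_i^{(k)}|$ on $H$, not a rank-one operator; the same functional $\langle a_i^{(k)}|$ appears on the right-hand side and cancels, so your scalar comparison survives unchanged. Second, the symbol $n$ in the prefactor of~\eqref{eq:complementaryfamily} must indeed be read as the dimension of the red region (i.e.\ $\dim H$), not the blue one; your identification at the end is correct, but it is worth stating explicitly that this is the only reading under which the normalisation matches the standard complementarity bound $|\langle a_j^{(l)}|a_i^{(k)}\rangle|^2=\dim(H)^{-1}$.
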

\begin{proof}
Labelling the left- and right-hand blue regions on each side of equation~\eqref{eq:complementaryfamily} with distinct projectors $a$ and $b$ respectively, we obtain the ordinary 2\-categorical condition for a complementary pair of orthonormal bases~\cite{Vicary:2012hqt}. Conversely, suppose we have a family of orthonormal bases; then by writing the identity as a sum of projectors using Lemma~\ref{lem:projectorproperties}, equation~\eqref{eq:complementaryfamily} follows.
\end{proof}

\subsection{Alternative characterizations}

Here we examine alternative characterizations of the complementary family definition.

\begin{lemma}[Complementarity through unitarity]
\label{lemma:Complementarity through unitarity}
A controlled family of measurements is complementary if and only if the following 2-cell is unitary on the support of the projector $\Pd$:
\begin{align}
\alpha\quad:=\quad
\begin{aligned}
\begin{tikzpicture}[scale=0.4, thick, yscale=0.8]
\node (b) [fill opacity=1, Vertex, scale=\vertexsize, vertex colour=white, fill opacity=1] at (1.5, 2.25) {};
\node (c) [Vertex, scale=\vertexsize] at (1.5, 3.25) {};
\draw [fill=\fillcomp, fill opacity=0.8, draw=none] (-1,5.25)
    to (-1,0.5)
    to  (2,0.5)
    to(2,1.75)
    to [out=up, in=\seangle] (b.center)
    to [out=\swangle, in=up] +(-0.5,-0.5)
    to [out=down, in=down, looseness=2] +(-1,0)
    to (0.0,5.25);
    \draw  (2,0.5)
    to(2,1.75)
    to [out=up, in=\seangle] (b.center)
    to [out=\swangle, in=up] +(-0.5,-0.5)
    to [out=down, in=down, looseness=2] +(-1,0)
    to (0.0,5.25);
\draw (b.center)
    to [out=up, in=down] (c.center);   
\draw [fill=\fillcomp, fill opacity=0.8, draw=none] (1,5.25) to (1,3.75)
    to [out=down, in=\nwangle] (c.center)
    to [out=\neangle, in=down] +(0.5,0.5)
    to (2,4)
    to [out=up, in=left](2.5,4.5)
    to [out=right, in=up](3,4)
    to (3,0.5)
    to(4,0.5)
    to (4,5.25);
 \draw  (1,5.25) to (1,3.75)
    to [out=down, in=\nwangle] (c.center)
    to [out=\neangle, in=down] +(0.5,0.5)
    to (2,4)
    to [out=up, in=left](2.5,4.5)
    to [out=right, in=up](3,4)
    to (3,0.5)   ;
\draw [fill=\fillC, fill opacity=0.8, draw=none, fill opacity=0.8] (-1.5,5)
        to (-0.5,5)
        to (-0.5, 3.25)
        to [out=down, in=left] (0.75,2.25)
        to [out=left, in=up](-0.5, 1.25)
        to (-0.5, 0.25)
        to (-1.5,0.25);  
\draw  (-0.5,5)
        to (-0.5, 3.25)
        to [out=down, in=left] (0.75,2.25)
        to [out=left, in=up](-0.5, 1.25)
        to (-0.5, 0.25);              
\draw  (0.75,2.25) to (1.5,2.25);   
\draw [fill=\fillC, fill opacity=0.8, draw=none, fill opacity=0.8] (4.5, 5)
    to  (3.5, 5)
    to (3.5, 4.25)
    to [out=down, in=right] (2.25,3.25)
    to [out=right, in=up] (3.5, 2.25)
    to (3.5, 0.25)
    to (4.5, 0.25);
\draw (3.5, 5)
    to (3.5, 4.25)
    to [out=down, in=right] (2.25,3.25)
    to [out=right, in=up] (3.5, 2.25)
    to (3.5, 0.25);
    \draw (2.25,3.25) to (1.5,3.25);
\end{tikzpicture}
\end{aligned}
\end{align}
\end{lemma}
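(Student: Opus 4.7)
My approach is to show both directions by identifying $\alpha^\dagger\circ\alpha$ (or $\alpha\circ\alpha^\dagger$), suitably dressed on its red output by the Frobenius structure implicit in~\eqref{eq:topeq1}--\eqref{eq:topeq2}, with the LHS of the complementary family equation~\eqref{eq:complementaryfamily}. Under this identification the complementary family equation becomes a direct statement of unitarity of $\alpha$ on $\Pd$-support, modulo the controlled phase $\phi$ that appears on the RHS.

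For the forward direction, I would assume the controlled family is complementary and compute $\alpha\circ\alpha^\dagger$ graphically. Stacking $\alpha$ on $\alpha^\dagger$ produces a measure--encode--encode--measure pattern on the quantum line with the two intermediate red outputs joined; the dagger-Frobenius axioms on the red region then reshape these into the ``copy'' configuration appearing on the LHS of~\eqref{eq:complementaryfamily}. Applying~\eqref{eq:complementaryfamily} collapses the whole diagram to $(\Pd/n)$ times the $\phi$-dressed structure on the RHS; the unitarity of $\phi$ together with the topological hole-elimination from~\eqref{eq:topeq2} cancels the phase and absorbs the factor $1/n$, leaving $\Pd$. The reverse identity $\alpha^\dagger\circ\alpha = \Pd$ on $\Pd$-support follows by the symmetric calculation.

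For the backward direction I would assume $\alpha$ is unitary on $\Pd$-support and start from the LHS of~\eqref{eq:complementaryfamily}, identifying its ``measure--copy--encode--measure'' core as $\alpha^\dagger\circ\alpha$ dressed by a red-region copy. Unitarity replaces the $\alpha/\alpha^\dagger$ pair by $\Pd$, and the residual diagram can be read off as the RHS of~\eqref{eq:complementaryfamily} for a controlled phase $\phi$ built from the residual topological boundary data; $\phi$ is automatically unitary as a composite of such data. The main obstacle in both directions is the graphical identification between $\alpha^\dagger\circ\alpha$ and the LHS of~\eqref{eq:complementaryfamily}: this requires repeated use of the commutative dagger-Frobenius axioms on the red region together with the axiom in~\eqref{eq:topeq2} that allows one classical region to slide past another, so that the single internal red output of $\alpha\circ\alpha^\dagger$ can be massaged into the copy-and-export configuration featured in the complementary family equation.
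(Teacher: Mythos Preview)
Your approach diverges from the paper's, and the backward direction contains a real gap.

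The paper does not compute $\alpha^\dagger\alpha$ or $\alpha\alpha^\dagger$ at all. Instead it exhibits a short chain of \emph{equivalences}: starting from the complementarity equation~\eqref{eq:complementaryfamily}, it composes both sides at the bottom with the inverse of the controlled measurement vertex (which is unitary by~\eqref{eq:controlledmeasurementunitarity}), thereby stripping off the bottom white measurement $a$; a purely topological rearrangement then turns the resulting equation into $\Pd\,\alpha = (\Pd/\sqrt{n})\,\phi'$ for a unitary $\phi'$ that is just a topological deformation of the original $\phi$. Since $\phi'$ ranges over all unitaries as $\phi$ does, this last equation is literally the statement that $\alpha$ is unitary on $\Pd$-support. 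Both directions are handled simultaneously, and the witnessing phase in the backward direction is simply (a rescaled, rearranged) $\alpha$ itself.

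Your backward direction misidentifies the structure of the left-hand side of~\eqref{eq:complementaryfamily}. That diagram has three vertices---measure~($a$), encode~($b$), measure~($c$)---so it is \emph{one} copy of the $\alpha$-core (the $b,c$ pair) preceded by a single invertible measurement and a copy, not $\alpha^\dagger\alpha$ dressed by anything. Replacing a nonexistent $\alpha^\dagger\alpha$ by $\Pd$ therefore does not apply. More seriously, your proposed construction of $\phi$ as ``residual topological boundary data'' is not sound: composites of the copy/compare/create/delete cells are in general not unitary (copy and compare are isometries/coisometries, not unitaries), so you cannot conclude unitarity of $\phi$ that way. The point you are missing is exactly the paper's: once the bottom measurement is peeled off, $\phi$ is forced to be $\sqrt{n}\,\alpha$ up to topology, and its unitarity is precisely the hypothesis.

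Your forward direction is more plausible in spirit but still vague where it matters: to make it work you would need two applications of complementarity (once for each factor in $\alpha\alpha^\dagger$) producing $\phi$ and $\phi^\dagger$, and then a careful accounting showing they cancel and that the resulting closed red loop contributes exactly the factor $n$ needed to absorb the $1/n$. This is doable, but it is longer than the paper's argument and duplicates work that the chain-of-equivalences approach avoids entirely.
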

\begin{proof}
See Appendix.
\end{proof}

The following alternative formulations of complementarity will prove useful in a later section.
\begin{lemma} [Complementarity condition under horizontal reflection]
\label{Complementarity condition under horizontal reflection}
A family of controlled measurement operations is complementary if and only if the following equation is satisfied:
\begin{equation}
\label{eq:flippedcomplementarity}
\textcolor{blue}{\Pd}
\,
\begin{aligned}
\begin{tikzpicture}[scale=0.5, thick, xscale=-1, yscale=0.8]
\node (a) [Vertex, scale=\vertexsize, vertex colour=black]at (0.5,-0.25) {};
\node (b) [Vertex, scale=\vertexsize, vertex colour=black]at (1.5, 2.) {};
\node (c) [Vertex, scale=\vertexsize, vertex colour=white] at (1.5, 3.00) {};
\draw [fill=\fillcomp, fill opacity=0.8, draw=none] (-1,4)
    to (-1,1.75)
    to [out=down, in=\nwangle, out looseness=1.5] (a.center)
    to [out=\neangle, in=down, in looseness=1.5] (2,1.5)
    to [out=up, in=\seangle] (b.center)
    to [out=\swangle, in=up] +(-0.5,-0.5)
    to [out=down, in=down, looseness=2] +(-1,0)
    to (0.0,4);
\draw (-1,4)
    to (-1,1.75)
    to [out=down, in=\nwangle, out looseness=1.5] (a.center)
    to [out=\neangle, in=down, in looseness=1.5] (2,1.5)
    to [out=up, in=\seangle] (b.center)
    to [out=\swangle, in=up] +(-0.5,-0.5)
    to [out=down, in=down, looseness=2] +(-1,0)
    to (0.0,4);
\draw (b.center)
    to [out=up, in=down] (c.center);  
\draw [fill=\fillcomp, fill opacity=0.8, draw=none] (1,4) to (1,3.5)
    to [out=down, in=\nwangle] (c.center)
    to [out=\neangle, in=down] +(0.5,0.5)
    to (2,4);
\draw (1,4) to (1,3.5)
    to [out=down, in=\nwangle] (c.center)
    to [out=\neangle, in=down] +(0.5,0.5)
    to (2,4);
\draw [fill=\fillC, fill opacity=0.8, draw=none] (-3,4.0)
        to (-2.0,4.0)
        to (-2, 3)
        to [out=down, in=left] (-1,2)
        to [out=left, in=up] (-2, 1)
        to (-2.0, 0.25)
        to [out=down, in=left] (-1, -0.25)
        to [out=left, in=up] (-2.0, -0.75)
        to (-3,-0.75);        
\draw(-2.0,4.0)
        to (-2, 3)
        to [out=down, in=left] (-1,2)
        to [out=left, in=up] (-2, 1)
        to (-2.0, 0.25)
        to [out=down, in=left] (-1, -0.25)
        to [out=left, in=up] (-2.0, -0.75);         
\draw [fill=\fillC, fill opacity=0.8, draw=none] (4, 4.0)
    to  (3.0, 4.0)
    to [out=down, in=right] (c.center)
    to [out=right, in=up] (3.0, 2.0)
    to (3.0, -0.75)
    to (4, -0.75);
\draw(3.0, 4.0)
    to [out=down, in=right] (c.center)
    to [out=right, in=up] (3.0, 2.0)
    to (3.0, -0.75);
\draw(-1,-0.25) to (0.5, -0.25);
\draw(-1,2) to (1.5, 2);
\draw (0.5,-0.75 -| a.center) to (a.center);
\end{tikzpicture}
\end{aligned}
\quad=\quad
\frac{\Pd}{\sqrt{n}}\,
\begin{aligned}
\begin{tikzpicture}[scale=0.5, thick, xscale=-1, yscale=0.8]
\node (a) [Vertex, scale=\vertexsize, vertex colour=black] at (0.0, 0.1) {};
\draw [fill=\fillcomp, fill opacity=0.8, draw=none] (-0.5,3.5)
    to (-0.5,0.5)
    to [out=down,  in=left] (a.center) 
    to [out=right,  in=down](0.5,0.5)
    to (0.5,3.5);
\draw (-0.5,3.5)
    to (-0.5,0.5)
    to [out=down,  in=left] (a.center) 
    to [out=right,  in=down](0.5,0.5)
    to (0.5,3.5);
\draw [fill=\fillcomp, fill opacity=0.8, draw=none] (1.5,3.5)
    to (1.5,0.5)
    to [out=down,  in=left] (2,0)
    to [out=right, in=down](2.5,0.5)
    to (2.5,3.5);    
\draw (1.5,3.5)
    to (1.5,0.5)
    to [out=down,  in=left] (2,0)
    to [out=right, in=down](2.5,0.5)
    to (2.5,3.5);
\draw (0.0,-1.0) to (a.center);
\draw [fill=\fillC, fill opacity=0.8, draw=none] (-2.4,3.5)
to (-1.5,3.5)
    to (-1.5,1.5)
    to [out=down,  in=left] (-0.5,0)
    to [out=left, in=up] (-1.5, -1)
    to (-2.4, -1);
\draw (-1.5,3.5)
    to (-1.5,1.5)
    to [out=down,  in=left] (-0.5,0)
    to [out=left, in=up] (-1.5, -1);
\draw [fill=\fillC, fill opacity=0.8, draw=none] (4.5,3.5)
    to (3.5,3.5)
    to (3.5,-1)
    to (4.5, -1);
\draw (3.5,3.5)to (3.5,-1);
\draw (-0.5, 0) to (0,0);
\node (p)[minimum width=80pt, draw, fill=white, fill opacity=1] at (1, 2) {$\phi^{\dagger}$};
\node [Vertex, vertex colour=white, scale=0.5] at ([xshift=-71pt, yshift=1pt]p.south){};   
\node [Vertex, vertex colour=white, scale=0.5] at ([xshift=-14.5pt, yshift=1pt]p.south){};   
\node [Vertex, vertex colour=white, scale=0.5] at ([xshift=14.5pt, yshift=1pt]p.south){};   
\node [Vertex, vertex colour=white, scale=0.5] at ([xshift=71pt, yshift=1pt]p.south){};   
\end{tikzpicture}
\end{aligned}
\end{equation}
\end{lemma}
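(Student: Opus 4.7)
My approach is to reduce both forms of the condition to the unitarity characterization of Lemma~\ref{lemma:Complementarity through unitarity}. By that lemma, a controlled family is complementary if and only if the 2-cell $\alpha$ is unitary on the support of $\Pd$; that is, both $\alpha\circ\alpha^\dag = \Pd$ and $\alpha^\dag\circ\alpha = \Pd$ hold. The plan is to show that equation~\eqref{eq:complementaryfamily} unfolds to one of these equalities and equation~\eqref{eq:flippedcomplementarity} to the other, and then to use the equivalence of these two one-sided conditions.

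The first step is to recall that~\eqref{eq:complementaryfamily} is already essentially the statement $\alpha\circ\alpha^\dag = \Pd$: bending the outgoing red classical wires around on the LHS via~\eqref{eq:controlledmeasurementunitarity} and the topological boundary axioms~\eqref{eq:topeq1}--\eqref{eq:topeq2} presents the LHS in the shape $\alpha\circ\alpha^\dag$, and the same bending on the RHS yields $\Pd$ times the prescribed scalar. This is the rewriting that underlies the proof of Lemma~\ref{lemma:Complementarity through unitarity}.

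The central task is then to carry out the analogous rewriting for the reflected equation~\eqref{eq:flippedcomplementarity}, arriving at $\alpha^\dag\circ\alpha = \Pd$. To do this I would first use Definition~\ref{def:flipcontrol} to replace each black vertex on the LHS by a white vertex whose controlling blue region has been pulled across the quantum line using the symmetric monoidal 2-category structure. Then I would apply~\eqref{eq:controlledmeasurementunitarity} together with topological simplification of the blue and red regions to reveal the LHS as $\alpha^\dag\circ\alpha$. The appearance of $\phi^\dag$ (rather than $\phi$) and the adjusted normalizing scalar on the RHS are precisely what one expects when presenting the same $\Pd$-valued identity in transposed form, so on both sides we obtain the same equation up to taking a dagger.

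The final step is to observe that $\alpha\circ\alpha^\dag = \Pd$ and $\alpha^\dag\circ\alpha = \Pd$ are equivalent: on the (finite-dimensional) support of $\Pd$ in \cat{2Hilb}, either equation makes $\alpha$ a partial isometry with initial and final projection $\Pd$, and these two conditions coincide. Hence both~\eqref{eq:complementaryfamily} and~\eqref{eq:flippedcomplementarity} characterize complementarity of the family. The main obstacle is the diagrammatic rewriting in the central step: the black-vertex expansions and topological manipulations must be done carefully so that the target shape $\alpha^\dag\circ\alpha$ emerges cleanly without introducing spurious crossings, and the normalizing scalar arising from the red boundary must be tracked to confirm that it matches $1/\sqrt{n}$ exactly.
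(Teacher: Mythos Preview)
Your approach is correct and essentially the same as the paper's: both reduce the equivalence to Lemma~\ref{lemma:Complementarity through unitarity}, observing that $\alpha$ is unitary on the support of $\Pd$ if and only if $\alpha^{\dagger}$ is, and then recovering~\eqref{eq:flippedcomplementarity} from the unitarity of $\alpha^{\dagger}$ by the same elementary 2-cell manipulations used in proving that lemma. Your separate final-step argument that $\alpha\alpha^{\dagger}=\Pd$ and $\alpha^{\dagger}\alpha=\Pd$ are equivalent via a partial-isometry argument is not needed: once Lemma~\ref{lemma:Complementarity through unitarity} gives you that $\alpha$ is unitary on $\Pd$, the unitarity of $\alpha^{\dagger}$ is immediate, so there is no need to establish equivalence of the two one-sided conditions independently.
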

\begin{proof}
By Lemma~\ref{lemma:Complementarity through unitarity}, both $\alpha$ and $\alpha^{\dagger}$ are unitary on the support of the projector $\Pd$. The condition~\ref{eq:flippedcomplementarity} can be obtained by elementary 2\-cell operations from the unitarity of $\alpha^{\dagger}$.
\end{proof}

\begin{lemma}[Alternative formulation of complementarity]
\label{Alternative formulation of complementarity}
A controlled family of measurement operations is complementary if and only if the following condition is satisfied:
\begin{equation}
\label{th:controlled}
\Pd\,
\begin{aligned}
\begin{tikzpicture} [thick, scale=0.4, yscale=0.7]
\draw (1,1) to (1,-1);
\draw (-1,-3) to (-1,3);
\draw [fill=\fillcomp, fill opacity=0.8, draw=none] (1.5, 5)
to (1.5, 1.6)
to [out=down, in=down, looseness=2](0.5, 1.6)
to (0.5, 3.6)
to [out=up, in=up, looseness=2](-0.5, 3.6)
to [out=down, in=down, looseness=2] (-1.5, 3.6)
to (-1.5, 5);
\draw (1.5, 5)
to (1.5, 1.6)
to [out=down, in=down, looseness=2](0.5, 1.6)
to (0.5, 3.6)
to [out=up, in=up, looseness=2](-0.5, 3.6)
to [out=down, in=down, looseness=2] (-1.5, 3.6)
to (-1.5, 5);
\draw [fill=\fillcomp, fill opacity=0.8, draw=none] (-1.5, -5)
to (-1.5, -3.6)
to [out=up, in=up, looseness=2](-0.5, -3.6)
to [out=down, in=down, looseness=2](0.5, -3.6)
to (0.5, -1.6)
to [out=up, in=up, looseness=2](1.5, -1.6)
to (1.5, -5);
\draw (-1.5, -5)
to (-1.5, -3.6)
to [out=up, in=up, looseness=2](-0.5, -3.6)
to [out=down, in=down, looseness=2](0.5, -3.6)
to (0.5, -1.6)
to [out=up, in=up, looseness=2](1.5, -1.6)
to (1.5, -5);
\draw [fill=\fillC, fill opacity=0.8, draw=none] (3.5, -5)
to (2.5,-5)
to (2.5,0)
to [out=up, in=right](1, 1)
to [out=right, in=down](2.5,2)
to [out=up, in=right](1,3)
to [out=right, in=down](2.5,4)
to (2.5,5)
to (3.5, 5);
\draw (2.5,-5)
to (2.5,0)
to [out=up, in=right](1, 1)
to [out=right, in=down](2.5,2)
to [out=up, in=right](1,3)
to [out=right, in=down](2.5,4)
to (2.5,5);
\draw [fill=\fillC, fill opacity=0.8, draw=none] (-3.5, -5)
to (-2.5,-5)
to (-2.5,-4)
to [out=up, in=left](-1,-3)
to [out=left, in=down](-2.5, -2)
to [out=up, in=left](-1,-1)
to [out=left, in=down](-2.5,0)
to (-2.5, 5)
to (-3.5, 5);
\draw  (-2.5,-5)
to (-2.5,-4)
to [out=up, in=left](-1,-3)
to [out=left, in=down](-2.5, -2)
to [out=up, in=left](-1,-1)
to [out=left, in=down](-2.5,0)
to (-2.5, 5);
\draw (1.5,3) to (-1,3);
\draw (1,-1) to (-1.5,-1);
\node [Vertex, scale=\vertexsize, vertex colour=white] at (1,1) {};
\node [Vertex, scale=\vertexsize, vertex colour=white] at (-1,-3) {};
\node [Vertex, scale=\vertexsize, vertex colour=black] at (1,-1) {};
\node [Vertex, scale=\vertexsize, vertex colour=black] at (-1,3) {};
\end{tikzpicture}
\end{aligned}
\quad=\quad
\frac{\Pd}{n}\,
\begin{aligned}
\begin{tikzpicture}[thick, scale=0.4, xscale=-1, yscale=0.7]
\draw [thick, fill=\fillcomp, fill opacity=0.8] (0.5, 5)
    to (0.5, 3) 
    to [out=down, in=down, looseness=1.5] (-0.5,3)
    to (-0.5, 5);
\draw [thick, fill=\fillcomp, fill opacity=0.8] (0.5, -5)
    to (0.5, -3) 
    to [out=up, in=up, looseness=1.5] (-0.5,-3)
    to (-0.5, -5);    
\draw [fill=\fillC, fill opacity=0.8, draw=none] (2.5,-5) to (1.5,-5)
to (1.5, 5)
to (2.5, 5);
\draw [fill=\fillC, fill opacity=0.8, draw=none] (-2.5,-5) 
to (-1.5,-5)
to (-1.5, 5)
to (-2.5, 5);
\draw (-1.5,-5) to (-1.5, 5);
\draw (1.5,-5) to (1.5, 5);
\end{tikzpicture}
\end{aligned}
\end{equation}
\end{lemma}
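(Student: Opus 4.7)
The plan is to reduce both directions of the equivalence to the unitarity characterisation of complementarity in Lemma~\ref{lemma:Complementarity through unitarity}. The LHS of~(\ref{th:controlled}) is, up to planar isotopy, a ``trace-like'' closure of two copies of the 2-cell $\alpha$ (one adjointed against the other): the lower half containing the black vertex at $(1,-1)$ and the white vertex at $(-1,-3)$ is a copy of $\alpha$, the upper half containing the white vertex at $(1,1)$ and the black vertex at $(-1,3)$ is a copy of $\alpha^\dagger$, and the horizontal segments at $y=3$ and $y=-1$ splice the two halves together along their shared quantum and classical outputs. The projector $\textcolor{blue}{\Pd}$ in front ensures that we are restricted to the subspace where the two controlling bases are distinct.

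To make this precise I would first invoke Definition~\ref{def:flipcontrol} to rewrite each black vertex as a white vertex with the controlling blue region wound around via the symmetric braiding, and then use planar isotopy, together with the topological boundary axioms \eqref{eq:topeq1}--\eqref{eq:topeq2}, to rearrange the classical regions into the shape appearing in the definition of $\alpha$. Once the LHS is presented as $\alpha^\dagger \circ \alpha$ with its quantum output and input joined into a loop and its classical legs opened onto the ambient blue regions, the forward direction is immediate: by Lemma~\ref{lemma:Complementarity through unitarity}, complementarity gives $\alpha^\dagger \circ \alpha = \textcolor{blue}{\Pd}\cdot \id$ on the relevant support, so the composite collapses to a diagram in which the quantum wires no longer connect the top and bottom halves. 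The closed quantum loop created by this collapse evaluates to the dimension $n$, which, combined with an overall $\textcolor{blue}{\Pd}$ that matches the two blue regions across the cut, produces the scalar $\textcolor{blue}{\Pd}/n$ of the RHS and the two disconnected blue bubbles.

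For the converse, I would attach ``half-$\alpha$'' caps to both sides of~(\ref{th:controlled}) — built from a single white/black vertex pair and the cup for the quantum line, using~(\ref{eq:controlledmeasurementunitarity}) to cancel one copy of $\alpha^\dagger$ — so that the LHS reduces to $\alpha^\dagger \circ \alpha$ while the RHS reduces to $\textcolor{blue}{\Pd}\cdot \id$; the symmetric procedure (attaching the mirrored cap) extracts $\alpha \circ \alpha^\dagger = \textcolor{blue}{\Pd}\cdot \id$. Together these are exactly the unitarity of $\alpha$ on the support of $\textcolor{blue}{\Pd}$, and Lemma~\ref{lemma:Complementarity through unitarity} then yields complementarity. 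The main obstacle is the careful bookkeeping of Step~1: the LHS of~(\ref{th:controlled}) has several quantum and classical strands crossing each other, and one must verify that the isotopies genuinely produce $\alpha^\dagger \circ \alpha$ rather than $\alpha^\dagger \circ \alpha$ twisted by some unintended braid, and that the scalar $1/n$ is tracked correctly through the closure of the quantum loop and through the simplification of the doubled blue regions under $\textcolor{blue}{\Pd}^2 = \textcolor{blue}{\Pd}$.
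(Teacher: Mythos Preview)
Your overall strategy---reduce both directions to Lemma~\ref{lemma:Complementarity through unitarity} via topological manipulation---is exactly the paper's; its entire proof is the one sentence ``The result is proved using Lemma~\ref{lemma:Complementarity through unitarity} and by performing topological manipulations.''

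But the specific manipulation you describe does not type-check. The 2-cell $\alpha$ has no open quantum wires: its single quantum strand is internal, running from the encode vertex to the measure vertex. So ``$\alpha^\dagger\circ\alpha$ with its quantum output and input joined into a loop'' is not meaningful, and the $1/n$ on the right of~(\ref{th:controlled}) cannot arise from tracing a closed quantum loop. Nor is the left-hand side of~(\ref{th:controlled}) the composite $\alpha^\dagger\circ\alpha$ as displayed in Lemma~\ref{lemma:QKDid}: in $\alpha^\dagger\circ\alpha$ both red regions pass from the bottom boundary through to the top, and unitarity of $\alpha$ gives $\Pd\cdot\id$ on all strands; in~(\ref{th:controlled}) the lower red region is a \emph{copy} feeding the two encode vertices and the upper red region is a \emph{compare} collecting the two measurement outputs, with no red wire crossing the middle, so the right-hand side is $\tfrac{\Pd}{n}$ times ``delete red, then create uniform red'' rather than $\Pd\cdot\id$. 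What actually relates the two pictures is a bending of the \emph{red} boundaries using the topological-boundary axioms~\eqref{eq:topeq1}--\eqref{eq:topeq2}, not any closure of a quantum wire; once you redo the bookkeeping on the red regions the equivalence with unitarity of $\alpha$ on the support of $\Pd$ goes through in both directions, and the $1/n$ falls out correctly. Your converse outline has the right shape but needs the same correction.
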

\begin{proof}
The result is proved using Lemma~\ref{lemma:Complementarity through unitarity} and by performing topological manipulations.
\end{proof}

\section{Quantum key distribution}
\label{sec:qkd}

In this Section we give 2\-categorical equations defining quantum key distribution (QKD), in both its BB84~\cite{BB84} and E91~\cite{E91} forms. In Theorem~\ref{thm:bb84e91equivalent} we show that these forms are equivalent. Our main result is Theorem~\ref{thm:equivalent}, in which we demonstrate that these quantum key distribution equations are equivalent to Definition~\ref{def:controlledcomplementarity} of a complementary family of measurements.

\subsection{Abstract definitions}

\begin{definition}[BB84 QKD]
\label{BB84QKD}
A controlled family of measurements satisfies \textit{BB84 quantum key distribution} if there exists a unitary 2\-cell $\psi$ satisfying the following equation:
\begin{align*}
\begin{aligned}
\begin{tikzpicture} [scale=0.4,thick,yscale=1]
\draw [fill=black!10, draw=none] (-9,-7.25) rectangle +(6,12.75);
\draw [fill=black!10, draw=none] (0.75,-7.25) rectangle +(5.65,12.75);
\node (A) [Vertex, scale=\vertexsize, vertex colour=white] at (-4,-2) {};
\node (B) [Vertex, scale=\vertexsize] at (2.5,-0.5) {};
\node (C) [Vertex, scale=\vertexsize, vertex colour=white] at (0, 2.5) {};
\node (D) [Vertex, scale=\vertexsize] at (1.5, 1.5) {};
\draw [out=up, in=down, looseness=0.5] (-4,-2) to (2.5,-0.5);
\draw [out=up, in=down] (1.5, 1.5) to (0, 2.5);
\fill [fill=\fillcomp, draw, fill opacity=0.8]  (0.5,5)
to (0.5, 3)
to [out=down, in=down, looseness=2] (-0.5, 3)
to (-0.5,5);
\fill [fill=\fillcomp, draw, fill opacity=0.8]  (3,5) 
to (3,1)
to [out=down, in=down, looseness=2](2,1)
to (2, 1)
to [out=up, in=up, looseness=2] (1,1)
to [out=down, in=left](2.5,-0.5)
to [out=right, in=down](4,1)
to (4,5);
\fill [fill=\fillcomp, fill opacity=0.8, draw] (-7.5, 5)
to (-7.5, -2.5)
to [out=down, in=down, looseness=2](-4.5, -2.5)
to [out=up, in=up, looseness=2](-3.5, -2.5)
to [out=down, in=down, looseness=1.9](-8.5, -2.5)
to (-8.5, 0)
to (-8.5, 5);
\fill [fill=\fillC, fill opacity=0.8, draw] (-6.5, 5)
to (-6.5, -2.75)
to [out=down, in=down, looseness=2](-5.5, -2.75)
to [out=up, in=left] (-4,-2)
to [out=left, in=down] (-5.5, -1.25)
to (-5.5, 2.5)
to [out=up, in=up, looseness=1.5] (-2.5, 2.5)
to (-2.5, 1.75)
to [out=down, in=down, looseness=2] (-1.5,1.75)
to (-1.5,1.75)
to [out=up, in=left] (0, 2.5)
to [out=left, in=down] (-1.5, 3.25)
to [out=up, in=down](-1.5, 5);
\fill [fill=\fillC, fill opacity=0.8, draw] (5,5)
to (5,2.25)
to [out=down, in=right](3.5, 1.5)
to [out=right, in=up](5, 0.75)
to (5, 0.25)
to [out=down, in=right](3.5,-0.5)
to [out=right, in=up](5,-1.25)
to [out=down, in=down, looseness=2](6,-1.25)
to (6,5);
\draw (3.5, -0.5) to (2.5, -0.5);
\draw (3.5, 1.5) to (1.5, 1.5);
\node [] at (-6,-6.5){\sc alice};
\node [] at (-1,-6.5){\sc bob};
\node [] at (3.75,-6.5){\sc eve};
\foreach \x/\ya/\yb/\text in
  {-6/-6.5/-5.3/{\sc alice}: choose random bit,
   -6/-5.5/-4.3/{\sc alice}: copy the bit,
   -6/-4.5/-3.3/{\sc alice}: choose a random basis,
   -4.3/-3.5/-2.0/{\sc alice}: controlled preparation,
   5.5/-2.5/-1.8/{\sc eve}: choose a random basis,
   -1/-1.5/-1.25/{\sc eve}: intercept system,
   2.3/-0.5/-0.5/{\sc eve}: controlled measurement,
   2.5/0.5/0.5/{\sc eve}: copy measurement result,
   1.3/1.5/1.5/{\sc eve}: prepare counterfeit system,
   -2/2.5/1.15/{\sc bob}: choose a random basis,
   -0.3/3.5/2.5/{\sc bob}: controlled measurement,
   -4/4.5/3.8/{\sc alice, bob}: compare bases   }
{
  \node (z) at (-20,\ya) [anchor=west, font=\footnotesize] {\text\vphantom{p|}};
  \draw [black!70, ->, ultra thin] (z.east) to (\x,\yb);
}
\end{tikzpicture}
\end{aligned}
\quad=\hspace{0.3cm}
\begin{matrix}
\Pd\,
\begin{aligned}
\begin{tikzpicture}[scale=0.4,thick]
\draw [fill=\fillcomp, fill opacity=0.8] (-4,5)
to [out=down, in=up](-2, 3)
to (-2,1)
to [out=down, in=down, looseness=2](-1,1)
to (-1,3)
to [out=up, in=down, out looseness=1.2, in looseness=0.5] (-3, 5);
\draw [fill=\fillC, fill opacity=0.8] (-2, 5)
to [out=down, in=up, in looseness=1.2, out looseness=0.5](-4, 3)
to (-4,1)
to [out=down, in=down, looseness=2](-3,1)
to (-3,3)
to [out=up, in=down](-1,5);
\draw [fill=\fillcomp, fill opacity=0.8] (1, 5)
to (1,1)
to [out=down, in=down, looseness=2](0,1)
to (0,5);
\draw [fill=\fillcomp, fill opacity=0.8] (2, 5)
to (2,1)
to [out=down, in=down, looseness=2](3,1)
to (3,5);
\draw [fill=\fillC, fill opacity=0.8] (4, 5)
to (4,1)
to [out=down, in=down, looseness=2](5,1)
to (5,5);
\node (p)[minimum width=90pt, draw, fill=white, fill opacity=1] at (0.5, 2.5) {$\psi$};
\node (c) [Vertex, scale=0.5, vertex colour=white] at ([xshift=-99.5pt, yshift=1pt]p.south) {};
\node (c) [Vertex, scale=0.5, vertex colour=white] at ([xshift=14pt, yshift=1pt]p.south) {};
\node (c) [Vertex, scale=0.5, vertex colour=white] at ([xshift=99.5pt, yshift=1pt]p.south) {};
\node (c) [Vertex, scale=0.5, vertex colour=white] at ([xshift=42.5pt, yshift=1pt]p.south){};
\node (c) [Vertex, scale=0.5, vertex colour=white] at ([xshift=-42.5pt, yshift=1pt]p.south) {};
\end{tikzpicture}
\end{aligned}\\  \\
+ \,\,\,\, \Ps
\begin{aligned}
\begin{tikzpicture}[scale=0.4,thick]
\draw [fill=\fillcomp, draw=none, fill opacity=0.8]
(0, 0) 
to [out=down, in=down, looseness=1.4](7,0)
to (6, 0)
to [out=down, in=down, looseness=4](5, 0)
to (4, 0)
to [out=down, in=down, looseness=1.5](1, 0);
\draw (4, 0) to [out=down, in=down, looseness=1.5](1, 0);
\draw (6, 0) to [out=down, in=down, looseness=4](5, 0);
\draw (0, 0) to [out=down, in=down, looseness=1.4](7,0);
\draw [fill=\fillC, draw=none, fill opacity=0.8] (2, 0) 
to [out=down, in=down, looseness=1.7](9, 0)
to (8,0)
to [out=down, in=down, looseness=1.7](3, 0);
\draw (8, 0) to [out=down, in=down, looseness=1.7](3, 0);
\draw (2, 0) to [out=down, in=down, looseness=1.7](9, 0);
\end{tikzpicture}
\end{aligned}
\end{matrix}
\end{align*}
\end{definition}

\noindent
Each of the diagrams on the right-hand side corresponds to the desired behaviour depending on whether Eve guessed the basis correctly. If Eve guesses incorrectly, then the $\Pd$ term says that both basis choices and all 3 measurement results are classically uncorrelated. If Eve guesses correctly, the then $\Ps$ term says that Eve shares Alice and Bob's basis, and that all 3 agents share the same classical data.

A different equation can be obtained from consideration of the E91 QKD protocol.
\begin{definition}[E91 QKD]
\label{E91QKD}
A controlled family of measurements satisfies \textit{E91 quantum key distribution} if there exists a unitary 2\-cell $\psi$ satisfying the following equation:
\vspace{-10pt}
\begin{align*}
\begin{aligned}
\begin{tikzpicture} [scale=0.4,thick,yscale=1]
\draw [fill=black!10, draw=none] (-7,-5.25) rectangle +(4,10.75);
\draw [fill=black!10, draw=none] (0.75,-5.25) rectangle +(5.65,10.75);
\node (A) [Vertex, scale=\vertexsize, vertex colour=white] at (-6, -0.5) {};
\node (B) [Vertex, scale=\vertexsize] at (2.5,-0.5) {};
\node (C) [Vertex, scale=\vertexsize, vertex colour=white] at (0, 2.5) {};
\node (D) [Vertex, scale=\vertexsize] at (1.5, 1.5) {};
\draw [out=down, in=down, looseness=1] (-6, -0.5) to (2.5,-0.5);
\draw [out=up, in=down] (1.5, 1.5) to (0, 2.5);
\fill [fill=\fillcomp, draw, fill opacity=0.8]  (0.5,5)
to (0.5, 3)
to [out=down, in=down, looseness=2] (-0.5, 3)
to (-0.5,5);
\fill [fill=\fillcomp, draw, fill opacity=0.8]  (3,5) 
to (3,1)
to [out=down, in=down, looseness=2](2,1)
to (2, 1)
to [out=up, in=up, looseness=2] (1,1)
to [out=down, in=left](2.5,-0.5)
to [out=right, in=down](4,1)
to (4,5);
\fill [fill=\fillcomp, fill opacity=0.8, draw] (-5.5, 5)
to (-5.5, 0)
to [out=down, in=down, looseness=2](-6.5, 0)
to (-6.5, 5);
\fill [fill=\fillC, fill opacity=0.8, draw] (-4.5, 5)
to (-4.5, 0.25)
to [out=down, in=right](-6, -0.5)
to [out=right, in=up](-4.5, -1.25)
to [out=down, in=down, looseness=2](-3.5, -1.25)
to (-3.5, 3.5)
to [out=up, in=up, looseness=1.5] (-2.5, 3.5)
to (-2.5, 1.75)
to [out=down, in=down, looseness=2] (-1.5,1.75)
to (-1.5,1.75)
to [out=up, in=left] (0, 2.5)
to [out=left, in=down] (-1.5, 3.25)
to [out=up, in=down](-1.5, 5);
\fill [fill=\fillC, fill opacity=0.8, draw] (5,5)
to (5,2.25)
to [out=down, in=right](3.5, 1.5)
to [out=right, in=up](5, 0.75)
to (5, 0.25)
to [out=down, in=right](3.5,-0.5)
to [out=right, in=up](5,-1.25)
to [out=down, in=down, looseness=2](6,-1.25)
to (6,5);
\draw (3.5, -0.5) to (2.5, -0.5);
\draw (3.5, 1.5) to (1.5, 1.5);
\node [] at (-5,-4.5){\sc alice};
\node [] at (-1,-4.5){\sc bob};
\node [] at (3.75,-4.5){\sc eve};
\foreach \x/\ya/\yb/\text in
  {-1.7/-4.5/-3/Creation of entangled state,
   5.5/-3.5/-1.8/{\sc eve}: choose a random basis,
   -4/-2.5/-1.8/{\sc alice}: choose a random basis,
   -6/-1.5/-0.7/{\sc alice}: controlled measurement,
   2.3/-0.5/-0.5/{\sc eve}: intercept and measure,
   2.5/0.5/0.4/{\sc eve}: copy measurement result,
   1.3/1.5/1.5/{\sc eve}: prepare fake system,
   -2/2.5/1.15/{\sc bob}: choose a random basis,
   -0.3/3.5/2.5/{\sc bob}: controlled measurement,
   -3/4.5/4/{\sc alice, bob}: compare bases   }
{
  \node (z) at (-19,\ya) [anchor=west, font=\small] {\text\vphantom{p|}};
  \draw [black!70, ->, ultra thin] (z.east) to (\x,\yb);
}
\end{tikzpicture}
\end{aligned}
\quad=\quad
\begin{matrix}
\Pd\begin{aligned}
\begin{tikzpicture}[scale=0.4,thick]
\draw [fill=\fillcomp, fill opacity=0.8] (-4,5)
to [out=down, in=up](-2, 3)
to (-2,1)
to [out=down, in=down, looseness=2](-1,1)
to (-1,3)
to [out=up, in=down, out looseness=1.2, in looseness=0.5] (-3, 5);
\draw [fill=\fillC, fill opacity=0.8] (-2, 5)
to [out=down, in=up, in looseness=1.2, out looseness=0.5](-4, 3)
to (-4,1)
to [out=down, in=down, looseness=2](-3,1)
to (-3,3)
to [out=up, in=down](-1,5);
\draw [fill=\fillcomp, fill opacity=0.8] (1, 5)
to (1,1)
to [out=down, in=down, looseness=2](0,1)
to (0,5);
\draw [fill=\fillcomp, fill opacity=0.8] (2, 5)
to (2,1)
to [out=down, in=down, looseness=2](3,1)
to (3,5);
\draw [fill=\fillC, fill opacity=0.8] (4, 5)
to (4,1)
to [out=down, in=down, looseness=2](5,1)
to (5,5);
\node (p)[minimum width=90pt, draw, fill=white, fill opacity=1] at (0.5, 2.5) {$\psi$};
\node (c) [Vertex, scale=0.5, vertex colour=white] at ([xshift=-99.5pt, yshift=1pt]p.south) {};
\node (c) [Vertex, scale=0.5, vertex colour=white] at ([xshift=14pt, yshift=1pt]p.south) {};
\node (c) [Vertex, scale=0.5, vertex colour=white] at ([xshift=99.5pt, yshift=1pt]p.south) {};
\node (c) [Vertex, scale=0.5, vertex colour=white] at ([xshift=42.5pt, yshift=1pt]p.south){};
\node (c) [Vertex, scale=0.5, vertex colour=white] at ([xshift=-42.5pt, yshift=1pt]p.south) {};
\end{tikzpicture}
\end{aligned}\\  \\
+ \,\,\,\, \Ps
\begin{aligned}
\begin{tikzpicture}[scale=0.4,thick]
\draw [fill=\fillcomp, draw=none, fill opacity=0.8]
(0, 0) 
to [out=down, in=down, looseness=1.4](7,0)
to (6, 0)
to [out=down, in=down, looseness=4](5, 0)
to (4, 0)
to [out=down, in=down, looseness=1.5](1, 0);
\draw (4, 0) to [out=down, in=down, looseness=1.5](1, 0);
\draw (6, 0) to [out=down, in=down, looseness=4](5, 0);
\draw (0, 0) to [out=down, in=down, looseness=1.4](7,0);
\draw [fill=\fillC, draw=none, fill opacity=0.8] (2, 0) 
to [out=down, in=down, looseness=1.7](9, 0)
to (8,0)
to [out=down, in=down, looseness=1.7](3, 0);
\draw (8, 0) to [out=down, in=down, looseness=1.7](3, 0);
\draw (2, 0) to [out=down, in=down, looseness=1.7](9, 0);
\end{tikzpicture}
\end{aligned}
\end{matrix}
\end{align*}
\end{definition}

\begin{theorem}
\label{thm:bb84e91equivalent}
The equations for BB84 and E91 QKD are equivalent.
\end{theorem}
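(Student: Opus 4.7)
The plan is to observe that the right-hand sides of the two defining equations are literally identical — in both cases the output is $\Pd$ applied to the diagram involving $\psi$ plus $\Ps$ applied to the fully-connected diagram. So the theorem reduces to establishing equality of the two left-hand sides. Comparing the BB84 and E91 diagrams, both Bob's region and Eve's region are syntactically the same (choose basis, intercept, controlled measurement, copy, prepare counterfeit, then Bob's controlled measurement), and the outer basis-comparison region is identical. The only difference is in Alice's shaded region on the left. Thus the theorem reduces to a local identity, which I will call the \emph{Alice identity}: the 2-cell representing ``Alice chooses a random bit, copies it, chooses a random basis, and applies controlled preparation'' must equal the 2-cell representing ``an entangled state is created, Alice chooses a random basis, and applies controlled measurement to her half.''

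The Alice identity is the categorical expression of the standard quantum-informational fact that measuring half of a maximally entangled pair in basis $B$ is indistinguishable from preparing a uniformly random basis-element of $B$ together with a classical copy of the label. My strategy is to prove it by local topological manipulation. First, note that ``controlled preparation'' is nothing but the dagger of the controlled measurement vertex of Definition~\ref{def:controlledfamily}: flipping that vertex vertically exchanges the quantum input for a quantum output while keeping the blue basis input and turning the red output into a red input. With this identification, Alice's BB84 fragment becomes ``create red cap, apply red comultiplication (copy), feed one output into $(\text{controlled measurement})^\dagger$ with the blue basis''. Alice's E91 fragment is ``create quantum cup, apply controlled measurement with the blue basis to one leg''. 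The two can be interconverted by (i) using the dagger-Frobenius structure on the red region (axioms~\eqref{eq:topeq1}--\eqref{eq:topeq2}) to rewrite the copy of a freshly created cap as a \Y-shaped Frobenius element, (ii) bending the quantum wire through the quantum cup, and (iii) using the unitarity relation~\eqref{eq:controlledmeasurementunitarity} to absorb the resulting composite of a controlled measurement with its dagger. The result is exactly the other fragment.

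The main obstacle I anticipate is purely bookkeeping: keeping track of which region (red, blue, or quantum) each piece of the picture belongs to during the bending moves, and ensuring that the blue basis wire (which is shared up into the basis-comparison region) is not disturbed. Because the topological boundary axioms make both the blue and red regions into dagger-Frobenius algebras, and because we never have to duplicate or discard a blue wire during the manipulation, the sharing of the basis is preserved automatically. Once the Alice identity is established, the equivalence is immediate: either equation asserts that ``(common Bob-Eve-comparison diagram) composed with (Alice's BB84 fragment)'' equals the stated RHS, and the two Alice fragments are equal, so the equation holds in one direction if and only if it holds in the other.
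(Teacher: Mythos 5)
Your reduction is the right one, and it is essentially what the paper's one-line proof (``elementary topological manipulation'') asserts: the right-hand sides of Definitions~\ref{BB84QKD} and~\ref{E91QKD} are identical, so everything comes down to relating the two left-hand sides, which differ only in Alice's fragment. The problem is the mechanism you give for the local ``Alice identity''. If, as you state, the controlled preparation is the dagger $\zeta^\dagger$ of the measurement 2\-cell $\zeta$ and Eve's and Bob's fragments are held literally fixed, then the identity you need is false in \cat{2Hilb}: bending a leg through a cup implements a transpose, so it turns a dagger into the \emph{conjugate} (dagger composed with transpose, exactly as in the definition of the conjugate measurement~\eqref{eq:conjugatemeasurement}), not back into $\zeta$. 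Concretely, fixing the blue value so that $\zeta$ is a unitary $U$, the BB84 fragment is the state $\sum_a \ket{a}\otimes U^\dagger\ket{a}$, while measuring one leg of the canonical cup $\sum_j\ket{j}\ket{j}$ gives $\sum_a \ket{a}\otimes\overline{U^\dagger\ket{a}}$; these differ whenever the basis has complex phases (e.g.\ the $Y$ eigenbasis of a qubit). Your step (iii) cannot repair this: no composite $\zeta\circ\zeta^\dagger$ or $\zeta^\dagger\circ\zeta$ ever arises in the manipulation; what arises is $\zeta^\dagger$ paired against $\zeta_*$, and the unitarity relation~\eqref{eq:controlledmeasurementunitarity} says nothing about that.

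What actually makes the move ``elementary'' is the orientation/dual-object bookkeeping that the paper deliberately suppresses (``arrows indicating dual objects will generally be omitted''): the quantum cup in E91 puts the oppositely oriented wire on Eve's leg, so after the bend the downstream vertices are typed on the dual 1\-cell --- i.e.\ the rest of the diagram is not ``syntactically untouched'' in the sense you claim; equivalently, the white preparation vertex in BB84 must be read as the rotation (transpose) of the measurement vertex rather than its dagger. With that reading, the two left-hand sides are related by genuine isotopy together with the topological-boundary axioms~\eqref{eq:topeq1}--\eqref{eq:topeq2} (used to collapse Alice's cap-and-copy into a bent red boundary), and no appeal to unitarity is needed --- which is the paper's proof. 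So either track the wire orientations and accept that Alice's (or Eve's) vertex is re-typed under the move, or abandon the claim that Eve's part is literally unchanged; as written, your argument rests on an identity that fails for generic complementary families, and the invocation of~\eqref{eq:controlledmeasurementunitarity} is a red herring.
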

\begin{proof}
Elementary topological manipulation.
\end{proof}

\begin{lemma}[Eve's successful interference]
\label{Eve's successful interference}
On the support of projector $\Ps$, the quantum key distribution specification is satisfied for any controlled family of measurements.
\end{lemma}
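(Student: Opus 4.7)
The plan is to apply the projector $\Ps$ directly to both sides of the BB84 equation and verify the resulting identity using only unitarity of the controlled measurement together with the topological axioms for classical data; no complementarity property will be invoked.

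First I would reduce the right-hand side: using $\Ps \circ \Pd = 0$ the $\Pd$ summand vanishes, and $\Ps \circ \Ps = \Ps$ leaves only the "successful interference" diagram in which all blue basis regions are joined into a single connected component and all red result regions are similarly joined. Crucially, this surviving term does not involve the witness $\psi$ at all, so anything we can show about the left-hand side will suffice.

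Next I would simplify the left-hand side. Applying $\Ps$ forces the two blue regions it connects to carry the same value, and via the projector manipulation rules of Lemma~\ref{lem:projectorproperties} those regions can be merged into one. After merging, Alice's controlled preparation, Eve's controlled measurement, Eve's controlled re-preparation, and (after the basis-comparison cup propagates) Bob's controlled measurement are all governed by a single basis region. At this point I would invoke the unitarity equation~\eqref{eq:controlledmeasurementunitarity}: in the same basis, a controlled measurement composed with its adjoint controlled encoding is the identity on the quantum wire while faithfully copying the classical label. Applying this to the encode--decode--encode--decode sandwich that runs from Alice through Eve to Bob collapses the entire quantum line, and Alice's bit propagates as classical data through Eve's copy out to Bob's measurement boundary, so all three parties end up sharing the same red value.

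Finally I would use the topological axioms~\eqref{eq:topeq1}--\eqref{eq:topeq2} to normalize the resulting classical network and identify it with the nested-cup connectivity of the right-hand side. The main obstacle I foresee is purely topological bookkeeping: tracking which blue regions have been merged by $\Ps$, which red regions are identified by the chain of copy and compare operations, and verifying that after all contractions no stray cups or orientation mismatches prevent the final diagram from coinciding exactly with the successful-interference term. The analogous statement for E91 then follows by Theorem~\ref{thm:bb84e91equivalent}.
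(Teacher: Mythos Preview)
Your proposal is correct and follows essentially the same route as the paper: apply $\Ps$ to both sides, kill the $\Pd$ summand on the right, merge the blue regions on the left, and then use unitarity of the controlled measurement together with the topological axioms to collapse the encode--measure--encode--measure chain into the nested-cup configuration. The one step the paper makes explicit that you leave implicit is the vertex-colour change: in the BB84 diagram Eve's vertices are black (control from the right) while Alice's and Bob's are white (control from the left), so before equation~\eqref{eq:controlledmeasurementunitarity} can be applied uniformly you must invoke Definition~\ref{def:flipcontrol} to swing all controls onto the now-single blue region and turn the black vertices white. This is exactly the ``topological bookkeeping'' you anticipated, and once handled the argument is identical to the paper's.
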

\begin{proof}
See Appendix.
\end{proof}

\subsection{Quantum key distribution from a complementary family}

\def\ys{1}
\begin{lemma}
\label{lemma:Complementarity implies QKD}
If a controlled family of measurements is complementary, then it satisfies the quantum key distribution specification with:
\begin{align}
\Pd\,
\begin{aligned}
\begin{tikzpicture}[scale=0.35,thick,yscale=\ys]
\draw [fill=\fillB, draw=none, fill opacity=0.8] (-2,-2.5)
to (-0,-2.5)
to (-0,3.5)
to (-2, 3.5);
\draw (-0,3.5) to (-0, -2.5);
\draw [fill=\fillcomp, draw=none, fill opacity=0.8] (-3,-2.75)
to (-1,-2.75)
to (-1,3.25)
to (-3, 3.25);
\draw (-1,3.25) to (-1, -2.75);
\draw [fill=\fillcomp, draw=none, fill opacity=0.8] (3,-2.75)
to (1,-2.75)
to (1,3.25)
to (3, 3.25);
\draw (1,3.25) to (1, -2.75);
\draw [fill=\fillC, draw=none, fill opacity=0.8] (4,-3)
to (2,-3)
to (2,3)
to (4, 3);
\draw (2,3) to (2, -3);
\draw [fill=\fillC, draw=none, fill opacity=0.8] (-4,-3)
to (-2,-3)
to (-2,3)
to (-4, 3);
\draw (-2,3) to (-2, -3);
\node (p)[minimum width=65pt, minimum height=19pt,draw, fill=white, fill opacity=1, scale=0.8] at (0, 0) {$\psi$};
\node (c) [Vertex, scale=0.5, vertex colour=white] at ([xshift=-28.5pt, yshift=1pt]p.south) {};
\node (c) [Vertex, scale=0.5, vertex colour=white] at ([xshift=28.5pt, yshift=1pt]p.south) {};
\node (c) [Vertex, scale=0.5, vertex colour=white] at ([xshift=57pt, yshift=1pt]p.south){};
\node (c) [Vertex, scale=0.5, vertex colour=white] at ([xshift=0pt, yshift=1pt]p.south) {};
\node (c) [Vertex, scale=0.5, vertex colour=white] at ([xshift=-57pt, yshift=1pt]p.south) {};
\end{tikzpicture}
\end{aligned}
\quad=\quad
\Pd\,
\begin{aligned}
\begin{tikzpicture}[scale=0.35,thick,yscale=\ys]
\draw [fill=\fillB, draw=none, fill opacity=0.8] (-2,-2.5)
to (-0,-2.5)
to (-0,3.5)
to (-2, 3.5);
\draw (-0,3.5) to (-0, -2.5);
\draw [fill=\fillcomp, draw=none, fill opacity=0.8] (-3,-2.75)
to (-1,-2.75)
to (-1,3.25)
to (-3, 3.25);
\draw (-1,3.25) to (-1, -2.75);
\draw [fill=\fillcomp, draw=none, fill opacity=0.8] (3,-2.75)
to (1,-2.75)
to (1,3.25)
to (3, 3.25);
\draw (1,3.25) to (1, -2.75);
\draw [fill=\fillC, draw=none, fill opacity=0.8] (4,-3)
to (2,-3)
to (2,3)
to (4, 3);
\draw (2,3) to (2, -3);
\draw [fill=\fillC, draw=none, fill opacity=0.8] (-4,-3)
to (-2,-3)
to (-2,3)
to (-4, 3);
\draw (-2,3) to (-2, -3);
\node (p) [minimum width=65pt, minimum height=19pt,draw, fill=white, fill opacity=1, scale=0.8] at (0, 1.2) {$\phi$};
\node (q)[minimum width=65pt, minimum height=19pt, draw, fill=white, fill opacity=1, scale=0.8] at (0, -1.2) {$\phi^{\dagger}$};
\node (c) [Vertex, scale=0.5, vertex colour=white] at ([xshift=28.5pt, yshift=1pt]p.south) {};
\node (c) [Vertex, scale=0.5, vertex colour=white] at ([xshift=57pt, yshift=1pt]p.south) {};
\node (c) [Vertex, scale=0.5, vertex colour=white] at ([xshift=0pt, yshift=1pt]p.south) {};
\node (c) [Vertex, scale=0.5, vertex colour=white] at ([xshift=-57pt, yshift=1pt]p.south) {};
\node (c) [Vertex, scale=0.5, vertex colour=white] at ([xshift=28.5pt, yshift=1pt]q.south) {};
\node (c) [Vertex, scale=0.5, vertex colour=white] at ([xshift=57pt, yshift=1pt]q.south) {};
\node (c) [Vertex, scale=0.5, vertex colour=white] at ([xshift=-28.5pt, yshift=1pt]q.south) {};
\node (c) [Vertex, scale=0.5, vertex colour=white] at ([xshift=-57pt, yshift=1pt]q.south) {};
\end{tikzpicture}
\end{aligned}
\end{align}
\end{lemma}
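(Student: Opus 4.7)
The plan is to verify the QKD specification (Definition~\ref{BB84QKD}) directly with the candidate $\psi$ given in the statement. By Lemma~\ref{Eve's successful interference}, the $\Ps$ component of the QKD equation is satisfied by \emph{any} controlled family of measurements, regardless of the choice of $\psi$, so it suffices to establish the equation on the support of $\Pd$.

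On the $\Pd$ support I would proceed by starting from the LHS of the QKD equation (the full protocol diagram) and identifying the two subdiagrams in which an encoding vertex meets a measurement vertex: the interaction between Alice's controlled preparation and Eve's controlled measurement, and the interaction between Eve's re-encoding and Bob's controlled measurement. Each of these is a local configuration of the form ``measure-copy-encode-measure'' as it appears on the LHS of the complementarity equation~\eqref{eq:complementaryfamily}, together with a blue region constrained (directly or through the Alice-Bob comparison cup at the top of the protocol, combined with $\Pd$) to carry distinct basis labels on its two halves. I would then apply the complementarity equation to the Alice-Eve pair, and its horizontal reflection from Lemma~\ref{Complementarity condition under horizontal reflection} to the Eve-Bob pair. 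The first application converts one encode-measure subdiagram into a pair of independent random measurement outcomes decorated with a controlled phase $\phi$; the second does the same with $\phi^{\dagger}$, attached to a different but overlapping quadruple of boundary regions.

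After both applications, all quantum lines have been broken, leaving a purely classical diagram in which Alice's bit, Alice's basis, Eve's basis, Eve's measurement, and Bob's measurement appear as five independent uniformly-random classical regions, decorated with the composite $\phi$ above and $\phi^{\dagger}$ below. A sequence of topological simplifications using the axioms~\eqref{eq:topeq1}--\eqref{eq:topeq2} together with the projector manipulations from Lemma~\ref{lem:projectorproperties} then rearranges this residual diagram into exactly the configuration of the right-hand side displayed in the statement: a classical five-region pattern of random data sandwiched between $\phi$ and $\phi^{\dagger}$, which is the claimed $\psi$.

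The main obstacle is bookkeeping rather than conceptual. I would need to check carefully which four regions each of $\phi$ and $\phi^{\dagger}$ attaches to after the two applications of complementarity, and verify that these sets of attachment points coincide with those drawn in the statement (the top $\phi$ skipping the ``middle'' region, the bottom $\phi^{\dagger}$ skipping a different one). A secondary subtlety is ensuring that the factor $\tfrac{1}{n}$ arising on the right-hand side of each use of~\eqref{eq:complementaryfamily} lines up with the normalizations hidden in the uniform-classical-data generators of the QKD right-hand side, and that the single $\Pd$ projector is sufficient to enforce the basis mismatches that power both applications of complementarity, exploiting the Alice-Bob comparison cup already present in the protocol diagram.
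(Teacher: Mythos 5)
Your proposal follows essentially the same route as the paper's own proof: the $\Ps$ component is discharged by Lemma~\ref{Eve's successful interference}, and on the support of $\Pd$ the two encode/measure junctions are resolved by one application of equation~\eqref{eq:complementaryfamily} and one application of its reflected form from Lemma~\ref{Complementarity condition under horizontal reflection}, after which topological rearrangement yields the $\phi$/$\phi^{\dagger}$ sandwich claimed for $\psi$. The only step your sketch glosses over is that the intermediate cancellations also invoke the unitarity of the controlled measurement vertex, equation~\eqref{eq:controlledmeasurementunitarity}, rather than purely topological moves, but this does not change the substance of the argument.
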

\begin{proof}
See Appendix.
\end{proof} 

\subsection{A complementary family from quantum key distribution}

\begin{lemma}
\label{lemma:QKDid}
If a controlled family of measurements allows quantum key distribution with a phase $\psi$, then:
\begin{equation}
\label{eq: QKDid}
\alpha^{\dagger}\circ\alpha\quad=\quad
\Pd\,
\begin{aligned}
\begin{tikzpicture}[scale=0.35, thick,yscale=\ys]
\node (b) [Vertex, scale=\vertexsize,  vertex colour=white] at (1.5, 2.5) {};
\node (c) [Vertex, scale=\vertexsize] at (1.5, 3.50) {};
\node (d) [Vertex, scale=\vertexsize] at (1.5, 6.5) {};
\node (e) [Vertex, scale=\vertexsize, vertex colour=white] at (1.5, 7.5) {};
\draw [fill=\fillcomp, fill opacity=0.8, draw=none] 
    (2,1)
    to(2,2)
    to [out=up, in=up, looseness=2](1,2)
    to [out=down, in=down, looseness=2](0,2)
    to (0,8)
    to [out=up, in=up, looseness=2](1,8)
    to [out=down, in=down, looseness=2](2,8)
    to (2,9)
    to (-2,9)
    to (-2,1);   
\draw (2,1)
    to(2,2)
    to [out=up, in=up, looseness=2](1,2)
    to [out=down, in=down, looseness=2](0,2)
    to (0,8)
    to [out=up, in=up, looseness=2](1,8)
    to [out=down, in=down, looseness=2](2,8)
    to (2,9);   
\draw (b.center) to [out=up, in=down] (c.center); 
\draw (d.center) to [out=up, in=down] (e.center);   
\draw [fill=\fillcomp, fill opacity=0.8, draw=none] (5,9)
    to (3, 9)
    to (3,6)
    to [out=down, in=down, looseness=2](2,6)
    to [out=up, in=up, looseness=2](1,6)
    to (1,5) 
    to (1,4)
    to [out=down, in=down, looseness=2](2,4) 
    to [out=up, in=up, looseness=2](3,4)
    to (3,1)
    to (5,1)  ;
\draw(3, 9)
    to (3,6)
    to [out=down, in=down, looseness=2](2,6)
    to [out=up, in=up, looseness=2](1,6)
    to (1,5) 
    to (1,4)
    to [out=down, in=down, looseness=2](2,4) 
    to [out=up, in=up, looseness=2](3,4)
    to (3,1) ;
\draw [fill=\fillC, draw=none, fill opacity=0.8] (-2.5,8.75) to (-1,8.75) 
        to [out=down, in=left](0.25,7.5) 
        to [out=left, in=up] (-1, 6)
        to (-1.0,4.0)       
        to [out=down, in=left] (0.25, 2.5)
        to [out=left, in=up](-1.0, 1)
        to (-1.0, 0.75)
        to (-2.5,0.75);   
\draw (-1,8.75) 
        to [out=down, in=left](0.25,7.5) 
        to [out=left, in=up] (-1, 6)
        to (-1.0,4.0)       
        to [out=down, in=left] (0.25, 2.5)
        to [out=left, in=up](-1.0, 1)
        to (-1.0, 1) to (-1, 0.75);    
\draw (0.25, 2.5) to (1.5, 2.5);
\draw (0.25, 7.5) to (1.5, 7.5);                
\draw [fill=\fillC, draw=none, fill opacity=0.8] (5.5, 8.75) to (4,8.75)
    to (4,8) 
    to [out=down, in=right](2.75,6.5)
    to [out=right, in=up](4,5)
    to [out=down, in=right] (2.75, 3.5)
    to [out=right, in=up] (4.0, 2)
    to (4.0, 0.75)
    to (5.5, 0.75);
\draw(4,8.75)
    to (4,8) 
    to [out=down, in=right](2.75,6.5)
    to [out=right, in=up](4,5)
    to [out=down, in=right] (2.75, 3.5)
    to [out=right, in=up] (4.0, 2)
    to (4.0, 0.75);          
\draw (2.75,3.5) to (1.5,3.5);
\draw (2.75,6.5) to (1.5,6.5);
\end{tikzpicture}
\end{aligned}
\quad=\quad
\Pd\,
\begin{aligned}
\begin{tikzpicture}[scale=0.35,thick,yscale=\ys]
\draw [fill=\fillcomp, draw=none, fill opacity=0.8] (3.25,-3.75)
to (1.25,-3.75)
to (1.25,4.25)
to (3.25, 4.25);
\draw (1.25,4.25) to (1.25, -3.75);
\draw [fill=\fillcomp, draw=none, fill opacity=0.8] (-3.25,-3.75)
to (-1.25,-3.75)
to (-1.25,1.25)
to [out=up, in=up, looseness=2](-0.375,1.25)
to (-0.375,-1)
to [out=down, in=down, looseness=2](0.5,-1)
to (0.5,4.25)
to (-2,4.25)
to (-3.25, 4.25);
\draw (-1.25,-3.75) 
to (-1.25,1.25)
to [out=up, in=up, looseness=2](-0.375,1.25)
to (-0.375,-1)
to [out=down, in=down, looseness=2](0.5,-1)
to (0.5,4.25);
\draw [fill=\fillC, draw=none, fill opacity=0.8] (4,-4)
to (2,-4)
to (2,4)
to (4, 4);
\draw (2,4) to (2, -4);
\draw [fill=\fillC, draw=none, fill opacity=0.8] (-4,-4)
to (-2,-4)
to (-2,4)
to (-4, 4);
\draw (-2,4) to (-2, -4);
\node (p) [minimum width=65pt, draw, fill=white, fill opacity=1, scale=0.8] at (0, 0) {$\psi$};
\node (c) [Vertex, scale=0.5, vertex colour=white] at ([xshift=35.5pt, yshift=1pt]p.south) {};
\node (c) [Vertex, scale=0.5, vertex colour=white] at ([xshift=57pt, yshift=1pt]p.south) {};
\node (c) [Vertex, scale=0.5, vertex colour=white] at ([xshift=-57pt, yshift=1pt]p.south) {};
\node (c) [Vertex, scale=0.5, vertex colour=white] at ([xshift=-35.5pt, yshift=1pt]p.south) {};
\node (c) [Vertex, scale=0.5, vertex colour=white] at ([xshift=14pt, yshift=1pt]p.south) {};
\end{tikzpicture}
\end{aligned}
\quad=\quad
\Pd\,
\begin{aligned}
\begin{tikzpicture}[scale=0.35,thick,yscale=\ys]
\draw [fill=\fillcomp, draw=none, fill opacity=0.8] (-2,-3.75)
to (0,-3.75)
to (0,4.25)
to (-2, 4.25);
\draw (0,4.25) to (0, -3.75);
\draw [fill=\fillcomp, draw=none, fill opacity=0.8] (3,-3.75)
to (1,-3.75)
to (1,4.25)
to (3, 4.25);
\draw (1,4.25) to (1, -3.75);
\draw [fill=\fillC, draw=none, fill opacity=0.8] (-3,-4)
to (-1,-4)
to (-1,4)
to (-3, 4);
\draw (-1,4) to (-1, -4);
\draw [fill=\fillC, draw=none, fill opacity=0.8] (4,-4)
to (2,-4)
to (2,4)
to (4, 4);
\draw (2,4) to (2, -4);
\end{tikzpicture}
\end{aligned}
\end{equation}
\end{lemma}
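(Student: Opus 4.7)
The plan is to chain three equalities on the $\Pd$-support: first expand $\alpha^{\dagger}\circ\alpha$ graphically into the first pictured diagram, then apply the QKD hypothesis to replace the resulting inner scenario with the second ($\psi$-based) diagram, and finally reduce that to the identity-like third diagram using unitarity of $\psi$ together with topological manipulation.

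For the first equality I would take the definition of $\alpha$ from Lemma~\ref{lemma:Complementarity through unitarity} and form $\alpha^{\dagger}\circ\alpha$ by stacking $\alpha$ with its vertical reflection. Since $\alpha$ consists of a white vertex (encoding in the left basis) followed by a black vertex (measurement in the right basis), the composition naturally yields four vertices with the white-black-black-white colour pattern appearing in the first pictured diagram. Smoothing the intermediate red and blue regions using the topological boundary axioms~\eqref{eq:topeq1}--\eqref{eq:topeq2} produces that diagram exactly.

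For the second equality, I observe that the white-black-black-white vertex pattern matches precisely the vertex pattern of Definition~\ref{BB84QKD}: white for Alice's controlled preparation, black for Eve's controlled measurement, black for Eve's re-preparation, and white for Bob's controlled measurement. After a routine topological rearrangement aligning the outer blue and red boundaries with those of the QKD scenario, and using Theorem~\ref{thm:bb84e91equivalent} if it is more convenient to match the E91 form, the QKD hypothesis can be applied as a rewrite. This substitutes the central portion by $\Pd\cdot(\psi\text{-diagram})+\Ps\cdot(\text{trivial diagram})$; since we are on the $\Pd$-support, the $\Ps$-summand vanishes by $\Pd\circ\Ps=0$, yielding the second pictured diagram with its characteristic looped left blue region.

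For the third equality I expect to use the unitarity of $\psi$ together with a topological move. The loop in the left blue region of the second diagram, once reorganised via Lemma~\ref{lem:projectorproperties} and the ambient symmetric monoidal 2-category structure, should pair two of $\psi$'s boundary attachments so that $\psi$ becomes adjacent to $\psi^{\dagger}$ (through the mirrored half of the original $\alpha^{\dagger}\alpha$ structure), at which point $\psi^{\dagger}\circ\psi=\id$ eliminates the box. The main obstacle will be precisely this step: $\psi$ carries five boundary attachments arranged in a nontrivial topology, and one must verify that the $\Pd$-constraint really induces a genuine $\psi^{\dagger}\psi$ cancellation rather than leaving a residual scalar phase or entangling the red regions in an unwanted way. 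Once $\psi$ has been eliminated, the diagram collapses to four parallel strips, matching the third pictured diagram.
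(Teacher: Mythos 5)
Your first two steps are essentially the paper's argument read in the opposite direction: the paper post-selects the BB84 specification by applying $\Pd$ to the basis pools and the comparison operation to Alice's and Bob's key pools, and then deforms topologically, which simultaneously identifies the left-hand side with $\alpha^{\dagger}\circ\alpha$ (your white--black--black--white pattern) and the right-hand side with the single-$\psi$ diagram, the $\Ps$ branch being killed by $\Pd\circ\Ps=0$. Up to that point your proposal is sound.

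The genuine gap is your third equality. After the QKD hypothesis has been applied, the middle diagram contains exactly one occurrence of $\psi$ and no $\psi^{\dagger}$ anywhere, so there is nothing for $\psi^{\dagger}\circ\psi=\id$ to act on; and you cannot instead rewrite only ``half'' of $\alpha^{\dagger}\circ\alpha$ to manufacture a $\psi^{\dagger}$, because the hypothesis of Definition~\ref{BB84QKD} applies to the full four-vertex preparation--measurement--repreparation--measurement scenario, not to $\alpha$ alone. So the cancellation you flag as the main obstacle is not merely delicate, it is unavailable: no amount of topological reorganisation via Lemma~\ref{lem:projectorproperties} will eliminate a single unitary box. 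The paper closes the step by a different, non-diagrammatic observation: on the $\Pd$-support the middle 2\-cell is unitary (being $\psi$ dressed by invertible topological data), while $\alpha^{\dagger}\circ\alpha$ is manifestly positive, and the only positive unitary is the identity. Some argument of this kind is genuinely needed; indeed the conclusion constrains $\psi$ itself (on the $\Pd$-support it must factor as $\phi\circ\phi^{\dagger}$, as the paper notes afterwards), so a purely generic graphical cancellation for arbitrary unitary $\psi$ could not have sufficed. Replace your third step with the positivity-plus-unitarity argument (or an equivalent) and the proof is complete.
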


\begin{proof}
See Appendix.
\end{proof}

\begin{lemma}
\label{lemma:QKD implies complementarity}
If a controlled family of measurements allows quantum key distribution, then the family is complementary.
\end{lemma}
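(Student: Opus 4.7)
The plan is to use Lemma~\ref{lemma:Complementarity through unitarity}, which states that the family is complementary precisely when the 2\-cell $\alpha$ is unitary on the support of $\Pd$. Unitarity on $\mathrm{supp}(\Pd)$ unpacks into two separate equalities, $\alpha^\dagger \circ \alpha = \Pd \cdot \id$ and $\alpha \circ \alpha^\dagger = \Pd \cdot \id$, so my task reduces to deriving both of these from the QKD specification.

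First I would simply quote Lemma~\ref{lemma:QKDid}: it already states precisely the chain $\alpha^\dagger \circ \alpha = \Pd \cdot \id$, with the middle diagram containing the QKD phase $\psi$ bridging the two. For the second equality $\alpha \circ \alpha^\dagger = \Pd \cdot \id$, I would rerun the same argument starting instead from the dagger of the QKD defining equation. Since $\psi$ is required to be unitary, applying $(-)^\dagger$ to both sides of Definition~\ref{BB84QKD} gives a QKD-style equation of exactly the same shape, but with $\psi^\dagger$ in place of $\psi$ and with the orientation of all measure/encode vertices reflected vertically; the $\Pd$ and $\Ps$ summands on the right-hand side are each self-adjoint (both are built entirely from classical copy/compare structure on the blue and red regions, which is symmetric under the dagger by the topological axioms~\eqref{eq:topeq1}--\eqref{eq:topeq2}). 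Substituting this reflected QKD equation into $\alpha \circ \alpha^\dagger$ and following the same topological manipulations as in Lemma~\ref{lemma:QKDid} produces the desired identity. Combining both directions and invoking Lemma~\ref{lemma:Complementarity through unitarity} delivers complementarity, with a witness phase $\phi$ readable off from $\psi$.

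The main obstacle is verifying that the reflected argument for $\alpha \circ \alpha^\dagger$ really goes through verbatim. The two composites $\alpha^\dagger \circ \alpha$ and $\alpha \circ \alpha^\dagger$ differ in how the blue classical regions (which choose the bases) are attached at the top and bottom of the diagram, and one must check that the cancellations used in Lemma~\ref{lemma:QKDid}---namely the folding of the measure/encode pairs using the unitarity relation~\eqref{eq:controlledmeasurementunitarity} and the absorption of the $\Ps$ summand via Lemma~\ref{Eve's successful interference}---remain valid after reflection. A secondary bookkeeping point is the side on which the classical control attaches: Definition~\ref{def:flipcontrol} and Lemma~\ref{Complementarity condition under horizontal reflection} guarantee that flipping the side of control costs only a transpose, which is innocuous for a unitary. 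Once these topological checks are in place the two unitarity conditions follow, and the lemma is established.
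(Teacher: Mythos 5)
Your core argument is the paper's own: combine Lemma~\ref{lemma:QKDid} with Lemma~\ref{lemma:Complementarity through unitarity}, and that half is exactly what the paper does. The divergence is your treatment of the second composite $\alpha\circ\alpha^\dagger$, and there the proposed route does not work as described. Applying $(-)^\dagger$ to the defining equation of Definition~\ref{BB84QKD} produces an equation \emph{equivalent} to the original, and attaching the (daggered) post-selection of Lemma~\ref{lemma:QKDid} to the reflected protocol --- whose open classical pools now sit at the bottom --- is literally the dagger of the derivation already carried out there. So you recover $(\alpha^\dagger\circ\alpha)^\dagger=\alpha^\dagger\circ\alpha=\Pd\cdot\id$ once more, not $\alpha\circ\alpha^\dagger=\Pd\cdot\id$; ``rerunning the same manipulations'' on the daggered equation cannot yield new information, and extracting the other composite would require post-selecting on genuinely different pools of classical data. (Incidentally, the $\Ps$ summand in Lemma~\ref{lemma:QKDid} is discarded simply because $\Pd\circ\Ps=0$, not via Lemma~\ref{Eve's successful interference}.)

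That said, the gap you were trying to close is one the paper itself leaves implicit, and it is closed more cheaply. The 2\-cell $\alpha$ has the same source and target 1\-morphism and commutes with the blue classical data, so in \cat{2Hilb} it decomposes into square blocks $\alpha_{ab}$ indexed by the two basis choices; the single relation $\alpha^\dagger\circ\alpha=\Pd\cdot\id$ from Lemma~\ref{lemma:QKDid} says each block with $a\neq b$ is an isometry between finite-dimensional spaces of equal dimension, hence unitary, so $\alpha\circ\alpha^\dagger=\Pd\cdot\id$ comes for free. This is evidently what the paper relies on when it passes directly from Lemma~\ref{lemma:QKDid} to ``the following map is unitary''. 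With that observation (or any correct independent derivation of $\alpha\circ\alpha^\dagger=\Pd\cdot\id$) in place of your reflection argument, your appeal to Lemma~\ref{lemma:Complementarity through unitarity} finishes the proof exactly as in the paper.
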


\begin{proof}
By Lemma~\ref{lemma:QKDid} the following map is unitary:
\begin{align*}
\Pd\,
\begin{aligned}
\begin{tikzpicture}[scale=0.35, thick]
\node (b) [Vertex, scale=\vertexsize, vertex colour=white] at (1.5, 2.25) {};
\node (c) [Vertex, scale=\vertexsize] at (1.5, 3.25) {};
\draw [fill=\fillcomp, fill opacity=0.8, draw=none] (-1,5.25)
    to (-1,0.5)
    to  (2,0.5)
    to(2,1.75)
    to [out=up, in=\seangle] (b.center)
    to [out=\swangle, in=up] +(-0.5,-0.5)
    to [out=down, in=down, looseness=2] +(-1,0)
    to (0.0,5.25);
    \draw  (2,0.5)
    to(2,1.75)
    to [out=up, in=\seangle] (b.center)
    to [out=\swangle, in=up] +(-0.5,-0.5)
    to [out=down, in=down, looseness=2] +(-1,0)
    to (0.0,5.25);
\draw (b.center)
    to [out=up, in=down] (c.center);   
\draw [fill=\fillcomp, fill opacity=0.8, draw=none] (1,5.25) to (1,3.75)
    to [out=down, in=\nwangle] (c.center)
    to [out=\neangle, in=down] +(0.5,0.5)
    to (2,4)
    to [out=up, in=left](2.5,4.5)
    to [out=right, in=up](3,4)
    to (3,0.5)
    to(4,0.5)
    to (4,5.25);
 \draw  (1,5.25) to (1,3.75)
    to [out=down, in=\nwangle] (c.center)
    to [out=\neangle, in=down] +(0.5,0.5)
    to (2,4)
    to [out=up, in=left](2.5,4.5)
    to [out=right, in=up](3,4)
    to (3,0.5)   ;
\draw [fill=\fillC, draw=none, fill opacity=0.8] (-1.5,5)
        to (-0.5,5)
        to (-0.5, 3.25)
        to [out=down, in=left] (0.75,2.25)
        to [out=left, in=up](-0.5, 1.25)
        to (-0.5, 0.25)
        to (-1.5,0.25);  
\draw  (-0.5,5)
        to (-0.5, 3.25)
        to [out=down, in=left] (0.75,2.25)
        to [out=left, in=up](-0.5, 1.25)
        to (-0.5, 0.25);              
\draw  (0.75,2.25) to (1.5,2.25);   
\draw [fill=\fillC, draw=none, fill opacity=0.8] (4.5, 5)
    to  (3.5, 5)
    to (3.5, 4.25)
    to [out=down, in=right] (2.25,3.25)
    to [out=right, in=up] (3.5, 2.25)
    to (3.5, 0.25)
    to (4.5, 0.25);
\draw (3.5, 5)
    to (3.5, 4.25)
    to [out=down, in=right] (2.25,3.25)
    to [out=right, in=up] (3.5, 2.25)
    to (3.5, 0.25);
    \draw (2.25,3.25) to (1.5,3.25);
\end{tikzpicture}
\end{aligned}
\end{align*}
By Lemma~\ref{lemma:Complementarity through unitarity}, we can conclude that the controlled family of measurements is complementary.
\end{proof}

\begin{theorem}
\label{thm:equivalent}
A controlled family of measurements satisfies quantum key distribution if and only if it is complementary.
\end{theorem}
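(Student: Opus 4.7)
The plan is to prove the theorem as a direct consequence of the two preceding lemmas, since the biconditional has essentially already been decomposed into its two implications. I would simply note that the theorem combines Lemma~\ref{lemma:Complementarity implies QKD} and Lemma~\ref{lemma:QKD implies complementarity}, and spell out how each direction follows.

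For the forward direction (complementary $\Rightarrow$ QKD), I would invoke Lemma~\ref{lemma:Complementarity implies QKD} directly: given a complementary family with associated unitary $\phi$, that lemma provides an explicit construction of a unitary 2-cell $\psi$ built from $\phi$ and $\phi^\dagger$ which witnesses the QKD specification on the support of $\Pd$. To complete the QKD equation on the support of $\Ps$, I would invoke Lemma~\ref{Eve's successful interference}, which shows the QKD specification is automatically satisfied on $\Ps$ for \emph{any} controlled family of measurements. Since $\Ps + \Pd = \id$, combining the two cases yields the full QKD equation.

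For the reverse direction (QKD $\Rightarrow$ complementary), I would appeal to Lemma~\ref{lemma:QKD implies complementarity} directly. The key step in that lemma is Lemma~\ref{lemma:QKDid}, which shows that $\alpha^\dagger \circ \alpha = \Pd \cdot \id$ on the support of $\Pd$, so that $\alpha$ is unitary on that support. By Lemma~\ref{lemma:Complementarity through unitarity}, this unitarity is precisely the characterization of complementarity we need.

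The main obstacle, in principle, is that the theorem is not really doing new work beyond the two supporting lemmas — all the substantive content was already pushed into those lemmas (and ultimately into Lemmas~\ref{lemma:Complementarity through unitarity} and~\ref{Eve's successful interference}, plus the explicit construction in Lemma~\ref{lemma:QKDid}). Therefore, for this particular statement the proof is essentially a bookkeeping exercise: quote the two implications and observe that together they give the biconditional. I would keep the proof to two or three sentences.
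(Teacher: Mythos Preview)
Your proposal is correct and matches the paper's own proof, which simply states that the result is immediate from Lemmas~\ref{lemma:Complementarity implies QKD} and~\ref{lemma:QKD implies complementarity}. Your additional remarks about Lemma~\ref{Eve's successful interference} and Lemma~\ref{lemma:QKDid} are accurate but already absorbed into the proofs of those two lemmas, so they are not needed at this level.
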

\begin{proof}
Immediate by Lemmas~\ref{lemma:Complementarity implies QKD} and \ref{lemma:QKD implies complementarity}.
\end{proof}

\begin{lemma}
If a controlled family of measurements allows quantum key distribution with phase $\psi$, then we can decompose $\psi$ in the following way:
\begin{align}
\Pd
\begin{aligned}
\begin{tikzpicture}[scale=0.35,thick,yscale=\ys]
\draw [fill=\fillB, draw=none, fill opacity=0.8] (-2,-2.5)
to (-0,-2.5)
to (-0,3.5)
to (-2, 3.5);
\draw (-0,3.5) to (-0, -2.5);
\draw [fill=\fillcomp, draw=none, fill opacity=0.8] (-3,-2.75)
to (-1,-2.75)
to (-1,3.25)
to (-3, 3.25);
\draw (-1,3.25) to (-1, -2.75);
\draw [fill=\fillcomp, draw=none, fill opacity=0.8] (3,-2.75)
to (1,-2.75)
to (1,3.25)
to (3, 3.25);
\draw (1,3.25) to (1, -2.75);
\draw [fill=\fillC, draw=none, fill opacity=0.8] (4,-3)
to (2,-3)
to (2,3)
to (4, 3);
\draw (2,3) to (2, -3);
\draw [fill=\fillC, draw=none, fill opacity=0.8] (-4,-3)
to (-2,-3)
to (-2,3)
to (-4, 3);
\draw (-2,3) to (-2, -3);
\node (p)[minimum width=65pt, minimum height=19pt,draw, fill=white, fill opacity=1, scale=0.8] at (0, 0) {$\psi$};
\node (c) [Vertex, scale=0.5, vertex colour=white] at ([xshift=-28.5pt, yshift=1pt]p.south) {};
\node (c) [Vertex, scale=0.5, vertex colour=white] at ([xshift=28.5pt, yshift=1pt]p.south) {};
\node (c) [Vertex, scale=0.5, vertex colour=white] at ([xshift=57pt, yshift=1pt]p.south){};
\node (c) [Vertex, scale=0.5, vertex colour=white] at ([xshift=0pt, yshift=1pt]p.south) {};
\node (c) [Vertex, scale=0.5, vertex colour=white] at ([xshift=-57pt, yshift=1pt]p.south) {};
\end{tikzpicture}
\end{aligned}
\quad=\quad
\Pd
\begin{aligned}
\begin{tikzpicture}[scale=0.35,thick,yscale=\ys]
\draw [fill=\fillB, draw=none, fill opacity=0.8] (-2,-2.5)
to (-0,-2.5)
to (-0,3.5)
to (-2, 3.5);
\draw (-0,3.5) to (-0, -2.5);
\draw [fill=\fillcomp, draw=none, fill opacity=0.8] (-3,-2.75)
to (-1,-2.75)
to (-1,3.25)
to (-3, 3.25);
\draw (-1,3.25) to (-1, -2.75);
\draw [fill=\fillcomp, draw=none, fill opacity=0.8] (3,-2.75)
to (1,-2.75)
to (1,3.25)
to (3, 3.25);
\draw (1,3.25) to (1, -2.75);
\draw [fill=\fillC, draw=none, fill opacity=0.8] (4,-3)
to (2,-3)
to (2,3)
to (4, 3);
\draw (2,3) to (2, -3);
\draw [fill=\fillC, draw=none, fill opacity=0.8] (-4,-3)
to (-2,-3)
to (-2,3)
to (-4, 3);
\draw (-2,3) to (-2, -3);
\node (p) [minimum width=65pt, minimum height=19pt,draw, fill=white, fill opacity=1, scale=0.8] at (0, 1.2) {$\phi$};
\node (q)[minimum width=65pt, minimum height=19pt, draw, fill=white, fill opacity=1, scale=0.8] at (0, -1.2) {$\phi^{\dagger}$};
\node (c) [Vertex, scale=0.5, vertex colour=white] at ([xshift=28.5pt, yshift=1pt]p.south) {};
\node (c) [Vertex, scale=0.5, vertex colour=white] at ([xshift=57pt, yshift=1pt]p.south) {};
\node (c) [Vertex, scale=0.5, vertex colour=white] at ([xshift=0pt, yshift=1pt]p.south) {};
\node (c) [Vertex, scale=0.5, vertex colour=white] at ([xshift=-57pt, yshift=1pt]p.south) {};
\node (c) [Vertex, scale=0.5, vertex colour=white] at ([xshift=28.5pt, yshift=1pt]q.south) {};
\node (c) [Vertex, scale=0.5, vertex colour=white] at ([xshift=57pt, yshift=1pt]q.south) {};
\node (c) [Vertex, scale=0.5, vertex colour=white] at ([xshift=-28.5pt, yshift=1pt]q.south) {};
\node (c) [Vertex, scale=0.5, vertex colour=white] at ([xshift=-57pt, yshift=1pt]q.south) {};
\end{tikzpicture}
\end{aligned}
\end{align}
\end{lemma}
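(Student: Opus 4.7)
The plan is to combine the forward and reverse directions of Theorem~\ref{thm:equivalent} with a uniqueness argument for the QKD phase. By hypothesis, QKD holds with phase $\psi$. Lemma~\ref{lemma:QKD implies complementarity} then produces a phase $\phi$ for which the complementarity equation~\eqref{eq:complementaryfamily} holds. Applying Lemma~\ref{lemma:Complementarity implies QKD} to this $\phi$ yields a valid QKD phase, which is exactly $\phi \circ \phi^{\dagger}$ on the support of $\Pd$, matching the right-hand side of the equation to prove.

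The remaining step is to show that the QKD specification uniquely determines its phase on the support of $\Pd$, so that any two phases satisfying it must agree. The strategy is to invert the QKD equation graphically: compose both sides with the adjoints of the controlled measurement and preparation vertices (available because of the unitarity equations~\eqref{eq:controlledmeasurementunitarity}), then use the topological Frobenius structure on classical regions together with the create/delete and copy/compare morphisms to cap off everything in the left-hand diagram that does not touch $\psi$. On the support of $\Pd$, this reduction leaves $\psi$ on one side and a specific functional $F$ of the underlying controlled family of measurements on the other, with $F$ independent of the particular phase. Thus both $\psi$ and $\phi \circ \phi^{\dagger}$ equal $F$ on the support of $\Pd$, which gives the desired decomposition.

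An alternative, perhaps more direct, route is to use Lemma~\ref{lemma:QKDid}: the composite $\alpha^{\dagger}\circ\alpha$ equals $\Pd\cdot\id$ on the one hand, and on the other it can be rewritten (using the QKD equation) as an expression in which $\psi$ is sandwiched between measurement/encoding pairs. Pushing the complementarity phase $\phi$ (which exists by Lemma~\ref{lemma:Complementarity through unitarity} applied to the unitary $\alpha$) through this expression using topological manipulations converts the $\psi$ sandwich into the vertical composite $\phi\circ\phi^{\dagger}$.

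The main obstacle will be carrying out the graphical inversion cleanly. The QKD diagram is large, with five classical boundaries and several measurement vertices entangled through a shared quantum wire, and one must work carefully throughout on the support of $\Pd$, since both the complementarity equation and the QKD equation are only asserted modulo this projector. In particular, the cancellations that rely on the comparison of Eve's basis with Alice's and Bob's must be kept compatible with $\Pd$ at every step, so that the $\Ps$ branch of the QKD equation never enters the argument.
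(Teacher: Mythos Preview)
Your plan is essentially the paper's proof, which simply says ``Immediate by Lemmas~\ref{lemma:Complementarity implies QKD} and~\ref{lemma:QKD implies complementarity}.'' You correctly identify the one point the paper leaves implicit: having produced a second QKD phase $\phi\circ\phi^{\dagger}$ from the complementarity side, one must know that the QKD specification determines its phase uniquely on the support of~$\Pd$. That uniqueness is indeed required, and your sketch of it is sound and need not be as laborious as your fourth paragraph suggests: the left-hand side of the QKD equation is independent of the phase, and the $\Ps$ summand on the right is too, so two valid phases give equal $\Pd$-summands; in that summand the phase sits above creation cups with all boundaries open at the top, and the snake equations for the topological boundary let you bend those outputs down to cancel the cups, recovering $\Pd\cdot\psi$ directly. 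Your alternative route via Lemma~\ref{lemma:QKDid} is unnecessary here and the ``push $\phi$ through'' step is too vague to stand as an argument; stick with the first plan.
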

\begin{proof}
Immediate by Lemmas~\ref{lemma:Complementarity implies QKD} and \ref{lemma:QKD implies complementarity}.
\end{proof}

\section{The Mean King problem}
\label{sec:meanking}

The Mean King problem is defined as follows~\cite{Aharonov1987,Gothic}. There are two agents, Alice and the King, who take part in the following procedure.
\begin{enumerate}
\item Alice hands a quantum state to the King.
\item The King measures the state in one of $n$ mutually unbiased bases, keeping both the basis and the outcome secret, and returns the state to Alice. 
\item Alice performs any quantum measurement she wishes.
\item The King reveals his measurement basis to Alice.
\item Using only classical processes, Alice must calculate the King's earlier measurement outcome.
\end{enumerate}
With some thought, it becomes clear that for Alice to have the best chance of succeeding, she should retain an entangled partner to the system initially passed to the King. Alice should then apply a predetermined measurement procedure to the entangled state that will reveal the King's measurement result every time, regardless of his basis choice. In other words, she should perform some nondegenerate PVM $\mu$ on both systems in step 3, and prepare a lookup table $f$ that tells her, depending on the King's measurement basis choice and her own measurement result, what King's result was.

The key results of this Section is a graphical definition of a solution of the Mean King problem, and a graphical proof of the correctness of Klappenecker and R\"otteler's solution~\cite{Gothic} to the Mean King problem.

\subsection{Abstract definition}

We begin with an abstract definition of the Mean King problem. Recall that in categorical quantum mechanics, a \emph{classical function} is defined as a morphism between classical data which satisfies the comonoid homomorphism property, and that regions with topological boundary carry a canonical comonoid structure.
\def\mdot{\ensuremath{\begin{tikzpicture}\node [Vertex, scale=1.0] at (0,0) {};\end{tikzpicture}}}
\begin{definition}[Mean King scheme]
\label{Def:Mean king problem scheme}
Given a complementary family of measurements $\mdot$, a bipartite measurement $\mu$, and a classical function $f$, a \textit{Mean King scheme} $\mathrm{MK}_{\mdot,\mu,f}$ is defined as the following composite:

\vspace{-30pt}
\begin{align}
\begin{aligned}
\begin{tikzpicture} [scale=0.35,thick,yscale=0.9]
\draw [white] (-0.5, 6)
to [out=down, in=up, out looseness=1.5, in looseness=0.8] (3.5,2.5)
to [out=down, in=right](2, 1.5)
to [out=right, in=up](3.5,0.5)
to (3.5, 0)
to [out=down, in=right](2,-1)
to [out=right, in=up](3.5,-2)
to [out=down, in=down, looseness=2](4.5,-2)
to (4.5,8)
to (3.5, 8)
to (3.5, 3)
to [out=down, in=down, in looseness=1.3](0.5, 6);
\fill [fill=\fillcomp, fill opacity=0.8, draw] (0.75, 8) 
to (0.75, 3)
to (0, 3)
to (0, 8);
\fill [fill=\fillcomp, fill opacity=0.8, draw] (1.625, 8) 
to (1.625, 3)
to (2.375, 3)
to (2.375, 8);
\fill [fill=\fillC, fill opacity=0.8, draw] (3.25, 8) 
to (3.25, 3)
to (4, 3)
to (4, 8);
\node [draw, fill=white, minimum width=90pt, minimum height=25pt,scale=0.6, fill opacity=1]at (2, 3) {$\mathrm{MK}_{\mdot,\mu,f}$};
\end{tikzpicture}
\end{aligned}
\quad:=\quad
\begin{aligned}
\begin{tikzpicture} [scale=0.35,thick,yscale=0.9]
\draw [fill=black!10, draw=none] (1.4,-3.7) rectangle +(5.65,12);
\node (B) [Vertex, scale=\vertexsize] at (3,-0.5) {};
\node (D) [Vertex, scale=\vertexsize] at (2, 1.5) {};
\draw (-2,3) to [out=\swangle, in=\swangle, in looseness=2, out looseness=1] (3,-0.5);
\draw [out=\nwangle, in=\seangle] (2, 1.5) to (-2, 3);
\fill [fill=\fillcomp, fill opacity=0.8, draw](-1.35,6)
to (-1.35, 4)
to [out=down, in=down, looseness=2] (-2.65, 4)
to (-2.65,6);
\fill [fill=\fillcomp, fill opacity=0.8, draw] (3.5, 8)
to (3.5,5) 
to (3.5,1)
to [out=down, in=down, looseness=2](2.5,1)
to (2.5, 1)
to [out=up, in=up, looseness=2] (1.5,1)
to [out=down, in=down, looseness=1.7](4.5,1)
to (4.5,5)
to (4.5, 8);
\fill [fill=\fillcomp, fill opacity=0.8, draw] (-1.5, 8) 
to (-1.5, 6)
to (-0.5, 6)
to (-0.5, 8);
\fill [fill=\fillC, fill opacity=0.8, draw=none] (-0.5, 6)
to [out=down, in=up, out looseness=1.5, in looseness=0.8] (5.5,2.5)
to [out=down, in=right](4, 1.5)
to [out=right, in=up](5.5,0.5)
to [out=down, in=right](4,-0.5)
to [out=right, in=up](5.5,-1.5)
to [out=down, in=down, looseness=2](6.5,-1.5)
to (6.5,8)
to (5.5, 8)
to (5.5, 5)
to [out=down, in=down, in looseness=1.2, out looseness=0.4](0.5, 6);
\draw(-0.5, 6)
to [out=down, in=up, out looseness=1.5, in looseness=0.8] (5.5,2.5)
to [out=down, in=right](4, 1.5)
to [out=right, in=up](5.5,0.5)
to [out=down, in=right](4,-0.5)
to [out=right, in=up](5.5,-1.5)
to [out=down, in=down, looseness=2](6.5,-1.5)
to (6.5,8)
(5.5, 8)
to (5.5, 5)
to [out=down, in=down, in looseness=1.2, out looseness=0.4](0.5, 6);
\draw (4, -0.5) to (3, -0.5);
\draw (4, 1.5) to (2, 1.5);
\node (A) [scale=0.6, draw, circle, fill=white, fill opacity=1] at (-2,3) {$\mu$};
\node [draw, fill=white, minimum width=70pt, minimum height=25pt , scale=0.6, fill opacity=1]at (-1, 6.6) {$f$};
\node [] at (-1.5,-3){\sc alice};
\node [] at (4,-3){\sc king};
\end{tikzpicture}
\end{aligned}
\end{align}
\end{definition}

\begin{definition}[Mean King solution]
\label{Def:Mean king problem specification}
A Mean King scheme $\mathrm{MK}_{\mdot,\mu,f}$ \textit{solves the Mean King problem} if the following equation holds:
\begin{equation}
\label{eq:Mean king problem specification}
\begin{aligned}
\begin{tikzpicture} [scale=0.4,thick]
\fill [fill=\fillcomp, fill opacity=0.8, draw] (0.75, 8) 
to (0.75, 5)
to (0, 5)
to (0, 8);
\fill [fill=\fillcomp, fill opacity=0.8, draw] (1.625, 8) 
to (1.625, 5)
to (2.375, 5)
to (2.375, 8);
\fill [fill=\fillC, fill opacity=0.8, draw] (3.25, 8) 
to (3.25, 5)
to (4, 5)
to (4, 8);
\node [draw, fill=white, minimum width=90pt, minimum height=25pt,scale=0.6, fill opacity=1]at (2, 5) {$\text{MK}_{\mdot,\mu,f}$};
\end{tikzpicture}
\end{aligned}
\quad=\quad
\begin{aligned}
\begin{tikzpicture} [scale=0.4,thick]
\fill [fill=\fillcomp, fill opacity=0.8, draw=none] (0, 8)
to (0, 5)
to (0.75, 5)
to (0.75, 6.5) 
to [out=up, in=up, looseness=2](1.625, 6.5) 
to (1.625, 5)
to (2.375, 5)
to (2.375, 8)
to (1.625, 8)
to [out=down, in=down, looseness=2](0.75, 8);
\draw (0, 8)
to (0, 5)
to (0.75, 5)
to (0.75, 6.5) 
to [out=up, in=up, looseness=2](1.625, 6.5) 
to (1.625, 5)
to (2.375, 5)
to (2.375, 8)
 (1.625, 8)
to [out=down, in=down, looseness=2](0.75, 8);
\fill [fill=\fillC, fill opacity=0.8, draw] (3.25, 8) 
to (3.25, 5)
to (4, 5)
to (4, 8);
\node [draw, fill=white, minimum width=90pt, minimum height=25pt,scale=0.6, fill opacity=1]at (2, 5) {$\text{MK}_{\mdot,\mu,f}$};
\end{tikzpicture}
\end{aligned}
\end{equation}
\end{definition}

\noindent
This says exactly that, after carrying out the procedure, Alice and the King carry the same measurement result information. It will be satisfied precisely if the Mean King scheme $\text{MK}_{\mdot,\mu,f}$ is correct.

\subsection{Solving the Mean King problem}

Our solution to the Mean King problem is presented entirely graphically. It is based on a solution due to Klappenecker and R\" otteler~\cite{Gothic}. Giving this solution graphically is an interesting exercise in the graphical formalism, and demonstrates that it is capable of reasoning about sophisticated schemes. Our presentation is roughly comparable in complexity to Klappenecker and R\"otteler's. One advantage of our presentation is that it is perhaps clearer exactly how complementarity is being used.

We begin by giving a scheme to construct a bipartite state from a classical function.
\begin{definition}
\label{def: Alice's basis}
Given a classical function $f_i:n \to m$, and an $n$-fold controlled family of measurements on $\C^n$, the associated bipartite state $\ket{\mu _f} \in \C^n \otimes \C^n$ is defined as follows:
\begin{align*}
\mu_{f}\quad:=\quad\frac{1}{\sqrt{n}}
\begin{aligned}
\begin{tikzpicture} [scale=0.5,thick, scale=0.8]
\draw [fill=\fillC, fill opacity=0.8] (0,0) 
to [out=right, in=down](2.9,3.2)
to [out=up, in=up, looseness=2](2,3.2)
to [out=down, in=right](1.3, 2.9)
to [out=right, in=up](1.9, 2.6)
to [out=down, in=right](0.5,0.7)
to [out=left, in=down](0.25, 1.5)
to [out=down, in=up](-0.25, 1.5)
to [out=down, in=right](-0.5, 0.7)
to [out=left, in=down](-1.9, 2.6)
to [out=up, in=left](-1.3, 2.9)
to [out=left, in=down](-2,3.2)
to  [out=up, in=up, looseness=2](-2.9,3.2)
to [out=down, in=left](0,0);
\draw [fill=\fillcomp, fill opacity=0.8](0.25, 1.5)
to [out=up,in=down](1.5,2.5)
to [out=up, in=up, looseness=2](1, 2.5)
to [out=down, in=down, looseness=0.6](-1,2.5)
to [out=up, in=up, looseness=2](-1.5,2.5)
to [out=down, in=up](-0.25, 1.5);
\draw [thick](1.3,3) to (1.3,4);  
\draw [thick](-1.3,3) to (-1.3,4);  
\node [draw, scale=0.5, minimum width=30, minimum height=14,fill=white, fill opacity=1] at (0, 1.3) {$f_i$};
\node [Vertex, scale=\vertexsize, vertex colour=white] at (1.3, 2.9) {};
\node [Vertex, scale=\vertexsize, vertex colour=white]at (-1.3, 2.9) {};
\end{tikzpicture}
\end{aligned}
\quad-\quad
\begin{aligned}
\begin{tikzpicture}[thick]
\draw [white]
(0,0.5) to (0,2);
\draw (0,2) 
to [out=down, in=down, looseness=2] (2,2);
\end{tikzpicture}
\end{aligned}
\end{align*}
\end{definition}
\begin{definition}[Collisions]
Given classical functions $f,g:n \to m$, let $f \diamond g := |\{a|f(a)=g(a)\}|$ be the number of \textit{collisions} between them.
\end{definition}

\begin{lemma}
\label{lem:collision}
Let $f, g: [n+1] \to [n]$ be functions. Then $n \langle\mu_f|\mu_g\rangle + 1 = f \diamond g$. 
\end{lemma}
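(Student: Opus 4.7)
The plan is to compute $\langle \mu_f | \mu_g \rangle$ by gluing the diagram for $\mu_f^\dagger$ on top of the diagram for $\mu_g$. Substituting the definition on each side produces four contributions: a main--main term $M_{f,g}$ coming from the complementary-family pieces, two cross terms pairing one complementary-family piece with a Bell cup, and a pure Bell--Bell term. I would name these $M$, $X_1$, $X_2$, $B$ and reduce them one by one.

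First I would handle $M_{f,g}$. Between the outer and inner complementary-family vertices, the two blue basis-label regions meet; I would insert $\id = \Ps + \Pd$ there. The $\Ps$ part forces the two basis labels to agree, so the measurement-followed-by-encoding pairs on both sides reduce to identities on the quantum wire (by the unitarity~\eqref{eq:controlledmeasurementunitarity} of a controlled family of measurements). What remains is a single loop carrying the copies of $f$ and $g$; the comonoid-homomorphism property of classical functions, together with the topological-boundary axioms, collapses this loop to the scalar $\sum_{a\in [n+1]} \delta_{f(a),g(a)} = f\diamond g$, rescaled by the prefactor $\tfrac{1}{n}$ coming from the two $\tfrac{1}{\sqrt{n}}$ factors. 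The $\Pd$ part is exactly the situation of Lemma~\ref{Alternative formulation of complementarity}: the pair of complementary measurements collapses to a factor $\tfrac{1}{n}$ times the disconnected diagram in which $f$ and $g$ each close into independent loops, each equal to $n$ by counting preimages, yielding a total contribution $\tfrac{(n+1)n}{n}\cdot\tfrac{1}{n}\cdot\tfrac{1}{n}\cdot\text{(trivial loops)}$ that I would simplify.

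Next I would compute the cross terms $X_1, X_2$ and the Bell--Bell term $B$. Because the Bell cup has no basis-label region, the cross terms factor through the counit/unit axioms of the topological boundary: after contracting the Ps/Pd decomposition trivially, each cross term reduces to $\tfrac{1}{\sqrt n}$ times a trace of $f$ (respectively $g$) composed with a copy map, which by the classical-function axioms is a constant depending only on $|\mathrm{dom}(f)| = n+1$. The Bell--Bell term similarly gives a single known scalar. The combined arithmetic of $M + X_1 + X_2 + B$, after cancellations, is set up to isolate the $\tfrac{f\diamond g}{n}$ piece coming from the $\Ps$ part of $M$, with the $\Pd$ part of $M$ together with $X_1, X_2, B$ conspiring to contribute exactly $-\tfrac{1}{n}$. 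This yields $\langle\mu_f|\mu_g\rangle = \tfrac{f\diamond g - 1}{n}$, equivalent to the claim.

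The main obstacle will be the careful scalar bookkeeping in the $\Pd$ part of $M$ and its cancellation against the Bell subtraction. Complementarity (Lemma~\ref{Alternative formulation of complementarity}) only applies under $\Pd$, and the $\Ps$ contribution is what produces the honest count $f\diamond g$; the Bell term in the definition of $\mu_f$ was manifestly engineered to absorb the constant offset that arises from this $\Ps$/$\Pd$ split, and verifying this is where the delicate part of the calculation lies. A secondary obstacle is that the $(n+1)$-fold complementary family is assumed implicitly (so that $f,g$ have domain $[n+1]$), and one must check that the diagrammatic formula and its manipulations only use the pairwise complementarity that Definition~\ref{def:controlledcomplementarity} actually supplies.
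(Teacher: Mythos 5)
The paper gives no proof to compare against (it simply declares the graphical proof straightforward and omits it), and your outline is precisely that computation with the correct target arithmetic: expand $\langle\mu_f|\mu_g\rangle$ into the main, two cross, and Bell--Bell terms, split the main term by $\Ps+\Pd$, use unitarity of the controlled family on the $\Ps$ part to extract $\tfrac{f\diamond g}{n}$, use complementarity (Lemma~\ref{Alternative formulation of complementarity}) on the $\Pd$ part, and check that this $\Pd$ contribution together with the cross and Bell terms cancels to exactly $-\tfrac{1}{n}$. Two bookkeeping points to fix when you carry this out, both within the scalar accounting you already flag as delicate: the closed $f$- and $g$-loops count the domain $[n+1]$, so they contribute $n+1$ rather than the ``$n$ by counting preimages'' you state, and the $\tfrac{1}{\sqrt n}$ in Definition~\ref{def: Alice's basis} must be read as scaling the subtracted Bell cup as well (equivalently, the cup must enter with norm one), since otherwise the cross and Bell terms do not cancel to $-\tfrac{1}{n}$ and the lemma's constant comes out wrong.
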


\begin{proof}
A straightforward graphical proof is possible, which we omit.
\end{proof}
\begin{lemma}
\label{lemma:orthonormal basis}
Given a family of $n^2$ classical functions $f_i: [n+1] \to [n]$ with $i \neq j \,\Rightarrow\, f_i \diamond f_j = 1$, the states $\ket{\mu _{f_i}}$ form an orthonormal basis.
\end{lemma}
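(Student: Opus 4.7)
The plan is to apply Lemma~\ref{lem:collision} directly and then do a dimension count. The lemma gives us the formula $n \langle \mu_{f_i} | \mu_{f_j} \rangle + 1 = f_i \diamond f_j$, so computing inner products reduces entirely to counting collisions.

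First I would check orthogonality. For $i \neq j$, the hypothesis gives $f_i \diamond f_j = 1$, so Lemma~\ref{lem:collision} yields $n \langle \mu_{f_i} | \mu_{f_j} \rangle + 1 = 1$, hence $\langle \mu_{f_i} | \mu_{f_j} \rangle = 0$. Next I would verify normalization: any function $f_i: [n+1] \to [n]$ satisfies $f_i \diamond f_i = n+1$ (it agrees with itself on every element of the domain), so the lemma gives $n \langle \mu_{f_i} | \mu_{f_i} \rangle + 1 = n+1$, i.e.\ $\langle \mu_{f_i} | \mu_{f_i} \rangle = 1$.

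Finally I would conclude by a dimension count: the states $\ket{\mu_{f_i}}$ live in $\C^n \otimes \C^n$, which has dimension $n^2$, and we are given exactly $n^2$ of them. Any orthonormal set of size equal to the dimension is an orthonormal basis, completing the argument.

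There is no real obstacle here; the content of the lemma is essentially combinatorial, with all the analytic work having been absorbed into Lemma~\ref{lem:collision}. The only subtlety worth flagging is the self-collision count $f_i \diamond f_i = n+1$, which uses that the domain of $f_i$ is $[n+1]$ (not $[n]$); once this is noted, the proof is a two-line application of the previous lemma plus a dimension count.
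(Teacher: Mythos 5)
Your proposal is correct and takes essentially the same route as the paper: both rearrange Lemma~\ref{lem:collision} to get $\langle \mu_{f_i}|\mu_{f_j}\rangle = ((f_i \diamond f_j)-1)/n$, whence orthogonality from $f_i \diamond f_j = 1$, normalization from the self-collision count $f_i \diamond f_i = n+1$, and a basis by counting $n^2$ orthonormal vectors in the $n^2$-dimensional space $\C^n \otimes \C^n$. You have merely spelled out the steps the paper compresses into ``the conclusion follows''.
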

\begin{proof}
Rearranging the result of Lemma~\ref{lem:collision}, we see that $\langle \mu _f | \mu _g\rangle = ((f \diamond g) - 1)/n$, and the conclusion follows.
\end{proof}

\begin{lemma}
\label{lemma: prime power}
For any prime power $n=p^k$, the following structures exist:
\begin{enumerate}
\item a family of $n^2$ functions $f_i : [n+1] \to [n]$, such that for $i \neq j$ we have $f_i \diamond f_j=1$;
\item a  family of $n+1$ mutually complementary bases.
\end{enumerate}
\end{lemma}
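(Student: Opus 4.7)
The plan is to exploit the existence of a finite field $\mathbb{F}_n$ of order $n$ whenever $n = p^k$ is a prime power, and to realise both structures via affine/projective constructions over this field. The two statements can be proved largely independently, so I would treat them separately, though part (1) itself gives an attractive route to part (2) via Lemma~\ref{lemma:orthonormal basis}.

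For part (1), I would identify the index set $[n+1]$ with the projective line $\mathbb{F}_n \cup \{\infty\}$ and define, for each pair $(a,b) \in \mathbb{F}_n \times \mathbb{F}_n$, a function $f_{a,b} : \mathbb{F}_n \cup \{\infty\} \to \mathbb{F}_n$ by
\begin{equation*}
f_{a,b}(x) \;=\; ax + b \quad (x \in \mathbb{F}_n), \qquad f_{a,b}(\infty) \;=\; a.
\end{equation*}
This is a family of exactly $n^2$ functions. To verify the collision condition for $(a,b) \neq (a',b')$, I would split into cases. If $a = a'$, then $b \neq b'$, so $f_{a,b}$ and $f_{a',b'}$ disagree on all of $\mathbb{F}_n$, but both map $\infty$ to the common value $a$, giving exactly one collision. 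If $a \neq a'$, the equation $ax + b = a'x + b'$ has the unique solution $x = (b' - b)(a - a')^{-1} \in \mathbb{F}_n$, while $f_{a,b}(\infty) = a \neq a' = f_{a',b'}(\infty)$, so again exactly one collision. Hence $f_i \diamond f_j = 1$ for all $i \neq j$.

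For part (2), I would invoke the classical construction of $n+1$ mutually unbiased bases in prime-power dimension, originally due to Ivanovi\'c for primes and Wootters--Fields for prime powers, citing the relevant references. Concretely, one takes the computational basis together with the $n$ bases
\begin{equation*}
\ket{\psi^{(a)}_b} \;=\; \frac{1}{\sqrt{n}} \sum_{x \in \mathbb{F}_n} \chi\bigl(a x^2 + b x\bigr)\, \ket{x}, \qquad b \in \mathbb{F}_n,
\end{equation*}
indexed by $a \in \mathbb{F}_n$, where $\chi$ is a non-trivial additive character of $\mathbb{F}_n$ (with a small modification in characteristic $2$ using the Galois-theoretic trace, as in Klappenecker--R\"otteler~\cite{Gothic}). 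Mutual unbiasedness reduces to the standard Gauss sum estimate $\bigl|\sum_{x} \chi(c x^2 + d x)\bigr| = \sqrt{n}$ for $c \neq 0$. The 2\-categorical notion of complementary family then follows from the equivalence with the ordinary Hilbert-space notion noted after Definition~\ref{def:controlledcomplementarity}.

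The main obstacle is not conceptual but expositional: both constructions are essentially in the literature, so the only content specific to this paper is the combinatorial verification in part (1) and the invocation of the correct reference for part (2). If a more self-contained proof of part (2) were desired, it could be obtained from part (1) together with Lemma~\ref{lemma:orthonormal basis} applied to suitable subfamilies of the $f_{a,b}$, but this would require more work to show the resulting bases are pairwise complementary rather than merely orthonormal.
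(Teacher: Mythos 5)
Your proposal is correct and is essentially the paper's own argument: the paper simply cites Klappenecker and R\"otteler~\cite[Section 2]{Gothic}, and your affine-function family $f_{a,b}$ over $\mathbb{F}_n \cup \{\infty\}$ (an affine-plane construction) together with the Ivanovi\'c/Wootters--Fields bases is exactly the content of that reference, with the collision count for part (1) verified explicitly. So you have merely unpacked the citation rather than taken a different route; no gaps.
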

\begin{proof}
See~\cite[Section 2]{Gothic}.
\end{proof}

\begin{lemma}
\label{lemma: auxilliary MKP complementarity result}
For a complementary family of controlled measurements and a classical function $g$, the following holds:
\begin{align}
\begin{aligned}
\begin{tikzpicture} [scale=0.4,thick,yscale=0.8]
\draw [use as bounding box, draw=none] (-2,1) rectangle (2,6.5);
\draw [fill=\fillC, fill opacity=0.8] (0,0.5) 
to [out=right, in=down, looseness=1.3](2,3.2)
to [out=up, in=up, looseness=2](1.5,3.2)
to [out=down, in=right](0.8, 2.9)
to [out=right, in=up](1.4, 2.6)
to [out=down, in=right](0.5,1)
to [out=left, in=down](0.25, 1.5)
to (-0.25, 1.5)
to [out=down, in=right](-0.5, 1)
to [out=left, in=down](-1.4, 2.6)
to [out=up, in=left](-0.8, 2.9)
to [out=left, in=down](-1.5,3.2)
to  [out=up, in=up, looseness=2](-2,3.2)
to [out=down, in=left, looseness=1.3](0,0.5);
\draw [fill=\fillcomp, fill opacity=0.8] (0.25, 1.5)
to [out=up,in=down](1,2.5)
to [out=up, in=up, looseness=2](0.5, 2.5)
to [out=down, in=down, looseness=1.1](-0.5,2.5)
to [out=up, in=up, looseness=2](-1,2.5)
to [out=down, in=up](-0.25, 1.5);
\draw [thick](0.8,3) to (0.8,5);  
\draw [thick](-0.8,3) to (-0.8,5);  
\node [draw, scale=0.4, minimum width=30, minimum height=17,fill=white, fill opacity=1] at (0, 1.4) {$g$};
\node [Vertex, scale=\vertexsize, vertex colour=white] at (0.8, 2.9) {};
\node [Vertex, scale=\vertexsize, vertex colour=white]at (-0.8, 2.9) {};
\draw [fill=\fillC, yscale=-1, fill opacity=0.8] (0,-7.5) 
to [out=right, in=down, looseness=1.1](2,-4.8)
to [out=up, in=up, looseness=2](1.5,-4.8)
to [out=down, in=right](0.8, -5.1)
to [out=right, in=up](1.4, -5.4)
to [out=down, in=right](0, -7.)
to [out=left, in=down](-1.4, -5.4)
to [out=up, in=left](-0.8, -5.1)
to [out=left, in=down](-1.5,-4.8)
to  [out=up, in=up, looseness=2](-2,-4.8)
to [out=down, in=left, looseness=1.1](0,-7.5);
\draw [fill=\fillcomp, yscale=-1, fill opacity=0.8](0, -6.7)
to [out=right,in=down](1.1,-5.5)
to [out=up, in=up, looseness=2](0.5, -5.5)
to [out=down, in=down, looseness=2](-0.5,-5.5)
to [out=up, in=up, looseness=2](-1.1,-5.5)
to [out=down, in=left](0, -6.7);
\node [Vertex, scale=\vertexsize, vertex colour=white] at (0.8, 5.1) {};
\node [Vertex, scale=\vertexsize, vertex colour=white]at (-0.8, 5.1) {};
\node [scale=\labelsize] at (0, 7.2) {$a$};
\node [scale=\labelsize] at (0, 6.35) {$b$};
\end{tikzpicture}
\end{aligned}
\quad= \quad
\begin{aligned}
\begin{tikzpicture} [thick, scale=0.5]
\draw [fill=\fillC, fill opacity=0.8] (0,1)
to (0,0)
to [out=down,in=down, looseness=2](1,0)
to (1,1)
to (0,1);
\draw [fill=\fillcomp, fill opacity=0.8] (0,2)
to (0,1)
to (1,1)
to (1,2)
to [out=up, in=up, looseness=2](0,2);
\node [draw, minimum width=40, 
minimum height=27,fill=white, scale=0.5, fill opacity=1]
 at (0.5, 1) {$g$}; 
   \node [scale=\labelsize]
 at (0.5, 0) {$a$};
   \node [scale=\labelsize]
 at (0.5, 2) {$b$};
\end{tikzpicture}
\end{aligned}
\quad+\quad
1
\end{align}
\end{lemma}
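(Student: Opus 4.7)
The plan is to interpret the left-hand side as a composition of two pieces glued along the two quantum wires, and then to split into two cases using $\Ps + \Pd = \id$.

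First I would unpack the upper (mirrored) piece: with its blue region pinned to $b$ and its red region pinned to $a$, the two white measurement vertices together with the comparison cap on the blue region act as the two-fold tensor of the adjoint state $\langle a|_{b}$ on the two incoming quantum wires. The lower piece is a $\sigma_g$-bowl: the classical function $g$ routes its classical input through to the basis controls of the two white vertices, which, together with the topology of the red region, produce a bipartite quantum state correlated through $g$.

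Next I would insert $\Ps + \Pd = \id$ on the pair consisting of the blue region governing the basis choice in the lower piece and the blue region $b$ fixed in the upper piece. On the $\Ps$ summand the two bases agree, and the unitarity of the controlled measurement~\eqref{eq:controlledmeasurementunitarity} collapses the two quantum wires to classical wires. The comonoid-homomorphism property of $g$, combined with the two projectors $a$ and $b$, then reduces this contribution precisely to the composite $\langle b\mid g\mid a\rangle = \delta_{b,g(a)}$ drawn on the right-hand side of the claim.

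On the $\Pd$ summand the two bases differ. Here I would invoke Lemma~\ref{Alternative formulation of complementarity}, whose LHS template matches the ``measure-and-re-measure in two different bases'' shape appearing in our diagram once the quantum wires are suitably bent and the adjoint convention of Definition~\ref{def:flipcontrol} is applied to recast the upper white vertices as their black counterparts. The RHS of that lemma produces a disconnected form with a scalar factor $1/n$; the resulting summation over the classical data in the now-severed red region, together with the boundary axioms~\eqref{eq:topeq1}--\eqref{eq:topeq2}, contracts to the constant $+1$.

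The hard part will be matching the geometry of the LHS to the precise template of Lemma~\ref{Alternative formulation of complementarity}: this requires carefully bending the quantum wires, applying adjoint moves, and keeping track of which pairs of blue regions carry the $\Ps$ and $\Pd$ projectors. Once this combinatorial bookkeeping is in place, the rest is routine topological manipulation using the properties of classical-data projectors.
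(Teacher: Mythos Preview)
Your approach is essentially the paper's: split on $\Ps + \Pd$, use unitarity~\eqref{eq:controlledmeasurementunitarity} on the $\Ps$ piece to obtain $\delta_{b,g(a)}$, and invoke Lemma~\ref{Alternative formulation of complementarity} on the $\Pd$ piece to obtain the constant. The paper merely runs this in the opposite direction---starting from Lemma~\ref{Alternative formulation of complementarity} composed with $g$, expanding $\Pd = \id - \Ps$ on both sides, and then inserting the classical-data projectors $a,b$---but the manipulations are the same.

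Two bookkeeping points need fixing. First, your labels are swapped: in the upper piece the \emph{blue} (basis-choice) region carries the projector $a$ and the \emph{red} (outcome) region carries $b$, so the $\Ps/\Pd$ split is between the lower blue region and the upper blue region labelled $a$, not $b$. Second, the constant $+1$ does not arise from summing a severed red region. The right-hand side of Lemma~\ref{Alternative formulation of complementarity} still carries $\Pd$; after attaching $g$, inserting the labels, and closing everything off, the $\Pd$ contribution evaluates to $\tfrac{1}{n}\bigl[(n{+}1)-1\bigr]=1$, where $n{+}1$ is the dimension of the blue region and the $-1$ is the $\Ps$ correction inside $\Pd$. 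Equivalently, it is the sum over the $n$ basis choices $c\neq a$ of $1/n$ each. The red regions are simply capped off against $b$ and against $g$'s output and contribute trivially.
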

\begin{proof}
See Appendix.
\end{proof}

\begin{theorem} [Solution to the Mean King problem]
\label{lemma: MKP correctness}
For a family of functions $f_i: [n+1] \to [n]$ such that $\ket{\mu _{f_i}}$ form a basis, and a family $\mdot$ of $n+1$ complementary bases of $\mathbb{C} ^n$, the following assignments give a Mean King solution:
\begin{align*}
\begin{aligned}
\begin{tikzpicture} [thick, scale=0.4]
\draw [white](0, -3) to (0, 2);
\draw [fill=\fillcomp,  fill opacity=0.8] (-1, 2) 
to [out=down, in=\nwangle](0, -0.5) 
to [out=\neangle, in=down](1, 2);
\draw (-1, -3) 
to [out=up, in=\swangle](0, -0.5) 
to [out=\seangle, in=up](1, -3);
\node [draw, circle, fill=white, fill opacity=1, scale=0.8] at (0, -0.5) {$\mu$}; 
\end{tikzpicture}
\end{aligned}
\quad:=\quad
\begin{aligned}
\begin{tikzpicture} [thick, scale=0.4]
\draw [fill=\fillcomp, fill opacity=0.8] (-0.5, 2)
to (-0.5, 1)
to [out=down, in=down, looseness=2](0.5, 1)
to (0.5, 2);
\draw (0.5, -3) to (0.5, -2);
\draw (-0.5, -3) to (-0.5, -2);
\node [scale=\labelsize] at (0, 1.25) {$i$};
\node [draw, triangleup, fill=white, fill opacity=1, scale=0.7] at (0, -1.6) {$\mu_{f_i}$}; 
\end{tikzpicture}
\end{aligned}
\qquad\qquad\qquad
\begin{aligned}
\begin{tikzpicture} [thick, scale=0.4]
\draw [fill=\fillC, fill opacity=0.8] (2.75, -3)
to (2.75, 0)
to (1.75, 0)
to (1.75, -3);
\draw [fill=\fillcomp, fill opacity=0.8] (0.25, -3)
to (0.25, 0)
to (1.25, 0)
to (1.25, -3);
\draw [fill=\fillcomp, fill opacity=0.8] (1,2)
to (1,0)
to (2,0)
to (2, 2);
\node [draw, minimum width=80, minimum height=27,fill=white, scale=0.5, fill opacity=1] at (1.5, 0) {$f$}; 
\end{tikzpicture}
\end{aligned}
\quad:=\quad
\sum_i
\begin{aligned}
\begin{tikzpicture} [thick, scale=0.4]
\draw [fill=\fillC, fill opacity=0.8] (2.75, -3)
to (2.75, 0)
to (1.75, 0)
to (1.75, -3);
\draw [fill=\fillcomp, fill opacity=0.8] (0.25, -3)
to (0.25, -2.5)
to [out=up, in=up, looseness=2](1.25, -2.5)
to (1.25, -3);
\draw [fill=\fillcomp, fill opacity=0.8] (1,2)
to (1,0)
to (2,0)
to (2, 2);
\node [scale=\labelsize] at (0.75, -2.5) {$i$};
\node [draw, minimum width=80, minimum height=27,fill=white, scale=0.5, fill opacity=1] at (1.5, 0) {$f_i$}; 
\end{tikzpicture}
\end{aligned}
\end{align*}
\end{theorem}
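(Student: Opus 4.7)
The plan is to substitute the proposed assignments for $\mu$ and $f$ into the Mean King scheme of Definition~\ref{Def:Mean king problem scheme} and reduce the resulting diagram to the canonical form on the right-hand side of equation~\eqref{eq:Mean king problem specification}. The derivation will proceed in three stages: substitution and label decomposition, use of the auxiliary complementarity identity, and finally orthonormality of the $\ket{\mu_{f_i}}$.

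First I would insert the stated $\mu = \sum_i \ket{\mu_{f_i}}\!\bra{\mu_{f_i}}$ (measurement in the $\ket{\mu_{f_i}}$ basis, introducing a sum over $i$) and the stated $f = \sum_j f_j$ (conditional application of $f_j$ depending on the top label) into the composite diagram for $\mathrm{MK}_{\mdot,\mu,f}$. Using Lemma~\ref{lem:projectorproperties} I would then decompose the identity on each classical region to attach projector labels: $i$ for Alice's PVM outcome, $j$ for the King's revealed basis, $k$ for the King's measurement result, and $\ell$ for Alice's final answer. The controlled structure of $f$ forces $i=j$ in the summand that contributes to $\ell$, and Alice's output satisfies $\ell = f_i(j)$.

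Second, in each labelled summand I would identify the sub-diagram consisting of the King's controlled measurement and encoding composed with the bipartite state $\ket{\mu_{f_i}}$, and apply Lemma~\ref{lemma: auxilliary MKP complementarity result} with $g = f_i$ and the two classical boundaries carrying labels $a=j$, $b=k$. This replaces the sub-diagram by the sum of a delta-like term $[f_i(j)=k]$ and an additive constant, which is the only place where complementarity of the family $\mdot$ enters.

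Third, I would perform the sum over $i$ using the resolution of the identity furnished by orthonormality of the $\ket{\mu_{f_i}}$ (Lemma~\ref{lemma:orthonormal basis}). The collision hypothesis $f_i \diamond f_j = 1$ for $i \neq j$ guarantees that for each pair $(j,k)$ there is exactly one $i$ with $f_i(j)=k$, so the delta terms reconstruct precisely the classical copy of $k$ into Alice's answer region $\ell$, which is exactly the right-hand side of equation~\eqref{eq:Mean king problem specification}. The additive constants contributed by the other $n^2 - 1$ summands must then cancel against the $1/\sqrt{n}$ prefactors coming from Definition~\ref{def: Alice's basis}.

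The main obstacle will be the bookkeeping in this final step: verifying that the ``$+1$'' correction in Lemma~\ref{lemma: auxilliary MKP complementarity result}, summed over the $n^2$ basis labels $i$ and weighted by the normalizations from $\ket{\mu_{f_i}}$, conspires to cancel exactly (again by the collision identity of Lemma~\ref{lem:collision}) so that only the desired classical copy survives. This is where the three ingredients --- complementarity, the combinatorial design of the functions $f_i$, and completeness of the $\ket{\mu_{f_i}}$ basis --- must align, and it is both the technical heart of the argument and the place where the role of complementarity becomes transparent in the graphical presentation.
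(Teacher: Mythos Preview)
Your high-level plan (substitute, apply Lemma~\ref{lemma: auxilliary MKP complementarity result}, simplify) matches the paper, but the mechanism you propose for the final cancellation is wrong, and this is not merely bookkeeping.

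The point you miss is that $\ket{\mu_{f_i}}$ in Definition~\ref{def: Alice's basis} has \emph{two} terms: the controlled-encoding piece built from $f_i$, minus a bare maximally entangled state. Substituting $\bra{\mu_{f_i}}$ into the scheme therefore produces two diagrams per $i$-summand. Lemma~\ref{lemma: auxilliary MKP complementarity result} applies only to the first of these, yielding the scalar $\delta_{f_i(a),b}+1$; the second diagram (the bare cap composed with the King's controlled measure--re-encode, which is unitary) evaluates directly to the constant $1$. The ``$+1$'' thus cancels \emph{within each $i$-summand} against the subtracted cup term, leaving $\sum_{i,a,b}\delta_{f_i(a),b}$ times the output structure. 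From there one only uses that each $f_i$ is a comonoid homomorphism to merge the two copies of $f_i$ appearing in the output, and then recomposes the projector decompositions over $a,b,i$ to recover $f$. Neither a resolution of the identity over $i$ nor any collision hypothesis enters this computation; orthonormality of the $\ket{\mu_{f_i}}$ is required only so that $\mu$ is a legitimate measurement, which is exactly the stated hypothesis.

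Your proposed alternative cancellation fails on two counts. First, the collision condition $f_i\diamond f_j=1$ for $i\neq j$ is not a hypothesis of the theorem. Second, even granting it, your claim that for each pair $(j,k)$ there is exactly one $i$ with $f_i(j)=k$ is false: there are $n^2$ functions taking values in a set of size $n$, so each value is hit $n$ times at each coordinate (and under the collision condition a short convexity argument makes this exact). The delta terms therefore do not reconstruct a single classical copy as you describe, and the ``other $n^2-1$ summands'' do not furnish the cancellation you need.
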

\begin{proof}
See Appendix. 
\end{proof}

\bibliographystyle{eptcs}
\bibliography{references}

\begin{thebibliography}{10}
\providecommand{\bibitemdeclare}[2]{}
\providecommand{\surnamestart}{}
\providecommand{\surnameend}{}
\providecommand{\urlprefix}{Available at }
\providecommand{\url}[1]{\texttt{#1}}
\providecommand{\href}[2]{\texttt{#2}}
\providecommand{\urlalt}[2]{\href{#1}{#2}}
\providecommand{\doi}[1]{doi:\urlalt{http://dx.doi.org/#1}{#1}}
\providecommand{\bibinfo}[2]{#2}

\bibitemdeclare{article}{Abramsky:2004ac}
\bibitem{Abramsky:2004ac}
\bibinfo{author}{Samson \surnamestart Abramsky\surnameend} \&
  \bibinfo{author}{Bob \surnamestart Coecke\surnameend} (\bibinfo{year}{2004}):
  \emph{\bibinfo{title}{A Categorical Semantics of Quantum Protocols}}.
\newblock {\sl \bibinfo{journal}{Proceedings of the 19th Annual IEEE Symposium
  on Logic in Computer Science}}, pp. \bibinfo{pages}{415--425},
  \doi{10.1109/LICS.2004.1}.
\newblock
  \bibinfo{note}{\href{http://arxiv.org/abs/quant-ph/0402130}{arXiv:quant-ph/0%
402130}}.

\bibitemdeclare{inbook}{ac08-cqm}
\bibitem{ac08-cqm}
\bibinfo{author}{Samson \surnamestart Abramsky\surnameend} \&
  \bibinfo{author}{Bob \surnamestart Coecke\surnameend} (\bibinfo{year}{2008}):
  \emph{\bibinfo{title}{Handbook of Quantum Logic and Quantum Structures}},
  chapter \bibinfo{chapter}{Categorical Quantum Mechanics}, pp.
  \bibinfo{pages}{261--323}.
\newblock \bibinfo{volume}{2}, \bibinfo{publisher}{Elsevier}.

\bibitemdeclare{article}{b97-hda2}
\bibitem{b97-hda2}
\bibinfo{author}{John~C \surnamestart Baez\surnameend} (\bibinfo{year}{1997}):
  \emph{\bibinfo{title}{Higher-Dimensional Algebra {II}: 2-{H}ilbert Spaces}}.
\newblock {\sl \bibinfo{journal}{Advances in Mathematics}}
  \bibinfo{volume}{127}, pp. \bibinfo{pages}{125--189},
  \doi{10.1006/aima.1997.1617}.
\newblock
  \bibinfo{note}{\href{http://arxiv.org/abs/q-alg/9609018}{arXiv:q-alg/9609018%
}}.

\bibitemdeclare{article}{BarVicary}
\bibitem{BarVicary}
\bibinfo{author}{Krzysztof \surnamestart Bar\surnameend} \&
  \bibinfo{author}{Jamie \surnamestart Vicary\surnameend}
  (\bibinfo{year}{2014}): \emph{\bibinfo{title}{Groupoid Semantics for Thermal
  Computing}}.
\newblock
  \bibinfo{note}{\href{http://arxiv.org/abs/1401.3280}{arXiv:1401.3280}}.

\bibitemdeclare{article}{BB84}
\bibitem{BB84}
\bibinfo{author}{C.H. \surnamestart Bennett\surnameend} \&
  \bibinfo{author}{G.~\surnamestart Brassard\surnameend}
  (\bibinfo{year}{1985}): \emph{\bibinfo{title}{Quantum public key
  distribution}}.
\newblock {\sl \bibinfo{journal}{IBM Technical Disclosure Bulletin}}
  \bibinfo{volume}{28}, pp. \bibinfo{pages}{3153--3163}.

\bibitemdeclare{article}{Duncan:2009}
\bibitem{Duncan:2009}
\bibinfo{author}{Bob \surnamestart Coecke\surnameend} \& \bibinfo{author}{Ross
  \surnamestart Duncan\surnameend} (\bibinfo{year}{2011}):
  \emph{\bibinfo{title}{Interacting Quantum Observables: Categorical Algebra
  and Diagrammatics}}.
\newblock {\sl \bibinfo{journal}{New Journal of Physics}}
  \bibinfo{volume}{13}(\bibinfo{number}{043016}),
  \doi{10.1088/1367-2630/13/4/043016}.
\newblock
  \bibinfo{note}{\href{http://arxiv.org/abs/0906.4725}{arXiv:0906.4725}}.

\bibitemdeclare{article}{coeckeperdrix}
\bibitem{coeckeperdrix}
\bibinfo{author}{Bob \surnamestart Coecke\surnameend} \& \bibinfo{author}{Simon
  \surnamestart Perdrix\surnameend} (\bibinfo{year}{2012}):
  \emph{\bibinfo{title}{Environment and Classical Channels in Categorical
  Quantum Mechanics}}.
\newblock {\sl \bibinfo{journal}{Logical Methods in Computer Science}}
  \bibinfo{volume}{8}(\bibinfo{number}{4}), pp. \bibinfo{pages}{1--24},
  \doi{10.2168/LMCS-8(4:14)2012}.
\newblock
  \bibinfo{note}{\href{http://arxiv.org/abs/1004.1598}{arXiv:1004.1598}}.

\bibitemdeclare{article}{E91}
\bibitem{E91}
\bibinfo{author}{Artur \surnamestart Ekert\surnameend} (\bibinfo{year}{1991}):
  \emph{\bibinfo{title}{Quantum cryptography based on {B}ell's theorem}}.
\newblock {\sl \bibinfo{journal}{Physical Review Letters}}
  \bibinfo{volume}{67}, p. \bibinfo{pages}{661},
  \doi{dx.doi.org/10.1103/PhysRevLett.67.661}.

\bibitemdeclare{article}{Gothic}
\bibitem{Gothic}
\bibinfo{author}{Andreas \surnamestart Klappenecker\surnameend} \&
  \bibinfo{author}{Martin \surnamestart R\"otteler\surnameend}
  (\bibinfo{year}{2005}): \emph{\bibinfo{title}{New Tales of the Mean King}}.
\newblock
  \bibinfo{note}{\href{http://arxiv.org/abs/quant-ph/0502138}{arXiv:quant-ph/0%
502138}}.

\bibitemdeclare{mastersthesis}{r11-2vect}
\bibitem{r11-2vect}
\bibinfo{author}{Dan \surnamestart Roberts\surnameend} (\bibinfo{year}{2011}):
  \emph{\bibinfo{title}{Representing Modular Tensor Categories: A Computer
  Algebra System for Topological Quantum Computing}}.
\newblock Master's thesis, \bibinfo{school}{Department of Computer Science,
  University of Oxford}.

\bibitemdeclare{unpublished}{2vect}
\bibitem{2vect}
\bibinfo{author}{Dan \surnamestart Roberts\surnameend} \&
  \bibinfo{author}{Jamie \surnamestart Vicary\surnameend}:
  \emph{\bibinfo{title}{The {TwoVect} Package}}.
\newblock \bibinfo{note}{A computer algebra system for higher linear algebra.
  \href{http://ncatlab.org/nlab/show/TwoVect}{http://ncatlab.org/nlab/show/Two%
Vect}}.

\bibitemdeclare{inproceedings}{StayVicary}
\bibitem{StayVicary}
\bibinfo{author}{Mike \surnamestart Stay\surnameend} \& \bibinfo{author}{Jamie
  \surnamestart Vicary\surnameend} (\bibinfo{year}{2013}):
  \emph{\bibinfo{title}{Bicategorical Semantics of Nondeterministic
  Computation}}.
\newblock In: {\sl \bibinfo{booktitle}{Proceedings of the 29th Conference on
  the Mathematical Foundations of Programming Semantics}},
  \bibinfo{volume}{298}, pp. \bibinfo{pages}{367--382},
  \doi{10.1016/j.entcs.2013.09.022}.
\newblock
  \bibinfo{note}{\href{http://arxiv.org/abs/1301.3393}{arXiv:1301.3393}}.

\bibitemdeclare{article}{Aharonov1987}
\bibitem{Aharonov1987}
\bibinfo{author}{Lev \surnamestart Vaidman\surnameend}, \bibinfo{author}{Yakir
  \surnamestart Aharonov\surnameend} \& \bibinfo{author}{David~Z. \surnamestart
  Albert\surnameend} (\bibinfo{year}{1987}): \emph{\bibinfo{title}{How to
  Ascertain the Values of $\sigma_x$, $\sigma_y$ and $\sigma_z$ of a
  Spin-$\frac{1}{2}$ Particle}}.
\newblock {\sl \bibinfo{journal}{Physics Review Letters}} \bibinfo{volume}{58},
  pp. \bibinfo{pages}{1385--1387},
  \doi{dx.doi.org/10.1103/PhysRevLett.58.1385}.

\bibitemdeclare{article}{Vicary:2012hqt}
\bibitem{Vicary:2012hqt}
\bibinfo{author}{Jamie \surnamestart Vicary\surnameend} (\bibinfo{year}{2012}):
  \emph{\bibinfo{title}{Higher Semantics of Quantum Protocols}}.
\newblock {\sl \bibinfo{journal}{Proceedings of the 27th Annual IEEE Symposium
  on Logic in Computer Science}}, pp. \bibinfo{pages}{606--615},
  \doi{10.1109/LICS.2012.70}.
\newblock \bibinfo{note}{Extended version at
  \href{http://arxiv.org/abs/1207.4563}{arXiv:1207.4563}}.

\end{thebibliography}

\newpage
\appendix
\section*{Appendix}

{\bf Proof of Lemma~\ref{lemma:Complementarity through unitarity}.}
We consider the following chain of equivalences:
\def\quad{\hspace{0.3cm}}
\begin{align*}
&\left[
\Pd
\begin{aligned}
\begin{tikzpicture}[scale=0.4, thick]
\node (a) [Vertex, scale=\vertexsize, vertex colour=white]at (0.5,-0.25) {};
\node (b) [Vertex, scale=\vertexsize, vertex colour=white]at (1.5, 2.) {};
\node (c) [Vertex, scale=\vertexsize] at (1.5, 3.00) {};
\draw [fill=\fillcomp, fill opacity=0.8, draw=none] (-1,4)
    to (-1,1.75)
    to [out=down, in=\nwangle, out looseness=1.5] (a.center)
    to [out=\neangle, in=down, in looseness=1.5] (2,1.5)
    to [out=up, in=\seangle] (b.center)
    to [out=\swangle, in=up] +(-0.5,-0.5)
    to [out=down, in=down, looseness=2] +(-1,0)
    to (0.0,4);
\draw (-1,4)
    to (-1,1.75)
    to [out=down, in=\nwangle, out looseness=1.5] (a.center)
    to [out=\neangle, in=down, in looseness=1.5] (2,1.5)
    to [out=up, in=\seangle] (b.center)
    to [out=\swangle, in=up] +(-0.5,-0.5)
    to [out=down, in=down, looseness=2] +(-1,0)
    to (0.0,4);
\draw (b.center)
    to [out=up, in=down] (c.center);  
\draw [fill=\fillcomp, fill opacity=0.8, draw=none] (1,4) to (1,3.5)
    to [out=down, in=\nwangle] (c.center)
    to [out=\neangle, in=down] +(0.5,0.5)
    to (2,4);
\draw (1,4) to (1,3.5)
    to [out=down, in=\nwangle] (c.center)
    to [out=\neangle, in=down] +(0.5,0.5)
    to (2,4);
\draw [fill=\fillC, fill opacity=0.8, draw=none] (-2.5,4.0)
        to (-2.0,4.0)
        to (-2, 3)
        to [out=down, in=left] (-1,2)
        to [out=left, in=up] (-2, 1)
        to (-2.0, 0.25)
        to [out=down, in=left] (-1, -0.25)
        to [out=left, in=up] (-2.0, -0.75)
        to (-2.5,-0.75);        
\draw(-2.0,4.0)
        to (-2, 3)
        to [out=down, in=left] (-1,2)
        to [out=left, in=up] (-2, 1)
        to (-2.0, 0.25)
        to [out=down, in=left] (-1, -0.25)
        to [out=left, in=up] (-2.0, -0.75);         
\draw [fill=\fillC, fill opacity=0.8, draw=none] (3.5, 4.0)
    to  (3.0, 4.0)
    to [out=down, in=right] (c.center)
    to [out=right, in=up] (3.0, 2.0)
    to (3.0, -0.75)
    to (3.5, -0.75);
\draw(3.0, 4.0)
    to [out=down, in=right] (c.center)
    to [out=right, in=up] (3.0, 2.0)
    to (3.0, -0.75);
\draw(-1,-0.25) to (0.5, -0.25);
\draw(-1,2) to (1.5, 2);
\draw (0.5,-0.75 -| a.center) to (a.center);
\end{tikzpicture}
\end{aligned}
\quad=\quad
\frac{\Pd}{\sqrt{n}}
\begin{aligned}
\begin{tikzpicture}[scale=0.4, thick]
\node (a) [Vertex, scale=\vertexsize, vertex colour=white] at (0.0, 0.1) {};
\draw [fill=\fillcomp, fill opacity=0.8, draw=none] (-0.5,3.5)
    to (-0.5,0.5)
    to [out=down,  in=left] (a.center) 
    to [out=right,  in=down](0.5,0.5)
    to (0.5,3.5);
\draw (-0.5,3.5)
    to (-0.5,0.5)
    to [out=down,  in=left] (a.center) 
    to [out=right,  in=down](0.5,0.5)
    to (0.5,3.5);
\draw [fill=\fillcomp, fill opacity=0.8, draw=none] (1.5,3.5)
    to (1.5,0.5)
    to [out=down,  in=left] (2,0)
    to [out=right, in=down](2.5,0.5)
    to (2.5,3.5);    
\draw (1.5,3.5)
    to (1.5,0.5)
    to [out=down,  in=left] (2,0)
    to [out=right, in=down](2.5,0.5)
    to (2.5,3.5);
\draw (0.0,-1.0) to (a.center);
\draw [fill=\fillC, fill opacity=0.8, draw=none] (-2.0,3.5)
to (-1.5,3.5)
    to (-1.5,1.5)
    to [out=down,  in=left] (-0.5,0)
    to [out=left, in=up] (-1.5, -1)
    to (-2.0, -1);
\draw (-1.5,3.5)
    to (-1.5,1.5)
    to [out=down,  in=left] (-0.5,0)
    to [out=left, in=up] (-1.5, -1);
\draw [fill=\fillC, fill opacity=0.8, draw=none] (4.0,3.5)
    to (3.5,3.5)
    to (3.5,-1)
    to (4, -1);
\draw (3.5,3.5)to (3.5,-1);
\draw (-0.5, 0) to (0,0);
\node (p) [minimum width=80pt, draw, fill=white, , scale=0.8, fill opacity=1] at (1, 2) {$\phi$};
\node [Vertex, vertex colour=white, scale=0.5] at ([xshift=-71pt, yshift=1pt]p.south){};   
\node [Vertex, vertex colour=white, scale=0.5] at ([xshift=-14.5pt, yshift=1pt]p.south){};   
\node [Vertex, vertex colour=white, scale=0.5] at ([xshift=14.5pt, yshift=1pt]p.south){};   
\node [Vertex, vertex colour=white, scale=0.5] at ([xshift=71pt, yshift=1pt]p.south){};   
\end{tikzpicture}
\end{aligned}
\right]
\quad
\Leftrightarrow
\quad
\left[
\Pd
\begin{aligned}
\begin{tikzpicture}[scale=0.4, thick]
\node (b) [Vertex, scale=\vertexsize, vertex colour=white]at (1.5, 2) {};
\node (c) [Vertex, scale=\vertexsize] at (1.5, 3.00) {};
\draw [fill=\fillcomp, fill opacity=0.8, draw=none] (-1,4)
    to (-1,-0.75)
    to  (2,-0.75)
    to(2,1.5)
    to [out=up, in=\seangle] (b.center)
    to [out=\swangle, in=up] +(-0.5,-0.5)
    to [out=down, in=down, looseness=2] +(-1,0)
    to (0.0,4);
    to (2,-0.75);
\draw (-1,4) to (-1,-0.75);
\draw (2,-0.75)
    to(2,1.5)
    to [out=up, in=\seangle] (b.center)
    to [out=\swangle, in=up] +(-0.5,-0.5)
    to [out=down, in=down, looseness=2] +(-1,0)
    to (0.0,4);
    to (2,-0.75);
\draw (b.center)
    to [out=up, in=down] (c.center);
\draw [fill=\fillcomp, fill opacity=0.8] (1,4) to (1,3.5)
    to [out=down, in=\nwangle] (c.center)
    to [out=\neangle, in=down] +(0.5,0.5)
    to (2,4);
\draw [fill=\fillC, fill opacity=0.8, draw=none] (-2.5,4.0)
to (-2,4)
        to (-2.0,3.0)
        to [out=down, in=left] (-1,2)
        to [out=left, in=up](-2.0, 1)
        to (-2.0, -0.75)
        to (-2.5,-0.75); 
\draw (-2,4)
        to (-2.0,3.0)
        to [out=down, in=left] (-1,2)
        to [out=left, in=up](-2.0, 1)
        to (-2.0, -0.75);         
\draw (-1,2)     to (1.5, 2)   ;
\draw [fill=\fillC, fill opacity=0.8, draw=none] (3.5, 4.0)
    to  (3.0, 4.0)
    to [out=down, in=right] (2,3)
    to [out=right, in=up] (3.0, 2.0)
    to (3.0, -0.75)
    to (3.5, -0.75);
\draw (3.0, 4.0)
    to [out=down, in=right] (2,3)
    to [out=right, in=up] (3.0, 2.0)
    to (3.0, -0.75);   
\draw (2,3) to (1.5, 3);
\end{tikzpicture}
\end{aligned}
\quad=\quad
\frac{\Pd}{\sqrt{n}}
\begin{aligned}
\begin{tikzpicture}[scale=0.4, thick]
\draw [fill=\fillcomp, fill opacity=0.8, draw=none] (-0.5,3.5)
    to (-0.5,-1)
    to (0.5,-1)
    to (0.5,3.5); 
\draw (-0.5,3.5)
    to (-0.5,-1);
\draw (0.5,-1)
    to (0.5,3.5);   
\draw [fill=\fillcomp, fill opacity=0.8] (1.5,3.5)
    to (1.5,0.5)
    to [out=down,  in=left] (2,0)
    to [out=right, in=down](2.5,0.5)
    to (2.5,3.5);    
\draw [fill=\fillC, fill opacity=0.8, draw=none] (-2.0,3.5)
to (-1.5,3.5)
    to (-1.5, -1)
    to (-2.0, -1);
\draw [fill=\fillC, fill opacity=0.8, draw=none] (4.0,3.5)
    to (3.5,3.5)
    to (3.5,-1)
    to (4, -1);
\draw (3.5,3.5) to (3.5, -1);
\draw (-1.5,3.5) to (-1.5, -1);
\node (p) [minimum width=80pt, draw, fill=white, scale=0.8, fill opacity=1] at (1, 1.7) {$\phi$};   
\node [Vertex, vertex colour=white, scale=0.5] at ([xshift=-71pt, yshift=1pt]p.south){};   
\node [Vertex, vertex colour=white, scale=0.5] at ([xshift=-14.5pt, yshift=1pt]p.south){};   
\node [Vertex, vertex colour=white, scale=0.5] at ([xshift=14.5pt, yshift=1pt]p.south){};   
\node [Vertex, vertex colour=white, scale=0.5] at ([xshift=71pt, yshift=1pt]p.south){};   
\end{tikzpicture}
\end{aligned}
\right]
\\
&\hspace{2cm}
\Leftrightarrow
\quad
\left[
\Pd
\begin{aligned}
\begin{tikzpicture}[scale=0.4, thick]
\node (b) [Vertex, scale=\vertexsize, vertex colour=white] at (1.5, 2.25) {};
\node (c) [Vertex, scale=\vertexsize] at (1.5, 3.25) {};
\draw [fill=\fillcomp, fill opacity=0.8, draw=none] (-1,5.25)
    to (-1,0.5)
    to  (2,0.5)
    to(2,1.75)
    to [out=up, in=\seangle] (b.center)
    to [out=\swangle, in=up] +(-0.5,-0.5)
    to [out=down, in=down, looseness=2] +(-1,0)
    to (0.0,5.25);
    \draw  (2,0.5)
    to(2,1.75)
    to [out=up, in=\seangle] (b.center)
    to [out=\swangle, in=up] +(-0.5,-0.5)
    to [out=down, in=down, looseness=2] +(-1,0)
    to (0.0,5.25);
\draw (b.center)
    to [out=up, in=down] (c.center);   
\draw [fill=\fillcomp, fill opacity=0.8, draw=none] (1,5.25) to (1,3.75)
    to [out=down, in=\nwangle] (c.center)
    to [out=\neangle, in=down] +(0.5,0.5)
    to (2,4)
    to [out=up, in=left](2.5,4.5)
    to [out=right, in=up](3,4)
    to (3,0.5)
    to(4,0.5)
    to (4,5.25);
 \draw  (1,5.25) to (1,3.75)
    to [out=down, in=\nwangle] (c.center)
    to [out=\neangle, in=down] +(0.5,0.5)
    to (2,4)
    to [out=up, in=left](2.5,4.5)
    to [out=right, in=up](3,4)
    to (3,0.5)   ;
\draw [fill=\fillC, draw=none, fill opacity=0.8] (-1.5,5)
        to (-0.5,5)
        to (-0.5, 3.25)
        to [out=down, in=left] (0.75,2.25)
        to [out=left, in=up](-0.5, 1.25)
        to (-0.5, 0.25)
        to (-1.5,0.25);  
\draw  (-0.5,5)
        to (-0.5, 3.25)
        to [out=down, in=left] (0.75,2.25)
        to [out=left, in=up](-0.5, 1.25)
        to (-0.5, 0.25);              
\draw  (0.75,2.25) to (1.5,2.25);   
\draw [fill=\fillC, draw=none, fill opacity=0.8] (4.5, 5)
    to  (3.5, 5)
    to (3.5, 4.25)
    to [out=down, in=right] (2.25,3.25)
    to [out=right, in=up] (3.5, 2.25)
    to (3.5, 0.25)
    to (4.5, 0.25);
\draw (3.5, 5)
    to (3.5, 4.25)
    to [out=down, in=right] (2.25,3.25)
    to [out=right, in=up] (3.5, 2.25)
    to (3.5, 0.25);
    \draw (2.25,3.25) to (1.5,3.25);
\end{tikzpicture}
\end{aligned}
\quad=\quad
\frac{\Pd}{\sqrt{n}}
\begin{aligned}
\begin{tikzpicture}[scale=0.4,thick]
\draw [fill=\fillcomp, draw=none, fill opacity=0.8] (-3.5,-0.75)
to (-1.5,-0.75)
to (-1.5,4)
to (-3.5, 4);
\draw (-1.5,4) to (-1.5, -0.75);
\draw [fill=\fillcomp, draw=none, fill opacity=0.8] (1.5,-0.75)
to (-0.5,-0.75)
to (-0.5,4)
to (1.5, 4);
\draw (-0.5,4) to (-0.5, -0.75);
\draw [fill=\fillC, draw=none, fill opacity=0.8] (-4,-1)
to (-2,-1)
to (-2,3.75)
to (-4, 3.75);
\draw (-2,3.75) to (-2, -1);
\draw [fill=\fillC, draw=none, fill opacity=0.8] (2,-1)
to (0,-1)
to (0,3.75)
to (2, 3.75);
\draw (0,3.75) to (0, -1);
\node [minimum width=60pt, draw, fill=white, fill opacity=1, scale=0.8] at (-1,1.5) {$\phi$};
\node [Vertex, vertex colour=white, scale=0.5] at (-1.5,0.9){};   
\node [Vertex, vertex colour=white, scale=0.5] at (-2,0.9){};   
\node [Vertex, vertex colour=white, scale=0.5] at (0,0.9){};   
\node [Vertex, vertex colour=white, scale=0.5] at (-0.5,0.9){}; 
\end{tikzpicture}
\end{aligned}
\right]
\quad
\end{align*}
For the first equivalence we compose at the bottom with the inverse of the controlled measurement vertex; for the second we perform a topological manipulation. Since $\phi$ is an arbitrary unitary 2\-cell, it is clear that the last condition is exactly that given in the statement of the lemma. $\hfill \square$

\noindent
{\bf Proof of Lemma~\ref{Eve's successful interference}.}
We investigate this scenario by applying the projector $\Ps$ on both sides of the specification. In this case the right-hand side only retains the $\Ps$ component, and the left hand side simplifies as follows:
\begin{align*}
\Ps\,
\begin{aligned}
\begin{tikzpicture} [scale=0.3,thick]
\draw [use as bounding box, draw=none] (-6.5,6) rectangle +(11,-11.3);
\node (A) [Vertex, scale=\vertexsize, vertex colour=white] at (-2,-2) {};
\node (B) [Vertex, scale=\vertexsize] at (1,-0.5) {};
\node (C) [Vertex, scale=\vertexsize, vertex colour=white] at (0, 2.5) {};
\node (D) [Vertex, scale=\vertexsize] at (0, 1.5) {};
\draw [out=up, in=down] (-2,-2) to (1,-0.5);
\draw [out=up, in=down] (0, 1.5) to (0, 2.5);
\fill [fill=\fillcomp, draw, fill opacity=0.8] (0.5, 6)
to (0.5,5)
to (0.5, 3)
to [out=down, in=down, looseness=2] (-0.5, 3)
to (-0.5,5)
to (-0.5, 6);
\fill [fill=\fillcomp, draw, fill opacity=0.8] (1.5, 6)
to (1.5,5) 
to (1.5,1)
to [out=down, in=down, looseness=2](0.5,1)
to (0.5, 1)
to [out=up, in=up, looseness=2] (-0.5,1)
to [out=down, in=left](1,-0.5)
to [out=right, in=down](2.5,1)
to (2.5,5)
to (2.5, 6);
\fill [fill=\fillcomp, fill opacity=0.8, draw] (-5.5, 6)
to (-5.5, 5)
to [out=down, in=up](-5.5, 0)
to (-5.5, -2.5)
to [out=down, in=down, looseness=2](-2.5, -2.5)
to [out=up, in=up, looseness=2](-1.5, -2.5)
to [out=down, in=down, looseness=1.9](-6.5, -2.5)
to (-6.5, 0)
to [out=up, in=down, out looseness=1.3, in looseness=0.9](-6.5, 5)
to (-6.5, 6);
\fill [fill=\fillC, draw=none, fill opacity=0.8] (-4.5, 6)
to (-4.5, -2.75)
to [out=down, in=down, looseness=2](-3.5, -2.75)
to [out=up, in=left] (-2,-2)
to [out=left, in=down] (-3.5, -1.25)
to (-3.5, 3)
to [out=up, in=up, looseness=2](-2.5, 3)
to (-2.5, 0.5)
to [out=down, in=down, looseness=2] (-1.5,0.5)
to (-1.5,1.75)
to [out=up, in=left] (0, 2.5)
to [out=left, in=down] (-1.5, 3.25)
to [out=up, in=down](-1.5, 4)
to [out=up, in=up, looseness=0.5](3.5,4)
to (3.5,2.25)
to [out=down, in=right](2, 1.5)
to [out=right, in=up](3.5, 0.75)
to (3.5, 0.25)
to [out=down, in=right](2,-0.5)
to [out=right, in=up](3.5,-1.25)
to [out=down, in=down, looseness=2](4.5,-1.25)
to (4.5,5)
to (4.5, 6)
to (3.5, 6)
to [out=down, in=down, looseness=0.5](-1.5, 6);
\draw  (-4.5, 6)
to (-4.5, -2.75)
to [out=down, in=down, looseness=2](-3.5, -2.75)
to [out=up, in=left] (-2,-2)
to [out=left, in=down] (-3.5, -1.25)
to (-3.5, 3)
to [out=up, in=up, looseness=2](-2.5, 3)
to (-2.5, 0.5)
to [out=down, in=down, looseness=2] (-1.5,0.5)
to (-1.5,1.75)
to [out=up, in=left] (0, 2.5)
to [out=left, in=down] (-1.5, 3.25)
to [out=up, in=down](-1.5, 4)
to [out=up, in=up, looseness=0.5](3.5,4)
to (3.5,2.25)
to [out=down, in=right](2, 1.5)
to [out=right, in=up](3.5, 0.75)
to (3.5, 0.25)
to [out=down, in=right](2,-0.5)
to [out=right, in=up](3.5,-1.25)
to [out=down, in=down, looseness=2](4.5,-1.25)
to (4.5,5)
to (4.5, 6)
(3.5, 6)
to [out=down, in=down, looseness=0.5](-1.5, 6);
\draw (2, -0.5) to (1, -0.5);
\draw (2, 1.5) to (0, 1.5);
\end{tikzpicture}
\end{aligned}
\quad=\quad
\Ps
\begin{aligned}
\begin{tikzpicture} [scale=0.3,thick]
\draw [use as bounding box, draw=none] (-4.5,6) rectangle +(9,-11.3);
\node (A) [Vertex, scale=\vertexsize, vertex colour=white] at (0,-2) {};
\node (B) [Vertex, scale=\vertexsize, vertex colour=white] at (1,-0.5) {};
\node (C) [Vertex, scale=\vertexsize, vertex colour=white] at (0, 3) {};
\node (D) [Vertex, scale=\vertexsize, vertex colour=white] at (0, 1.5) {};
\draw [out=up, in=down] (0,-2) to (1,-0.5);
\draw [out=up, in=down] (0, 1.5) to (0, 3);
\fill [fill=\fillcomp, draw, fill opacity=0.8] (0.5, 6)
to (0.5,5)
to (0.5, 3.5)
to [out=down, in=down, looseness=2] (-0.5, 3.5)
to (-0.5,5)
to (-0.5, 6);
\fill [fill=\fillcomp, draw, fill opacity=0.8] (1.5, 6)
to (1.5,5) 
to (1.5,1)
to [out=down, in=down, looseness=2](0.5,1)
to (0.5, 1)
to [out=up, in=up, looseness=2] (-0.5,1)
to [out=down, in=left](1,-0.5)
to [out=right, in=down](2.5,1)
to (2.5,5)
to (2.5, 6);
\fill [fill=\fillcomp, fill opacity=0.8, draw] (-3.5, 6)
to (-3.5, 5)
to [out=down, in=up](-3.5, 0)
to (-3.5, -2.5)
to [out=down, in=down, looseness=2](-0.5, -2.5)
to [out=up, in=up, looseness=2](0.5, -2.5)
to [out=down, in=down, looseness=1.9](-4.5, -2.5)
to (-4.5, 0)
to [out=up, in=down, out looseness=1.3, in looseness=0.9](-4.5, 5)
to (-4.5, 6);
\fill [fill=\fillC, draw=none, fill opacity=0.8] (-2.5, 6)
to (-2.5, 4)
to (-2.5, -2.75)
to [out=down, in=down, looseness=2](-1.5, -2.75)
to [out=up, in=left] (0,-2)
to [out=left, in=down] (-1.5, -1.25)
to [out=up, in=left](0, -0.5)
to [out=left, in=down](-1.5, 0.25)
to (-1.5, 0.75)
to [out=up, in=left](0,1.5)
to [out=left, in=down](-1.5, 2.25)
to [out=up, in=left] (0, 3)
to [out=left, in=down] (-1.5, 3.75)
to [out=up, in=down](-1.5, 4)
to [out=up, in=up, looseness=0.3](3.5,4)
to [out=down, in=down, looseness=2] (4.5,4)
to (4.5, 6)
to (3.5, 6)
to [out=down, in=down, looseness=0.3](-1.5, 6);
\draw(-2.5, 6)
to (-2.5, 4)
to (-2.5, -2.75)
to [out=down, in=down, looseness=2](-1.5, -2.75)
to [out=up, in=left] (0,-2)
to [out=left, in=down] (-1.5, -1.25)
to [out=up, in=left](0, -0.5)
to [out=left, in=down](-1.5, 0.25)
to (-1.5, 0.75)
to [out=up, in=left](0,1.5)
to [out=left, in=down](-1.5, 2.25)
to [out=up, in=left] (0, 3)
to [out=left, in=down] (-1.5, 3.75)
to [out=up, in=down](-1.5, 4)
to [out=up, in=up, looseness=0.3](3.5,4)
to [out=down, in=down, looseness=2] (4.5,4)
to (4.5, 6) 
(3.5, 6)
to [out=down, in=down, looseness=0.3](-1.5, 6);
\draw (0, -0.5) to (1, -0.5);
\end{tikzpicture}
\end{aligned}
\quad=\quad
\Ps\,
\begin{aligned}
\begin{tikzpicture} [scale=0.3,thick]
\draw [use as bounding box, draw=none] (-4.5,6) rectangle +(9,-11.3);
\fill [fill=\fillcomp, fill opacity=0.8, draw=none] (-3.5, 6)
to (-3.5, -2.5)
to [out=down, in=down, looseness=2] (-0.5, -2.5)
to (-0.5, 6)
to (0.5, 6)
to (0.5, 1)
to [out=down, in=down, looseness=2](1.5, 1)
to (1.5, 6)
to (2.5, 6)
to (2.5, 1)
to [out=down, in=up, looseness=](0.5, -2.5)
to [out=down, in=down, looseness=2] (-4.5, -2.5)
to (-4.5, 6);
\draw (-3.5, 6)
to (-3.5, -2.5)
to [out=down, in=down, looseness=2] (-0.5, -2.5)
to (-0.5, 6)
 (0.5, 6)
to (0.5, 1)
to [out=down, in=down, looseness=2](1.5, 1)
to (1.5, 6)
(2.5, 6)
to (2.5, 1)
to [out=down, in=up, looseness=](0.5, -2.5)
to [out=down, in=down, looseness=2] (-4.5, -2.5)
to (-4.5, 6);
\fill [fill=\fillC, draw=none, fill opacity=0.8] (-2.5, 6)
to (-2.5, 4)
to (-2.5, -2.75)
to [out=down, in=down, looseness=2](-1.5, -2.75)
to (-1.5, 0.75)
to [out=up, in=down](-1.5, 2.5)
to [out=up, in=up, looseness=0.8](3.5,2.5)
to [out=down, in=down, looseness=2] (4.5,2.5)
to (4.5, 6)
to (3.5, 6)
to [out=down, in=down, looseness=0.8](-1.5, 6);
\draw  (-2.5, 6)
to (-2.5, 4)
to (-2.5, -2.75)
to [out=down, in=down, looseness=2](-1.5, -2.75)
to (-1.5, 0.75)
to [out=up, in=down](-1.5, 2.5)
to [out=up, in=up, looseness=0.8](3.5,2.5)
to [out=down, in=down, looseness=2] (4.5,2.5)
to (4.5, 6)
(3.5, 6)
to [out=down, in=down, looseness=0.8](-1.5, 6);
\end{tikzpicture}
\end{aligned}
\end{align*}
Vertex colour changes are justified by changing the side from which the operations are controlled in accordance with Definition~\ref{def:flipcontrol}. By this, we can conclude that after application of $\Ps$ the QKD specification becomes a tautology. $\hfill \square$

\noindent
{\bf Proof of Lemma~\ref{lemma:Complementarity implies QKD}.}
Suppose the controlled complementarity condition~\ref{def:controlledcomplementarity} is satisfied. Then we make the following argument:
\def\quad{\hspace{0.1cm}}
\begin{gather*}
\Pd\,
\begin{aligned}
\begin{tikzpicture}[scale=0.4, xscale=-1,thick]
\node (a) [Vertex, scale=\vertexsize]at (0.5,-0.25) {};
\node (b) [Vertex, scale=\vertexsize]at (1.5, 2.) {};
\node (c) [Vertex, scale=\vertexsize, vertex colour=white] at (1.5, 3.00) {};
\node (d) [Vertex, scale=\vertexsize, vertex colour=white] at (0.5, -2) {};
\draw [fill=\fillcomp, fill opacity=0.8, draw=none] (-1,5.5)
    to (-1,1.75)
    to [out=down, in=\nwangle, out looseness=1.5] (a.center)
    to [out=\neangle, in=down, in looseness=1.5] (2,1.5)
    to [out=up, in=\seangle] (b.center)
    to [out=\swangle, in=up] +(-0.5,-0.5)
    to [out=down, in=down, looseness=2] +(-1,0)
    to (0.0,5.5);
\draw (-1,5.5)
    to (-1,1.75)
    to [out=down, in=\nwangle, out looseness=1.5] (a.center)
    to [out=\neangle, in=down, in looseness=1.5] (2,1.5)
    to [out=up, in=\seangle] (b.center)
    to [out=\swangle, in=up] +(-0.5,-0.5)
    to [out=down, in=down, looseness=2] +(-1,0)
    to (0.0,5.5);
\draw (b.center)
    to [out=up, in=down] (c.center);  
\draw [fill=\fillcomp, fill opacity=0.8, draw=none] (1,5.5) to (1,3.5)
    to [out=down, in=\nwangle] (c.center)
    to [out=\neangle, in=down] +(0.5,0.5)
    to (2,5.5);
\draw (1,5.5) to (1,3.5)
    to [out=down, in=\nwangle] (c.center)
    to [out=\neangle, in=down] +(0.5,0.5)
    to (2,5.5);
\draw [fill=\fillcomp, fill opacity=0.8, draw=none] (0,-3.5) to (0,-2.5)
    to [out=up, in=\swangle] (d.center)
    to [out=\seangle, in=up] +(0.5,-0.5)
    to (1,-3.5); 
\draw (0,-3.5) to (0,-2.5)
    to [out=up, in=\swangle] (d.center)
    to [out=\seangle, in=up] +(0.5,-0.5)
    to (1,-3.5);       
\draw [fill=\fillC, fill opacity=0.8, draw=none] (-2.5,5.5)
        to (-2.0,5.5)
        to (-2, 3)
        to [out=down, in=left] (-1,2)
        to [out=left, in=up] (-2, 1)
        to (-2.0, 0.75)
        to [out=down, in=left] (-1, -0.25)
        to [out=left, in=up] (-2.0, -1.25)
        to (-2, -3.5)
        to (-2.5,-3.5);        
\draw (-2, 5.5) to (-2.0,4.0)
        to (-2, 3)
        to [out=down, in=left] (-1,2)
        to [out=left, in=up] (-2, 1)
        to (-2.0, 0.75)
        to [out=down, in=left] (-1, -0.25)
        to [out=left, in=up] (-2.0, -1.25)
        to (-2, -3.5);         
\draw [fill=\fillC, fill opacity=0.8, draw=none] (3.5, 5.5)
        to (3, 5.5)
        to  (3.0, 4.0)
        to [out=down, in=right] (c.center)
        to [out=right, in=up] (3, 2.0)
        to (3, -1)
        to [out=down, in=right] (1.5, -2)
        to [out=right, in=up](3.0, -3)
        to (3,-3.5)
        to (3.5, -3.5);
\draw(3.0, 5.5)
    to (3,4)
    to [out=down, in=right] (c.center)
    to [out=right, in=up] (3.0, 2.0)
    to (3, -1)
    to [out=down, in=right] (1.5, -2)
    to [out=right, in=up] (3.0, -3)
    to (3, -3.5);
\draw(-1,-0.25) to (0.5, -0.25);
\draw(1.5,-2) to (0.5, -2);
\draw(-1,2) to (1.5, 2);
\draw (0.5,-2 -| a.center) to (a.center);
\end{tikzpicture}
\end{aligned}
\quad
\stackrel{\eqref{eq:flippedcomplementarity}}{=}
\quad
\Pd\,
\begin{aligned}
\begin{tikzpicture}[scale=0.4, thick, xscale=-1]
\node (a) [Vertex, scale=\vertexsize] at (0.0, 0) {};
\node (5) [Vertex, scale=\vertexsize, vertex colour=white] at (0.0, -3.5) {};
\draw [fill=\fillcomp, fill opacity=0.8, draw=none] (-0.5,3.5)
    to (-0.5,0.5)
    to [out=down,  in=left] (a.center) 
    to [out=right,  in=down](0.5,0.5)
    to (0.5,3.5);
\draw(-0.5,3.5)
    to (-0.5,0.5)
    to [out=down,  in=left] (a.center) 
    to [out=right,  in=down](0.5,0.5)
    to (0.5,3.5);
\draw [fill=\fillcomp, fill opacity=0.8, draw=none] (1.5,3.5)
    to (1.5,0.5)
    to [out=down,  in=left] (2,0)
    to [out=right, in=down](2.5,0.5)
    to (2.5,3.5);  
\draw (1.5,3.5)
    to (1.5,0.5)
    to [out=down,  in=left] (2,0)
    to [out=right, in=down](2.5,0.5)
    to (2.5,3.5);         
\draw (0.0,-3.5) to (a.center);
\draw [fill=\fillC, fill opacity=0.8, draw=none] (-2.0,3.5)
to (-1.5,3.5)
    to (-1.5,1)
    to [out=down,  in=left] (a.center)
    to [out=left, in=up] (-1.5, -1)
    to (-1.5, -2.5)
    to (-1.5, -5.5)
    to (-2.0, -5.5);
\draw (-1.5,3.5)
    to (-1.5,1)
    to [out=down,  in=left] (a.center)
    to [out=left, in=up] (-1.5, -1)
    to (-1.5, -2.5)
    to (-1.5, -5.5);    
\draw [fill=\fillC, fill opacity=0.8, draw=none] (4.0,3.5)
    to (3.5,3.5)
    to (3.5, -2.5)
    to [out=down, in=right](2, -3.5)
    to [out=right, in=up](3.5, -4.5)
    to (3.5,-5.5)
    to (4, -5.5);
\draw [fill=\fillcomp, fill opacity=0.8, draw=none](-0.5,-5.5) 
    to (-0.5,-4.1)
    to [out=up, in=up, looseness=2] (0.5,-4.1)
    to (0.5,-5.5); 
\draw (-0.5,-5.5) 
    to (-0.5,-4.1)
    to [out=up, in=up, looseness=2] (0.5,-4.1)
    to (0.5,-5.5); 
\draw (3.5,3.5)
    to (3.5, -2.5)
    to [out=down, in=right](2, -3.5)
    to [out=right, in=up](3.5, -4.5) 
    to (3.5,-5.5);    
\draw(2,-3.5) to (0, -3.5);
\node (p) [minimum width=64pt, draw, fill=white, minimum height=0.8*19pt, fill opacity=1] at (1, 2) {$\phi$};   
\node (c) [Vertex, scale=0.5, vertex colour=white] at ([xshift=-14.5pt, yshift=1pt]p.south) {};
\node (c) [Vertex, scale=0.5, vertex colour=white] at ([xshift=-71pt, yshift=1pt]p.south) {};
\node (c) [Vertex, scale=0.5, vertex colour=white] at ([xshift=71pt, yshift=1pt]p.south) {};
\node (c) [Vertex, scale=0.5, vertex colour=white] at ([xshift=14.5pt, yshift=1pt]p.south) {};
\end{tikzpicture}
\end{aligned}
\quad
\stackrel{\eqref{eq:controlledmeasurementunitarity}}{=}
\quad
\Pd\,
\begin{aligned}
\begin{tikzpicture}[scale=0.4, thick, xscale=-1]
\node (a) [Vertex, scale=\vertexsize] at (0.0, 0) {};
\node (b) [Vertex, scale=\vertexsize, vertex colour=white] at (0.0, -1.5) {};
\node (c) [Vertex, scale=\vertexsize, vertex colour=white] at (0.0, -3) {};
\node (c) [Vertex, scale=\vertexsize, vertex colour=white] at (0.0, -4.5) {};
\draw [fill=\fillcomp, fill opacity=0.8, draw=none] (-0.5,3.5)
    to (-0.5,0.5)
    to [out=down,  in=left] (a.center) 
    to [out=right,  in=down](0.5,0.5)
    to (0.5,3.5);
\draw (-0.5,3.5)
    to (-0.5,0.5)
    to [out=down,  in=left] (a.center) 
    to [out=right,  in=down](0.5,0.5)
    to (0.5,3.5);
\draw [fill=\fillcomp, fill opacity=0.8, draw=none] (1.5,3.5)
    to (1.5,0.5)
    to [out=down,  in=left] (2,0)
    to [out=right, in=down](2.5,0.5)
    to (2.5,3.5);    
\draw (1.5,3.5)
    to (1.5,0.5)
    to [out=down,  in=left] (2,0)
    to [out=right, in=down](2.5,0.5)
    to (2.5,3.5);  
\draw (0.0,-2.0) to (a.center);
\draw (0.0,-4.5) to (0,-3);
\draw [fill=\fillcomp, fill opacity=0.8](0.25, -1.7)
to (0.25,-2.8)
to [out=down, in=down](-0.25,-2.8)
to (-0.25, -1.7)
to [out=up, in=up](0.25, -1.7);
\draw [fill=\fillC, fill opacity=0.8, draw=none] (-2.0,3.5)
to (-1.5,3.5)
    to (-1.5,0.75)
    to [out=down,  in=left] (-0.5,0)
    to [out=left, in=up] (-1.5, -0.75)to
    (-1.5, -5.5)
    to (-2.0, -5.5);
\draw (-1.5,3.5)
    to (-1.5,0.75)
    to [out=down,  in=left] (-0.5,0)
    to [out=left, in=up] (-1.5, -0.75)to
    (-1.5, -5.5);
\draw (-0.5,0) to (0, 0);   
\draw [fill=\fillC, fill opacity=0.8, draw=none] (4.0,3.5)
    to (3.5,3.5)
    to (3.5,-0.75)
    to [out=down, in=right](2.5,-1.5)
    to [out=right, in=up](3.5, -2.25)
    to [out=down, in=right](2.5,-3)    
    to [out=right, in=up](3.5, -3.75)
    to [out=down, in=right]((2.5, -4.5)
    to [out=right, in=up](3.5, -5.25)
    to (3.5, -5.5)
    to (4,-5.5); 
\draw(3.5,3.5)
    to (3.5,-0.75)
    to [out=down, in=right](2.5,-1.5)
    to [out=right, in=up](3.5, -2.25)
    to [out=down, in=right](2.5,-3)    
    to [out=right, in=up](3.5, -3.75)
    to [out=down, in=right]((2.5, -4.5)
    to [out=right, in=up](3.5, -5.25)
    to (3.5, -5.5); 
\draw [fill=\fillcomp, fill opacity=0.8, draw=none](-0.5,-5.5) 
    to (-0.5,-5.1)
    to [out=up, in=up, looseness=2] (0.5,-5.1)
    to (0.5,-5.5); 
\draw (-0.5,-5.5) 
    to (-0.5,-5.1)
    to [out=up, in=up, looseness=2] (0.5,-5.1)
    to (0.5,-5.5);    
\draw (2.5, -4.5) to (0, -4.5);    
\draw (2.5, -3) to (0, -3);
\draw (2.5, -1.5) to (0, -1.5);
\node (p)[minimum width=64pt, draw, fill=white, minimum height=0.8*19pt, fill opacity=1] at (1, 2) {$\phi$};
\node (c) [Vertex, scale=0.5, vertex colour=white] at ([xshift=-14.5pt, yshift=1pt]p.south) {};
\node (c) [Vertex, scale=0.5, vertex colour=white] at ([xshift=-71pt, yshift=1pt]p.south) {};
\node (c) [Vertex, scale=0.5, vertex colour=white] at ([xshift=71pt, yshift=1pt]p.south) {};
\node (c) [Vertex, scale=0.5, vertex colour=white] at ([xshift=14.5pt, yshift=1pt]p.south) {};
\end{tikzpicture}
\end{aligned}
\\
\stackrel{\eqref{eq:top1}}{=}
\quad
\Pd\,
\begin{aligned}
\begin{tikzpicture}[scale=0.4, thick, xscale=-1]
\node (a) [Vertex, scale=\vertexsize] at (0.0, -1) {};
\node (b) [Vertex, scale=\vertexsize, vertex colour=white] at (0, -2) {};
\node (c) [Vertex, scale=\vertexsize, vertex colour=white] at (1, -4) {};
\node (b) [Vertex, scale=\vertexsize, vertex colour=white] at (1, -5) {};
\draw [fill=\fillcomp, fill opacity=0.8, draw=none] (-0.5,3)
    to (-0.5,-0.5)
    to [out=down,  in=down, looseness=2](0.5,-0.5)
    to (0.5,3);
\draw (-0.5,3)
    to (-0.5,-0.5)
    to [out=down,  in=down, looseness=2](0.5,-0.5)
    to (0.5,3);
\draw [fill=\fillcomp, fill opacity=0.8, draw=none] (1.5,3)
    to (1.5,0.9)
    to [out=down,  in=down, looseness=2] (2.5,0.9)
    to (2.5,3);      
\draw (1.5,3)
    to (1.5,0.9)
    to [out=down,  in=down, looseness=2] (2.5,0.9)
    to (2.5,3);   
\draw [fill=\fillcomp, fill opacity=0.8, draw=none] (1.5,-6)
    to (1.5,-5.6)
    to [out=up,  in=up, looseness=2] (0.5,-5.6)
    to (0.5,-6);
\draw (1.5,-6)
    to (1.5,-5.6)
    to [out=up,  in=up, looseness=2] (0.5,-5.6)
    to (0.5,-6);  
\draw (0,-2) to (a.center);
\draw (1,-4) to (1,-5);
\draw [fill=\fillcomp, fill opacity=0.8](0.25, -2.2)
to (0.25,-2.5) 
to [out=down, in=down, looseness=2](1.75,-2.5) 
to (1.75,-0.4)
to [out=up, in=up, looseness=2](2.25,-0.4)
to (2.25,-2.5)
to [out=down, in=down, looseness=2](-0.25,-2.5)
to (-0.25, -2.2)
to [out=up, in=up](0.25, -2.2);
\draw [fill=\fillC, fill opacity=0.8, draw=none] (-2.0,3)
to (-1.5,3)
    to (-1.5,1.5)
    to (-1.5, -0.5)
    to [out=down,  in=left] (-0.5,-1)
    to [out=left, in=up] (-1.5, -1.5)
    to (-1.5, -6)
    to (-2.0, -6);
\draw  (-1.5,3)
    to (-1.5,1.5)
    to (-1.5, -0.5)
    to [out=down,  in=left] (-0.5,-1)
    to [out=left, in=up] (-1.5, -1.5)
    to (-1.5, -6);
\draw [fill=\fillC, fill opacity=0.8, draw=none] (4.0,3)
    to (3.5,3)
    to (3.5,-1.5)
    to [out=down, in=right](2.5,-2)
    to [out=right, in=up](3.5, -2.5)
    to (3.5, -3.5)
    to [out=down, in=right](2.5, -4)
    to [out=right, in=up](3.5, -4.5)
    to [out=down, in=right](2.5,-5)  
    to [out=right, in=up](3.5, -5.5)
    to (3.5, -6)
    to (4,-6);
\draw(3.5,3)
    to (3.5,-1.5)
    to [out=down, in=right](2.5,-2)
    to [out=right, in=up](3.5, -2.5)
    to (3.5, -3.5)
    to [out=down, in=right](2.5, -4)
    to [out=right, in=up](3.5, -4.5)
    to [out=down, in=right](2.5,-5)  
    to [out=right, in=up](3.5, -5.5)
    to (3.5, -6);
\draw (2.5, -5) to (1, -5);
\draw (2.5, -2) to (0, -2);
\draw (2.5, -4) to (1, -4);
\draw (-0.5, -1) to (0, -1);
\node (p) [minimum width=64pt, draw, fill=white, minimum height=0.8*19pt, fill opacity=1] at (1, 1.5) {$\phi$};   
\node (c) [Vertex, scale=0.5, vertex colour=white] at ([xshift=-14.5pt, yshift=1pt]p.south) {};
\node (c) [Vertex, scale=0.5, vertex colour=white] at ([xshift=-71pt, yshift=1pt]p.south) {};
\node (c) [Vertex, scale=0.5, vertex colour=white] at ([xshift=71pt, yshift=1pt]p.south) {};
\node (c) [Vertex, scale=0.5, vertex colour=white] at ([xshift=14.5pt, yshift=1pt]p.south) {};
\node [minimum height=0.8*70pt,minimum width=0.8*100pt, draw,orange, dashed, ultra thick, scale=0.8] at (1, -2.5) {};   
\end{tikzpicture}
\end{aligned}
\quad
\stackrel{\eqref{eq:complementaryfamily}}{=}
\quad
\Pd\,
\begin{aligned}
\begin{tikzpicture}[scale=0.4, thick, xscale=-1]
\draw [fill=\fillcomp, fill opacity=0.8, draw=none] (-0.5,3.5)
    to (-0.5,-3.5)
    to [out=down,  in=down, looseness=2](0.5,-3.5)
    to (0.5,3.5);
\draw(-0.5,3.5)
    to (-0.5,-3.5)
    to [out=down,  in=down, looseness=2](0.5,-3.5)
    to (0.5,3.5);
\draw [fill=\fillcomp, fill opacity=0.8, draw=none] (1.5,3.5)
    to (1.5,1)
    to [out=down,  in=left] (2,0.5)
    to [out=right, in=down](2.5,1)
    to (2.5,3.5);
\draw  (1.5,3.5)
    to (1.5,1)
    to [out=down,  in=left] (2,0.5)
    to [out=right, in=down](2.5,1)
    to (2.5,3.5);    
\draw [fill=\fillcomp, fill opacity=0.8](1.5,-2) 
to (1.5,-0.5)
to [out=up, in=up, looseness=2](2.5,-0.5)
to (2.5,-2.5)
to [out=down, in=down, looseness=2](1.5, -2.5)
to (1.5, -2);
\draw [fill=\fillC, fill opacity=0.8, draw=none] (-2.0,3.5)
to (-1.5,3.5)
    to (-1.5,1.5)
    to (-1.5, -2)to
    (-1.5, -5.5)
    to (-2.0, -5.5);
\draw  (-1.5,3.5) to (-1.5, -5.5);
\draw [fill=\fillC, fill opacity=0.8, draw=none] (4.0,3.5)
    to (3.5,3.5)
    to (3.5, -2.25)
    to [out=down, in=right](2.5, -3) 
    to [out=right, in=up](3.5, -3.75)
    to [out=down, in=right](2.5, -4.5)
    to [out=right, in=up](3.5, -5.25)
    to (3.5, -5.5)
    to (4,-5.5);  
\draw (3.5,3.5)
    to (3.5, -2.25)
    to [out=down, in=right](2.5, -3) 
    to [out=right, in=up](3.5, -3.75)
    to [out=down, in=right](2.5, -4.5)
    to [out=right, in=up](3.5, -5.25)
    to (3.5, -5.5); 
    \draw (2.5, -4.5) to (2, -4.5); 
    \draw (2.5, -3) to (2, -3);
\draw [fill=\fillcomp, fill opacity=0.8, draw=none] (1.5,-5.5)
    to (1.5,-5)
    to [out=up,  in=up, looseness=2] (2.5,-5)
    to (2.5,-5.5);
\draw (1.5,-5.5)
    to (1.5,-5)
    to [out=up,  in=up, looseness=2] (2.5,-5)
    to (2.5,-5.5);    
\node (c) [Vertex, scale=\vertexsize, vertex colour=white] at (2, -3) {};
\node (c) [Vertex, scale=\vertexsize, vertex colour=white] at (2, -4.5) {};
\draw (2, -3) to (2, -4.5);
\node (p)[minimum width=80pt, draw, fill=white, minimum height=19pt, fill opacity=1, scale=0.8] at (1, 2) {$\phi$};   
\node (q)[minimum width=80pt, draw, fill=white, minimum height=19pt, fill opacity=1, scale=0.8] at (1, -1.5) {$\phi^{\dagger}$}; 
\node (c) [Vertex, scale=0.5, vertex colour=white] at ([xshift=-14.5pt, yshift=1pt]p.south) {};
\node (c) [Vertex, scale=0.5, vertex colour=white] at ([xshift=-71pt, yshift=1pt]p.south) {};
\node (c) [Vertex, scale=0.5, vertex colour=white] at ([xshift=71pt, yshift=1pt]p.south) {};
\node (c) [Vertex, scale=0.5, vertex colour=white] at ([xshift=14.5pt, yshift=1pt]p.south) {};
\node (c) [Vertex, scale=0.5, vertex colour=white] at ([xshift=-14.5pt, yshift=1pt]q.south) {};
\node (c) [Vertex, scale=0.5, vertex colour=white] at ([xshift=-71pt, yshift=1pt]q.south) {};
\node (c) [Vertex, scale=0.5, vertex colour=white] at ([xshift=71pt, yshift=1pt]q.south) {};
\node (c) [Vertex, scale=0.5, vertex colour=white] at ([xshift=14.5pt, yshift=1pt]q.south) {};
\node [minimum height=0.8*40pt,minimum width=0.8*80pt, draw,orange, dashed, ultra thick, scale=1] at (1, -2) {}; 
\end{tikzpicture}
\end{aligned}
\quad\stackrel{\eqref{eq:controlledmeasurementunitarity}}{=}\quad
\Pd\,
\begin{aligned}
\begin{tikzpicture}[scale=0.4, thick, xscale=-1]
\draw [fill=\fillcomp, fill opacity=0.8, draw=none] (-0.5,3.5)
    to (-0.5,-3.5)
    to [out=down,  in=down, looseness=2](0.5,-3.5)
    to (0.5,3.5);
\draw(-0.5,3.5)
    to (-0.5,-3.5)
    to [out=down,  in=down, looseness=2](0.5,-3.5)
    to (0.5,3.5);
\draw [fill=\fillcomp, fill opacity=0.8, draw=none] (1.5,3.5)
    to (1.5,1)
    to [out=down,  in=left] (2,0.5)
    to [out=right, in=down](2.5,1)
    to (2.5,3.5);
\draw  (1.5,3.5)
    to (1.5,1)
    to [out=down,  in=left] (2,0.5)
    to [out=right, in=down](2.5,1)
    to (2.5,3.5);    
\draw [fill=\fillC, fill opacity=0.8, draw=none] (-2.0,3.5)
to (-1.5,3.5)
    to (-1.5,1.5)
    to (-1.5, -2)to
    (-1.5, -5.5)
    to (-2.0, -5.5);
\draw  (-1.5,3.5) to (-1.5, -5.5);
\draw [fill=\fillC, fill opacity=0.8, draw=none] (4.0,3.5)
    to (3.5,3.5)
    to (3.5, -2.25)
    to (3.5, -5.5)
    to (4,-5.5);  
\draw (3.5,3.5)
    to (3.5, -2.25)
   to (3.5, -5.5); 
\draw [fill=\fillcomp, fill opacity=0.8, draw=none] (1.5,-5.5)
    to (1.5,-0.5)
    to [out=up,  in=up, looseness=2] (2.5,-0.5)
    to (2.5,-5.5);
\draw (1.5,-5.5)
    to (1.5,-0.5)
    to [out=up,  in=up, looseness=2] (2.5,-0.5)
    to (2.5,-5.5);    
\node (p) [minimum width=80pt, draw, fill=white, minimum height=19pt, fill opacity=1, scale=0.8] at (1, 2) {$\phi$};   
\node (q) [minimum width=80pt, draw, fill=white, minimum height=19pt, fill opacity=1, scale=0.8] at (1, -1.5) {$\phi^{\dagger}$}; 
\node (c) [Vertex, scale=0.5, vertex colour=white] at ([xshift=-14.5pt, yshift=1pt]p.south) {};
\node (c) [Vertex, scale=0.5, vertex colour=white] at ([xshift=-71pt, yshift=1pt]p.south) {};
\node (c) [Vertex, scale=0.5, vertex colour=white] at ([xshift=71pt, yshift=1pt]p.south) {};
\node (c) [Vertex, scale=0.5, vertex colour=white] at ([xshift=14.5pt, yshift=1pt]p.south) {};
\node (c) [Vertex, scale=0.5, vertex colour=white] at ([xshift=-14.5pt, yshift=1pt]q.south) {};
\node (c) [Vertex, scale=0.5, vertex colour=white] at ([xshift=-71pt, yshift=1pt]q.south) {};
\node (c) [Vertex, scale=0.5, vertex colour=white] at ([xshift=71pt, yshift=1pt]q.south) {};
\node (c) [Vertex, scale=0.5, vertex colour=white] at ([xshift=14.5pt, yshift=1pt]q.south) {};
\end{tikzpicture}
\end{aligned}
\\
\Leftrightarrow\quad
\Pd\,
\begin{aligned}
\begin{tikzpicture} [scale=0.3,thick]
\node (A) [Vertex, scale=\vertexsize, vertex colour=white] at (-2,-2) {};
\node (B) [Vertex, scale=\vertexsize] at (1,-0.5) {};
\node (C) [Vertex, scale=\vertexsize, vertex colour=white] at (0, 2.5) {};
\node (D) [Vertex, scale=\vertexsize] at (0, 1.5) {};
\draw [out=up, in=down] (-2,-2) to (1,-0.5);
\draw [out=up, in=down] (0, 1.5) to (0, 2.5);
\fill [fill=\fillcomp, draw, fill opacity=0.8]  (0.5,5)
to (0.5, 3)
to [out=down, in=down, looseness=2] (-0.5, 3)
to (-0.5,5);
\fill [fill=\fillcomp, draw, fill opacity=0.8]  (1.5,5) 
to (1.5,1)
to [out=down, in=down, looseness=2](0.5,1)
to (0.5, 1)
to [out=up, in=up, looseness=2] (-0.5,1)
to [out=down, in=left](1,-0.5)
to [out=right, in=down](2.5,1)
to (2.5,5);
\fill [fill=\fillcomp, fill opacity=0.8, draw] (-5.5, 5)
to [out=down, in=up](-5.5, 0)
to (-5.5, -2.5)
to [out=down, in=down, looseness=2](-2.5, -2.5)
to [out=up, in=up, looseness=2](-1.5, -2.5)
to [out=down, in=down, looseness=1.9](-6.5, -2.5)
to (-6.5, 0)
to [out=up, in=down, out looseness=1.3, in looseness=0.9](-6.5, 5);
\fill [fill=\fillC, fill opacity=0.8, draw] (-4.5, 5)
to (-4.5, -2.75)
to [out=down, in=down, looseness=2](-3.5, -2.75)
to [out=up, in=left] (-2,-2)
to [out=left, in=down] (-3.5, -1.25)
to (-3.5, 3)
to [out=up, in=up, looseness=2](-2.5, 3)
to (-2.5, 0.5)
to [out=down, in=down, looseness=2] (-1.5,0.5)
to (-1.5,1.75)
to [out=up, in=left] (0, 2.5)
to [out=left, in=down] (-1.5, 3.25)
to [out=up, in=down](-1.5, 5);
\fill [fill=\fillC, fill opacity=0.8, draw] (3.5,5)
to (3.5,2.25)
to [out=down, in=right](2, 1.5)
to [out=right, in=up](3.5, 0.75)
to (3.5, 0.25)
to [out=down, in=right](2,-0.5)
to [out=right, in=up](3.5,-1.25)
to [out=down, in=down, looseness=2](4.5,-1.25)
to (4.5,5);
\draw (2, -0.5) to (1, -0.5);
\draw (2, 1.5) to (0, 1.5);
\end{tikzpicture}
\end{aligned}
\quad=\quad
\Pd\,
\begin{aligned}
\begin{tikzpicture}[scale=0.3,thick]
\fill [white] (1.5, 6)
to (1.5, 6)
to [out=down, in=up](-2.5, 0)
to (-2.5, -1.5)
to [out=down, in=down, looseness=2](0.5, -1.5)
to [out=up, in=up, looseness=2](1.5, -1.5)
to [out=down, in=down, looseness=1.9](-3.5, -1.5)
to (-3.5, 0)
to [out=up, in=down, out looseness=1.3, in looseness=0.9](0.5, 6)
to (0.5, 6);
\fill [fill=\fillC, fill opacity=0.8, draw] (-2,6)
to [out=down, in=up, in looseness=1.2, out looseness=0.5](-4, 4)
to (-4,-1)
to [out=down, in=down, looseness=2](-3,-1)
to (-3,4)
to [out=up, in=down](-1,6);
\fill [fill=\fillcomp, fill opacity=0.8, draw] (-4,6)
to [out=down, in=up](-2, 4)
to (-2,-1)
to [out=down, in=down, looseness=2](-1,-1)
to (-1,4)
to [out=up, in=down, out looseness=1.2, in looseness=0.5] (-3, 6);
\fill [fill=\fillcomp, fill opacity=0.8, draw] (1, 6)
to (1,2.2)
to [out=down, in=down, looseness=2](0,2.2)
to (0,6);
\fill [fill=\fillcomp, fill opacity=0.8, draw] (2, 6)
to (2,-1)
to [out=down, in=down, looseness=2](3,-1)
to (3,6);
\fill [fill=\fillC, fill opacity=0.8, draw] (4, 6)
to (4,-1)
to [out=down, in=down, looseness=2](5,-1)
to (5,6);
\node (p) [minimum width=90pt, draw, fill=white, minimum height=19pt, fill opacity=1, scale=0.75] at (0.5, 3) {$\phi$};
\node (q) [minimum width=90pt, draw, fill=white, minimum height=19pt, fill opacity=1, scale=0.75] at (0.5, 0.5) {$\phi^{\dagger}$};
\node (c) [Vertex, scale=0.5, vertex colour=white] at ([xshift=-43pt, yshift=1pt]q.south) {};
\node (c) [Vertex, scale=0.5, vertex colour=white] at ([xshift=-99.5pt, yshift=1pt]p.south) {};
\node (c) [Vertex, scale=0.5, vertex colour=white] at ([xshift=99.5pt, yshift=1pt]q.south) {};
\node (c) [Vertex, scale=0.5, vertex colour=white] at ([xshift=43pt, yshift=1pt]q.south) {};
\node (c) [Vertex, scale=0.5, vertex colour=white] at ([xshift=14pt, yshift=1pt]p.south) {};
\node (c) [Vertex, scale=0.5, vertex colour=white] at ([xshift=-99.5pt, yshift=1pt]q.south) {};
\node (c) [Vertex, scale=0.5, vertex colour=white] at ([xshift=99.5pt, yshift=1pt]p.south) {};
\node (c) [Vertex, scale=0.5, vertex colour=white] at ([xshift=43pt, yshift=1pt]p.south) {};
\end{tikzpicture}
\end{aligned}
\end{gather*}
The final equality follows from the first chain of equalities by topological deformation. This final equality is equivalent to the statement of BB84 quantum key distribution as given in Definition~\ref{BB84QKD}, since by Lemma~\ref{Eve's successful interference} the $\Ps$ component is trivially satisfied.
$\hfill \square$

\noindent
{\bf Proof of Lemma~\ref{lemma:QKDid}.}
If Eve picks the wrong basis but does not influence the communication between Alice and Bob, their key information is still the same. We post-select on this scenario by applying a projector $\Pd$ to pools of classical information corresponding to Alice's, Bob's and Eve's basis information and by applying the comparison operation to pools corresponding to Bob's and Alice's key information:
\begin{align}
\Pd
\begin{aligned}
\begin{tikzpicture} [scale=0.3,thick]
\node (A) [Vertex, scale=\vertexsize, vertex colour=white] at (-2,-2) {};
\node (B) [Vertex, scale=\vertexsize] at (1,-0.5) {};
\node (C) [Vertex, scale=\vertexsize, vertex colour=white] at (0, 2.5) {};
\node (D) [Vertex, scale=\vertexsize] at (0, 1.5) {};
\draw [out=up, in=down] (-2,-2) to (1,-0.5);
\draw [out=up, in=down] (0, 1.5) to (0, 2.5);
\fill [fill=\fillcomp, draw, fill opacity=0.8]  (1.5,5) 
to (1.5,1)
to [out=down, in=down, looseness=2](0.5,1)
to (0.5, 1)
to [out=up, in=up, looseness=2] (-0.5,1)
to [out=down, in=left](1,-0.5)
to [out=right, in=down](2.5,1)
to (2.5,5);
\fill [fill=\fillcomp, fill opacity=0.8, draw=none](0.5,5)
to (0.5, 3)
to [out=down, in=down, looseness=2] (-0.5, 3)
to [out=up, in=up, looseness=1](-5.5, 3)
to [out=down, in=up](-5.5, 0)
to (-5.5, -2.5)
to [out=down, in=down, looseness=2](-2.5, -2.5)
to [out=up, in=up, looseness=2](-1.5, -2.5)
to [out=down, in=down, looseness=1.9](-6.5, -2.5)
to (-6.5, 0)
to [out=up, in=down, out looseness=1.3, in looseness=0.9](-6.5, 5);
\draw(0.5,5)
to (0.5, 3)
to [out=down, in=down, looseness=2] (-0.5, 3)
to [out=up, in=up, looseness=1](-5.5, 3)
to [out=down, in=up](-5.5, 0)
to (-5.5, -2.5)
to [out=down, in=down, looseness=2](-2.5, -2.5)
to [out=up, in=up, looseness=2](-1.5, -2.5)
to [out=down, in=down, looseness=1.9](-6.5, -2.5)
to (-6.5, 0)
to [out=up, in=down, out looseness=1.3, in looseness=0.9](-6.5, 5);
\fill [fill=\fillC, fill opacity=0.8, draw] (-4.5, 5)
to (-4.5, -2.75)
to [out=down, in=down, looseness=2](-3.5, -2.75)
to [out=up, in=left] (-2,-2)
to [out=left, in=down] (-3.5, -1.25)
to (-3.5, 3)
to [out=up, in=up, looseness=2](-2.5, 3)
to (-2.5, 0.5)
to [out=down, in=down, looseness=2] (-1.5,0.5)
to (-1.5,1.75)
to [out=up, in=left] (0, 2.5)
to [out=left, in=down] (-1.5, 3.25)
to [out=up, in=down](-1.5, 5);
\fill [fill=\fillC, fill opacity=0.8, draw] (3.5,5)
to (3.5,2.25)
to [out=down, in=right](2, 1.5)
to [out=right, in=up](3.5, 0.75)
to (3.5, 0.25)
to [out=down, in=right](2,-0.5)
to [out=right, in=up](3.5,-1.25)
to [out=down, in=down, looseness=2](4.5,-1.25)
to (4.5,5);
\draw (2, -0.5) to (1, -0.5);
\draw (2, 1.5) to (0, 1.5);
\end{tikzpicture}
\end{aligned}
\quad=\quad
\Pd
\begin{aligned}
\begin{tikzpicture}[scale=0.3,thick]
\draw [white] (1.5, 7)
to (1.5, 7)
to [out=down, in=up](-2.5, 0)
to (-2.5, -0.5)
to [out=down, in=down, looseness=2](-2.5, -0.5)
to [out=up, in=up, looseness=2](1.5, -0.5)
to [out=down, in=down, looseness=1.9](-3.5, -0.5)
to (-3.5, 2)
to [out=up, in=down, out looseness=1.3, in looseness=0.9](0.5, 7)
to (0.5, 7);
\draw [fill=\fillcomp, fill opacity=0.8, draw=none] (-4,7)
to (-4,5)
to [out=down, in=up](-2, 3)
to (-2,1)
to [out=down, in=down, looseness=2](-1,1)
to (-1,3)
to [out=up, in=down, out looseness=1.2, in looseness=0.5] (-3, 5)
to [out=up, in=up, looseness=0.7](0, 5)
to (0,1)
to [out=down, in=down, looseness=2](1,1)
to (1,7);
\draw (-4,7)
to (-4,5)
to [out=down, in=up](-2, 3)
to (-2,1)
to [out=down, in=down, looseness=2](-1,1)
to (-1,3)
to [out=up, in=down, out looseness=1.2, in looseness=0.5] (-3, 5)
to [out=up, in=up, looseness=0.7](0, 5)
to (0,1)
to [out=down, in=down, looseness=2](1,1)
to (1,7)
(0,7);
\draw [fill=\fillC, fill opacity=0.8] (-2,7)
to (-2, 5)
to [out=down, in=up, in looseness=1.2, out looseness=0.5](-4, 3)
to (-4,1)
to [out=down, in=down, looseness=2](-3,1)
to (-3,3)
to [out=up, in=down](-1,5)
to (-1,7);
\draw [fill=\fillcomp, fill opacity=0.8] (2, 7)
to (2,1)
to [out=down, in=down, looseness=2](3,1)
to (3,7);
\draw [fill=\fillC, fill opacity=0.8] (4, 7)
to (4,1)
to [out=down, in=down, looseness=2](5,1)
to (5,7);
\node (p) [minimum width=90pt, draw, fill=white, fill opacity=1] at (0.5, 2.5) {$\psi$};
\node (c) [Vertex, scale=0.5, vertex colour=white] at ([xshift=-99.5pt, yshift=1pt]p.south) {};
\node (c) [Vertex, scale=0.5, vertex colour=white] at ([xshift=14pt, yshift=1pt]p.south) {};
\node (c) [Vertex, scale=0.5, vertex colour=white] at ([xshift=99.5pt, yshift=1pt]p.south) {};
\node (c) [Vertex, scale=0.5, vertex colour=white] at ([xshift=42.5pt, yshift=1pt]p.south){};
\node (c) [Vertex, scale=0.5, vertex colour=white] at ([xshift=-42.5pt, yshift=1pt]p.south) {};
\end{tikzpicture}
\end{aligned}
\end{align}
Using topology-preserving elementary 2\-cell operations, the first equality in~\ref{eq: QKDid} is obtained. For the second equality, up to application of $\Pd$ on the outer pools of classical information, the middle 2\-cell in equation~\ref{eq: QKDid} is a unitary, since $\psi$ is a unitary. Also, $\alpha^{\dagger}\circ\alpha$ is a positive map. The only positive unitary is the identity, hence the result is established.
$\hfill \square$

\noindent
{\bf Proof of Lemma~\ref{lemma: auxilliary MKP complementarity result}.}
We use the fact that these controlled measurements form a family of complementary controlled operations. Hence equation (\ref{th:controlled}) holds, and we combine it with the classical function $g$ to obtain the left-hand side given below:
\begin{align*}
\Pd\,\left[
\begin{aligned}
\begin{tikzpicture} [thick, scale=0.25]
\draw (2,1) to (2,-1);
\draw (-2,-3) to (-2,3);
\draw (2,3) to (-2,3);
\draw (2,-1) to (-2,-1);
\draw [fill=\fillcomp, fill opacity=0.8] (2.3, 8)
to (2.3, 1.25)
to [out=down, in=down, looseness=2](1.7, 1.25)
to (1.7, 3.4)
to [out=up, in=up, looseness=2](-1.7, 3.4)
to [out=down, in=down, looseness=2] (-2.3, 3.4)
to (-2.3, 8);
\draw [fill=\fillcomp, fill opacity=0.8] 
(-0.5, -7)
to (-0.5, -6)
to [out=up, in=down, in looseness=2] (-2.3, -3.4)
to [out=up, in=up, looseness=2](-1.7, -3.4)
to [out=down, in=down, looseness=1.5](1.7, -3.4)
to (1.7, -1.25)
to [out=up, in=up, looseness=2](2.3, -1.25)
to (2.3, -3.4)
to [out=down, in=up, out looseness=2](0.5, -6)
to (0.5, -7);
\draw [fill=\fillC, fill opacity=0.8] 
(-0.5, -8)
to (-0.5, -7)
to (0.5, -7)
to (0.5, -8);
\draw [fill=\fillC, draw=none, fill opacity=0.8] (4, -8)
to (3,-8)
to (3,0)
to [out=up, in=right](2, 1)
to [out=right, in=down](3,2)
to [out=up, in=right](1.5,3)
to [out=right, in=down](3,5)
to (3,8)
to (4, 8);
\draw (3,-8)
to (3,0)
to [out=up, in=right](2, 1)
to [out=right, in=down](3,2)
to [out=up, in=right](1.5,3)
to [out=right, in=down](3,5)
to (3,8);
\draw [fill=\fillC, draw=none, fill opacity=0.8] (-4, -8)
to (-3,-8)
to (-3, -4.5)
to (-3,-4)
to [out=up, in=left](-2,-3)
to [out=left, in=down](-3, -2)
to [out=up, in=left](-1.5,-1)
to [out=left, in=down](-3,0)
to (-3, 0)
to (-3, 8)
to (-4, 8);
\draw  (-3,-8)
to (-3, -4.5)
to (-3,-4)
to [out=up, in=left](-2,-3)
to [out=left, in=down](-3, -2)
to [out=up, in=left](-1.5,-1)
to [out=left, in=down](-3,0)
to (-3, 0)
to (-3, 8);
\node [Vertex, scale=\vertexsize, vertex colour=white] at (2,1) {};
\node [Vertex, scale=\vertexsize, vertex colour=white] at (-2,-3) {};
\node [Vertex, scale=\vertexsize, vertex colour=black] at (2,-1) {};
\node [Vertex, scale=\vertexsize, vertex colour=black] at (-2,3) {};
\node (e) [fill=white, draw, minimum width=30pt, scale=0.6, fill opacity=1] at (0,-6.7) {$g$};
\end{tikzpicture}
\end{aligned}
=
\frac{\Pd}{n}
\begin{aligned}
\begin{tikzpicture}[thick, scale=0.25]
\draw [thick, fill=\fillcomp, fill opacity=0.8] (0.5, 8)
    to (0.5, 3) 
    to [out=down, in=down, looseness=1.5] (-0.5,3)
    to (-0.5, 8);
\draw [thick, fill=\fillcomp, fill opacity=0.8] (0.5, -7)
    to (0.5, -3) 
    to [out=up, in=up, looseness=1.5] (-0.5,-3)
    to (-0.5, -7);    
\draw [fill=\fillC, fill opacity=0.8, draw=none] (3, -8) to (2,-8)
to (2, 8)
to (3, 8);
\draw [fill=\fillC, fill opacity=0.8, draw=none] (-3,-8) 
to (-2,-8)
to (-2, 8)
to (-3, 8);
\draw [fill=\fillC, fill opacity=0.8] 
(-0.5, -8)
to (-0.5, -7)
to (0.5, -7)
to (0.5, -8);
\draw (-2,-8) to (-2, 8);
\draw (2,-8) to (2, 8);
\node (e) [fill=white, draw, minimum width=30pt, scale=0.6, fill opacity=1] at (0,-6.7) {$g$};
\end{tikzpicture}
\end{aligned}
\right]
\quad
\Leftrightarrow
\quad
\left[
\begin{aligned}
\begin{tikzpicture} [thick, scale=0.25]
\draw (2,1) to (2,-1);
\draw (-2,-3) to (-2,3);
\draw (2,3) to (-2,3);
\draw (2,-1) to (-2,-1);
\draw [fill=\fillcomp, fill opacity=0.8] (2.3, 8)
to (2.3, 1.25)
to [out=down, in=down, looseness=2](1.7, 1.25)
to (1.7, 3.4)
to [out=up, in=up, looseness=2](-1.7, 3.4)
to [out=down, in=down, looseness=2] (-2.3, 3.4)
to (-2.3, 8);
\draw [fill=\fillcomp, fill opacity=0.8] 
(-0.5, -7)
to (-0.5, -6)
to [out=up, in=down, in looseness=2](-2.3, -3.4)
to [out=up, in=up, looseness=2](-1.7, -3.4)
to [out=down, in=down, looseness=1.5](1.7, -3.4)
to (1.7, -1.25)
to [out=up, in=up, looseness=2](2.3, -1.25)
to (2.3, -3.4)
to [out=down, in=up, out looseness=2](0.5, -6)
to (0.5, -7);
\draw [fill=\fillC, fill opacity=0.8] 
(-0.5, -8)
to (-0.5, -7)
to (0.5, -7)
to (0.5, -8);
\draw [fill=\fillC, draw=none, fill opacity=0.8] (4, -8)
to (3,-8)
to (3,0)
to [out=up, in=right](2, 1)
to [out=right, in=down](3,2)
to [out=up, in=right](1.5,3)
to [out=right, in=down](3,5)
to (3,8)
to (4, 8);
\draw (3,-8)
to (3,0)
to [out=up, in=right](2, 1)
to [out=right, in=down](3,2)
to [out=up, in=right](1.5,3)
to [out=right, in=down](3,5)
to (3,8);
\draw [fill=\fillC, draw=none, fill opacity=0.8] (-4, -8)
to (-3,-8)
to (-3, -4.5)
to (-3,-4)
to [out=up, in=left](-2,-3)
to [out=left, in=down](-3, -2)
to [out=up, in=left](-1.5,-1)
to [out=left, in=down](-3,0)
to (-3, 0)
to (-3, 8)
to (-4, 8);
\draw  (-3,-8)
to (-3, -4.5)
to (-3,-4)
to [out=up, in=left](-2,-3)
to [out=left, in=down](-3, -2)
to [out=up, in=left](-1.5,-1)
to [out=left, in=down](-3,0)
to (-3, 0)
to (-3, 8);
\node [Vertex, scale=\vertexsize, vertex colour=white] at (2,1) {};
\node [Vertex, scale=\vertexsize, vertex colour=white] at (-2,-3) {};
\node [Vertex, scale=\vertexsize, vertex colour=black] at (2,-1) {};
\node [Vertex, scale=\vertexsize, vertex colour=black] at (-2,3) {};
\node (e) [fill=white, draw, minimum width=30pt, scale=0.6, fill opacity=1] at (0,-6.7) {$g$};
\end{tikzpicture}
\end{aligned}
-
\begin{aligned}
\begin{tikzpicture} [thick, scale=0.25]
\draw (2,1) to (2,-1);
\draw (-2,-3) to (-2,3);
\draw (2,3) to (-2,3);
\draw (2,-1) to (-2,-1);
\draw [fill=\fillcomp, fill opacity=0.8] (2.3, 8)
to (2.3, 1.25)
to [out=down, in=down, looseness=2](1.7, 1.25)
to (1.7, 3.4)
to [out=up, in=up, looseness=2](-1.7, 3.4)
to [out=down, in=down, looseness=2] (-2.3, 3.4)
to (-2.3, 8);
\draw [fill=\fillcomp, fill opacity=0.8] 
(-0.5, -7)
to (-0.5, -6)
to [out=up, in=down, in looseness=2](-2.3, -3.4)
to [out=up, in=up, looseness=2](-1.7, -3.4)
to [out=down, in=down, looseness=1.5](1.7, -3.4)
to (1.7, -1.25)
to [out=up, in=up, looseness=2](2.3, -1.25)
to (2.3, -3.4)
to [out=down, in=up, out looseness=2](0.5, -6)
to (0.5, -7);
\draw [fill=\fillC, fill opacity=0.8] 
(-0.5, -8)
to (-0.5, -7)
to (0.5, -7)
to (0.5, -8);
\draw [fill=\fillC, draw=none, fill opacity=0.8] (4, -8)
to (3,-8)
to (3,0)
to [out=up, in=right](2, 1)
to [out=right, in=down](3,2)
to [out=up, in=right](1.5,3)
to [out=right, in=down](3,5)
to (3,5)
to [out=up,in=up, looseness=0.7](-3, 5)
to (-3, 0)
to [out=down, in=left](-1.5,-1)
to [out=left, in=up](-3, -2)
to [out=down, in=left](-2,-3)
to [out=left, in=up](-3,-4)
to (-3, -4.5)
to (-3,-8)
to (-4, -8)
to (-4, 8)
to (-3, 8)
to [out=down, in=down, looseness=0.7](3, 8)
to (4, 8);
\draw (3,-8)
to (3,0)
to [out=up, in=right](2, 1)
to [out=right, in=down](3,2)
to [out=up, in=right](1.5,3)
to [out=right, in=down](3,5)
to (3,5)
to [out=up,in=up, looseness=0.7](-3, 5)
to (-3, 0)
to [out=down, in=left](-1.5,-1)
to [out=left, in=up](-3, -2)
to [out=down, in=left](-2,-3)
to [out=left, in=up](-3,-4)
to (-3, -4.5)
to (-3,-8);
\draw (-3, 8) to [out=down, in=down, looseness=0.7](3, 8);
\node [Vertex, scale=\vertexsize, vertex colour=white] at (2,1) {};
\node [Vertex, scale=\vertexsize, vertex colour=white] at (-2,-3) {};
\node [Vertex, scale=\vertexsize, vertex colour=black] at (2,-1) {};
\node [Vertex, scale=\vertexsize, vertex colour=black] at (-2,3) {};
\node (e) [fill=white, draw, minimum width=30pt, scale=0.6, fill opacity=1] at (0,-6.7) {$g$};
\end{tikzpicture}
\end{aligned}
=
\frac{1}{n}
\left[
\begin{aligned}
\begin{tikzpicture}[thick, scale=0.25]
\draw [thick, fill=\fillcomp, fill opacity=0.8] (0.5, 8)
    to (0.5, 3) 
    to [out=down, in=down, looseness=1.5] (-0.5,3)
    to (-0.5, 8);
\draw [thick, fill=\fillcomp, fill opacity=0.8] (0.5, -7)
    to (0.5, -3) 
    to [out=up, in=up, looseness=1.5] (-0.5,-3)
    to (-0.5, -7);
\draw [thick, fill=\fillC, fill opacity=0.8] (0.5, -8)
    to (0.5, -7) 
    to  (-0.5,-7)
    to (-0.5, -8);        
\draw [fill=\fillC, fill opacity=0.8, draw=none] (3, -8) to (2,-8)
to (2, 8)
to (3, 8);
\draw [fill=\fillC, fill opacity=0.8, draw=none] (-3,-8) 
to (-2,-8)
to (-2, 8)
to (-3, 8);
\draw (-2,-8) to (-2, 8);
\draw (2,-8) to (2, 8);
\node (e) [fill=white, draw, minimum width=30pt, scale=0.6, fill opacity=1] at (0,-6.7) {$g$};
\end{tikzpicture}
\end{aligned}
-
\begin{aligned}
\begin{tikzpicture}[thick, scale=0.25]
\draw [thick, fill=\fillcomp, fill opacity=0.8] (0.5, 8)
    to (0.5, 3) 
    to [out=down, in=down, looseness=1.5] (-0.5,3)
    to (-0.5, 8);
\draw [thick, fill=\fillcomp, fill opacity=0.8] (0.5, -7)
    to (0.5, -3) 
    to [out=up, in=up, looseness=1.5] (-0.5,-3)
    to (-0.5, -7);
\draw [thick, fill=\fillC, fill opacity=0.8] (0.5, -8)
    to (0.5, -7) 
    to  (-0.5,-7)
    to (-0.5, -8);  
\draw [fill=\fillC, draw=none, fill opacity=0.8] (-3,-8) 
to (-2,-8)
to (-2, 4)
to [out=up, in=up](2, 4)
to (2, -8)
to (3, -8)
to (3, 8)
to (2, 8)
to [out=down, in=down, looseness=1](-2, 8)
to (-3, 8);
\draw (-2,-8) to (-2, 4)
to [out=up, in=up](2, 4)
to (2, -8);
\draw (-2,8) to [out=down, in=down](2, 8);
\node (e) [fill=white, draw, minimum width=30pt, scale=0.6, fill opacity=1] at (0,-6.7) {$g$};
\end{tikzpicture}
\end{aligned}
\right] \right]
\end{align*}
The right-hand side is obtained by expanding out the action of the projectors $\Pd$. We next assign specific values $a,b$ to pools of classical information and perform elementary 2\-cell operations. We can replace black vertices with white, as long as we switch the side from which the vertex is controlled. Since pools of classical information exhibit topological behaviour, we can reposition them freely.

\begin{align*}
\left[
\begin{aligned}
\begin{tikzpicture} [thick, scale=0.2]
\draw (1.5,1) to (1.5,-1);
\draw (-1.5,-3) to (-1.5,3);
\draw (1.5,3) to (-1.5,3);
\draw (1.5,-1) to (-1.5,-1);
\draw [fill=\fillcomp, fill opacity=0.8] (1.8, 7)
to (1.8, 1.25)
to [out=down, in=down, looseness=2](1.2, 1.25)
to (1.2, 3.4)
to [out=up, in=up, looseness=2](-1.2, 3.4)
to [out=down, in=down, looseness=2] (-1.8, 3.4)
to (-1.8, 7)
to [out=up, in=up ,looseness=1](1.8, 7);
\draw [fill=\fillcomp, fill opacity=0.8] (0.5, -6)
to [out=up, in=down, in looseness=2](1.8, -3.4)
to (1.8, -1.25)
to [out=up, in=up, looseness=2](1.2, -1.25)
to (1.2, -3.4)
to [out=down, in=down, looseness=2](-1.2, -3.4)
to [out=up, in=up, looseness=2](-1.8, -3.4)
to [out=down, in=up, out looseness=2](-0.5, -6);
\draw [fill=\fillC, fill opacity=0.8] (3.5, -8)
to [out=down, in=down ,looseness=2](2.5,-8)
to (2.5,0)
to [out=up, in=right](1.5, 1)
to [out=right, in=down](2.5,2)
to [out=up, in=right](1.5,3)
to [out=right, in=down](2.5,5)
to (2.5,8)
to [out=up, in=up ,looseness=2](3.5, 8)
to (3.5, -8);
\draw [fill=\fillC, fill opacity=0.8] (-2.5,-8)
to (-2.5, -4.5)
to (-2.5,-4)
to [out=up, in=left](-1.5,-3)
to [out=left, in=down](-2.5, -2)
to [out=up, in=left](-1.5,-1)
to [out=left, in=down](-2.5,0)
to (-2.5, 0)
to (-2.5, 8)
to [out=up, in=up, looseness=2](-3.5, 8)
to (-3.5, -8)
to [out=down, in=down](0.5, -8)
to (0.5, -7)
to (-0.5, -7)
to  (-0.5, -8)
to [out=down, in=down](-2.5, -8);
\node [Vertex, scale=\vertexsize, vertex colour=white] at (1.5,1) {};
\node [Vertex, scale=\vertexsize, vertex colour=white] at (-1.5,-3) {};
\node [Vertex, scale=\vertexsize, vertex colour=black] at (1.5,-1) {};
\node [Vertex, scale=\vertexsize, vertex colour=black] at (-1.5,3) {};
\node (e) [fill=white, draw, minimum width=30pt, scale=0.6, fill opacity=1] at (0,-6.7) {$g$};
\node [scale=\labelsize] at (3,6.5) {$a$};
\node [scale=\labelsize] at (0,6.5) {$b$};
\end{tikzpicture}
\end{aligned}
-
\begin{aligned}
\begin{tikzpicture} [thick, scale=0.2]
\draw (1.5,1) to (1.5,-1);
\draw (-1.5,-3) to (-1.5,3);
\draw (1.5,3) to (-1.5,3);
\draw (1.5,-1) to (-1.5,-1);
\draw [fill=\fillcomp, fill opacity=0.8] (0.5, -6)
to [out=up, in=down, in looseness=2](1.8, -3.4)
to (1.8, -1.25)
to [out=up, in=up, looseness=2](1.2, -1.25)
to (1.2, -3.4)
to [out=down, in=down, looseness=2](-1.2, -3.4)
to [out=up, in=up, looseness=2](-1.8, -3.4)
to [out=down, in=up, out looseness=2](-0.5, -6);
\draw [fill=\fillcomp, fill opacity=0.8] (1.8, 7)
to (1.8, 1.25)
to [out=down, in=down, looseness=2](1.2, 1.25)
to (1.2, 3.4)
to [out=up, in=up, looseness=2](-1.2, 3.4)
to [out=down, in=down, looseness=2] (-1.8, 3.4)
to (-1.8, 7)
to [out=up, in=up ,looseness=1](1.8, 7);
\draw [fill=\fillC, fill opacity=0.8] (-2.5,-8)
to (-2.5, -4.5)
to (-2.5,-4)
to [out=up, in=left](-1.5,-3)
to [out=left, in=down](-2.5, -2)
to [out=up, in=left](-1.5,-1)
to [out=left, in=down](-2.5,0)
to (-2.5, 0)
to (-2.5, 5)
to [out=up, in=up, looseness=0.7](2.5, 5)
to (2.5, 4)
to [out=down, in=right](1.5, 3)
to [out=right, in=up](2.5, 2)
to [out=down, in=right](1.5, 1)
to [out=right, in=up](2.5, 0)
to (2.5, -8)
to [out=down, in=down, looseness=2](3.5, -8)
to (3.5, 8)
to [out=up, in=up, looseness=2](2.5, 8)
to [out=down, in=down, looseness=0.7](-2.5, 8)
to [out=up, in=up, looseness=2](-3.5, 8)
to (-3.5, -8)
to [out=down, in=down](0.5, -8)
to (0.5, -7)
to (-0.5, -7)
to  (-0.5, -8)
to [out=down, in=down](-2.5, -8);
\node [Vertex, scale=\vertexsize, vertex colour=white] at (1.5,1) {};
\node [Vertex, scale=\vertexsize, vertex colour=white] at (-1.5,-3) {};
\node [Vertex, scale=\vertexsize, vertex colour=black] at (1.5,-1) {};
\node [Vertex, scale=\vertexsize, vertex colour=black] at (-1.5,3) {};
\node (e) [fill=white, draw, minimum width=30pt, scale=0.6, fill opacity=1] at (0,-6.7) {$g$};
\node [scale=\labelsize] at (3,6.5) {$a$};
\node [scale=\labelsize] at (0,5.5) {$b$};
\end{tikzpicture}
\end{aligned}
=
\frac{1}{n}
\left[
\begin{aligned}
\begin{tikzpicture}[thick, scale=0.2]
\draw [thick, fill=\fillcomp, fill opacity=0.8] (0.5, 8)
    to (0.5, 3) 
    to [out=down, in=down, looseness=1.5] (-0.5,3)
    to (-0.5, 8)
    to [out=up, in=up, looseness=2](0.5, 8); 
\draw [fill=\fillcomp, fill opacity=0.8] (3, -8) 
to [out=down, in=down, looseness=2](2,-8)
to (2, 8)
to  [out=up, in=up, looseness=2] (3, 8)
to (3, -8);
\draw [fill=\fillC, fill opacity=0.8] (-3,-8) 
to [out=down, in=down](0.5, -8)
to (0.5, -6)
to  (-0.5, -6)
to (-0.5, -8)
to [out=down, in=down](-2,-8)
to (-2, 8)
to [out=up, in=up, looseness=2](-3, 8)
to (-3, -8);
\draw [thick, fill=\fillcomp, fill opacity=0.8] (0.5, -6)
    to (0.5, -3) 
    to [out=up, in=up, looseness=1.5] (-0.5,-3)
    to (-0.5, -6); 
\node (e) [fill=white, draw, minimum width=30pt, scale=0.6, fill opacity=1] at (0,-5.5) {$g$};
\node [scale=\labelsize] at (2.5,5) {$a$};
\node [scale=\labelsize] at (0,5) {$b$};
\end{tikzpicture}
\end{aligned}
-
\begin{aligned}
\begin{tikzpicture}[thick, scale=0.2]
\draw [thick, fill=\fillcomp, fill opacity=0.8] (0.5, 8)
    to (0.5, 3) 
    to [out=down, in=down, looseness=1.5] (-0.5,3)
    to (-0.5, 8)
    to [out=up, in=up, looseness=2](0.5, 8);  
\draw [fill=\fillC, fill opacity=0.8](-2,-8)
to (-2, 4)
to [out=up, in=up](2, 4)
to (2, -8)
to  [out=down, in=down, looseness=2](3, -8)
to (3, 8)
to [out=up, in=up, looseness=2](2, 8)
to [out=down, in=down, looseness=1](-2, 8)
to [out=up, in=up, looseness=2](-3, 8)
to (-3, -8)
to [out=down, in=down](0.5, -8)
to (0.5, -6)
to (-0.5, -6)
to (-0.5, -8)
to [out=down, in=down](-2,-8);
\draw [thick, fill=\fillcomp, fill opacity=0.8] (0.5, -6)
    to (0.5, -3) 
    to [out=up, in=up, looseness=1.5] (-0.5,-3)
    to (-0.5, -6); 
\node (e) [fill=white, draw, minimum width=30pt, scale=0.6, fill opacity=1] at (0,-5.5) {$g$};
\node [scale=\labelsize] at (2.5,6) {$a$};
\node [scale=\labelsize] at (0,4) {$b$};
\end{tikzpicture}
\end{aligned}
\right]
\right]
\Leftrightarrow
\left[
\begin{aligned}
\begin{tikzpicture} [scale=0.4,thick]
\draw [fill=\fillC, fill opacity=0.8] (0,0.5) 
to [out=right, in=down, looseness=1.3](2,3.2)
to [out=up, in=up, looseness=2](1.5,3.2)
to [out=down, in=right](0.8, 2.9)
to [out=right, in=up](1.4, 2.6)
to [out=down, in=right](0.5,1)
to [out=left, in=down](0.25, 1.5)
to (-0.25, 1.5)
to [out=down, in=right](-0.5, 1)
to [out=left, in=down](-1.4, 2.6)
to [out=up, in=left](-0.8, 2.9)
to [out=left, in=down](-1.5,3.2)
to  [out=up, in=up, looseness=2](-2,3.2)
to [out=down, in=left, looseness=1.3](0,0.5);
\draw [fill=\fillcomp, fill opacity=0.8] (0.25, 1.5)
to [out=up,in=down](1,2.5)
to [out=up, in=up, looseness=2](0.5, 2.5)
to [out=down, in=down, looseness=1.1](-0.5,2.5)
to [out=up, in=up, looseness=2](-1,2.5)
to [out=down, in=up](-0.25, 1.5);
\draw [thick](0.8,3) to (0.8,5);  
\draw [thick](-0.8,3) to (-0.8,5);  
\node [draw, scale=0.4, minimum width=30, minimum height=17,fill=white, fill opacity=1] at (0, 1.4) {$g$};
\node [Vertex, scale=\vertexsize, vertex colour=white] at (0.8, 2.9) {};
\node [Vertex, scale=\vertexsize, vertex colour=white]at (-0.8, 2.9) {};
\draw [fill=\fillC, yscale=-1, fill opacity=0.8] (0,-7.5) 
to [out=right, in=down, looseness=1.1](2,-4.8)
to [out=up, in=up, looseness=2](1.5,-4.8)
to [out=down, in=right](0.8, -5.1)
to [out=right, in=up](1.4, -5.4)
to [out=down, in=right](0, -7.)
to [out=left, in=down](-1.4, -5.4)
to [out=up, in=left](-0.8, -5.1)
to [out=left, in=down](-1.5,-4.8)
to  [out=up, in=up, looseness=2](-2,-4.8)
to [out=down, in=left, looseness=1.1](0,-7.5);
\draw [fill=\fillcomp, yscale=-1, fill opacity=0.8](0, -6.7)
to [out=right,in=down](1.1,-5.5)
to [out=up, in=up, looseness=2](0.5, -5.5)
to [out=down, in=down, looseness=2](-0.5,-5.5)
to [out=up, in=up, looseness=2](-1.1,-5.5)
to [out=down, in=left](0, -6.7);
\node [Vertex, scale=\vertexsize, vertex colour=white] at (0.8, 5.1) {};
\node [Vertex, scale=\vertexsize, vertex colour=white]at (-0.8, 5.1) {};
\node [scale=\labelsize] at (0, 7.3) {$a$};
\node [scale=\labelsize] at (0, 6.4) {$b$};
\end{tikzpicture}
\end{aligned}
-
\begin{aligned}
\begin{tikzpicture} [thick, scale=0.2]
\draw (1.5,1) to (1.5,-1);
\draw (-1.5,-3) to (-1.5,3);
\draw [fill=\fillcomp, fill opacity=0.8] (0.5, -6)
to [out=up, in=down, in looseness=2](1.8, -3.4)
to (1.8, -1.25)
to [out=up, in=up, looseness=2](1.2, -1.25)
to (1.2, -3.4)
to [out=down, in=down, looseness=2](-1.2, -3.4)
to [out=up, in=up, looseness=2](-1.8, -3.4)
to [out=down, in=up, out looseness=2](-0.5, -6);
\draw [fill=\fillcomp, fill opacity=0.8] (1.8, 7)
to (1.8, 1.25)
to [out=down, in=down, looseness=2](1.2, 1.25)
to (1.2, 3.4)
to [out=up, in=up, looseness=2](-1.2, 3.4)
to [out=down, in=down, looseness=2] (-1.8, 3.4)
to (-1.8, 7)
to [out=up, in=up ,looseness=1](1.8, 7);
\draw [fill=\fillC, fill opacity=0.8] (-2.5,-8)
to (-2.5, -4.5)
to (-2.5,-4)
to [out=up, in=left](-1.5,-3)
to [out=left, in=down](-2.5, -2)
to (-2.5, 2)
to [out=up, in=left](-1.5, 3)
to [out=left, in=down](-2.5, 4)
to (-2.5, 5)
to [out=up, in=up, looseness=0.7](2.5, 5)
to (2.5, 4)
to [out=down, in=right](1.5, 1)
to [out=right, in=up](2.5, 0)
to [out=down, in=right](1.5, -1)
to [out=right, in=up](2.5, -2)
to (2.5, -8)
to [out=down, in=down, looseness=2](3.5, -8)
to (3.5, 8)
to [out=up, in=up, looseness=2](2.5, 8)
to [out=down, in=down, looseness=0.7](-2.5, 8)
to [out=up, in=up, looseness=2](-3.5, 8)
to (-3.5, -8)
to [out=down, in=down](0.5, -8)
to (0.5, -7)
to (-0.5, -7)
to  (-0.5, -8)
to [out=down, in=down](-2.5, -8);
\node [Vertex, scale=\vertexsize, vertex colour=white] at (1.5,1) {};
\node [Vertex, scale=\vertexsize, vertex colour=white] at (-1.5,-3) {};
\node [Vertex, scale=\vertexsize, vertex colour=white] at (1.5,-1) {};
\node [Vertex, scale=\vertexsize, vertex colour=white] at (-1.5,3) {};
\node (e) [fill=white, draw, minimum width=30pt, scale=0.6, fill opacity=1] at (0,-6.7) {$g$};
\node [scale=\labelsize] at (3,6.5) {$a$};
\node [scale=\labelsize] at (0,5.5) {$b$};
\end{tikzpicture}
\end{aligned}
=
\frac{1}{n}
\left[
\begin{aligned}
\begin{tikzpicture}[thick, scale=0.2]
\draw [thick, fill=\fillcomp, fill opacity=0.8] (0.5, 8)
    to (0.5, 3) 
    to [out=down, in=down, looseness=1.5] (-0.5,3)
    to (-0.5, 8)
    to [out=up, in=up, looseness=2](0.5, 8); 
\draw [fill=\fillC, fill opacity=0.8] (3, -8) 
to [out=down, in=down, looseness=2](2,-8)
to (2, 8)
to  [out=up, in=up, looseness=2] (3, 8)
to (3, -8);
\draw [fill=\fillcomp, fill opacity=0.8](0.5, -5.5)
to (0.5, -3)
to [out=up, in=up, looseness=1.5]  (-0.5, -3)
to (-0.5, -5.5)
to [out=down, in=down, looseness=2](0.5, -5.5);
\draw [fill=\fillC, fill opacity=0.8](0.5, -8)
to (0.5, -5.5)
to  (-0.5, -5.5)
to (-0.5, -8)
to [out=down, in=down, looseness=2](0.5, -8);
\node (e) [fill=white, draw, minimum width=30pt, scale=0.6, fill opacity=1] at (0,-5.5) {$g$};
\node [scale=\labelsize] at (2.5,5) {$a$};
\node [scale=\labelsize] at (0,5) {$b$};
\end{tikzpicture}
\end{aligned}
-\begin{aligned}
\begin{tikzpicture}[thick, scale=0.2]
\draw [thick, fill=\fillcomp, fill opacity=0.8] (0.5, 8)
    to (0.5, 3) 
    to [out=down, in=down, looseness=1.5] (-0.5,3)
    to (-0.5, 8)
    to [out=up, in=up, looseness=2](0.5, 8); 
\draw [fill=\fillcomp, fill opacity=0.8](0.5, -5.5)
to (0.5, -3)
to [out=up, in=up, looseness=1.5]  (-0.5, -3)
to (-0.5, -5.5)
to [out=down, in=down, looseness=2](0.5, -5.5);
\draw [fill=\fillC, fill opacity=0.8](0.5, -8)
to (0.5, -5.5)
to  (-0.5, -5.5)
to (-0.5, -8)
to [out=down, in=down, looseness=2](0.5, -8);
\node (e) [fill=white, draw, minimum width=30pt, scale=0.6, fill opacity=1] at (0,-5.5) {$g$};
\node [scale=\labelsize] at (0,-7.5) {$a$};
\node [scale=\labelsize] at (0,5) {$b$};
\end{tikzpicture}
\end{aligned}
\right]
\right]
\end{align*}
After we cancel out measurement and encoding operations controlled by the same pools of classical information, the equation is simplified to:
\begin{align}
\begin{aligned}
\begin{tikzpicture} [scale=0.4,thick]
\draw [fill=\fillC, fill opacity=0.8] (0,0.5) 
to [out=right, in=down, looseness=1.3](2,3.2)
to [out=up, in=up, looseness=2](1.5,3.2)
to [out=down, in=right](0.8, 2.9)
to [out=right, in=up](1.4, 2.6)
to [out=down, in=right](0.5,1)
to [out=left, in=down](0.25, 1.5)
to (-0.25, 1.5)
to [out=down, in=right](-0.5, 1)
to [out=left, in=down](-1.4, 2.6)
to [out=up, in=left](-0.8, 2.9)
to [out=left, in=down](-1.5,3.2)
to  [out=up, in=up, looseness=2](-2,3.2)
to [out=down, in=left, looseness=1.3](0,0.5);
\draw [fill=\fillcomp, fill opacity=0.8] (0.25, 1.5)
to [out=up,in=down](1,2.5)
to [out=up, in=up, looseness=2](0.5, 2.5)
to [out=down, in=down, looseness=1.1](-0.5,2.5)
to [out=up, in=up, looseness=2](-1,2.5)
to [out=down, in=up](-0.25, 1.5);
\draw [thick](0.8,3) to (0.8,5);  
\draw [thick](-0.8,3) to (-0.8,5);  
\node [draw, scale=0.4, minimum width=30, minimum height=17,fill=white, fill opacity=1] at (0, 1.4) {$g$};
\node [Vertex, scale=\vertexsize, vertex colour=white] at (0.8, 2.9) {};
\node [Vertex, scale=\vertexsize, vertex colour=white]at (-0.8, 2.9) {};
\draw [fill=\fillC, yscale=-1, fill opacity=0.8] (0,-7.5) 
to [out=right, in=down, looseness=1.1](2,-4.8)
to [out=up, in=up, looseness=2](1.5,-4.8)
to [out=down, in=right](0.8, -5.1)
to [out=right, in=up](1.4, -5.4)
to [out=down, in=right](0, -7.)
to [out=left, in=down](-1.4, -5.4)
to [out=up, in=left](-0.8, -5.1)
to [out=left, in=down](-1.5,-4.8)
to  [out=up, in=up, looseness=2](-2,-4.8)
to [out=down, in=left, looseness=1.1](0,-7.5);
\draw [fill=\fillcomp, yscale=-1, fill opacity=0.8](0, -6.7)
to [out=right,in=down](1.1,-5.5)
to [out=up, in=up, looseness=2](0.5, -5.5)
to [out=down, in=down, looseness=2](-0.5,-5.5)
to [out=up, in=up, looseness=2](-1.1,-5.5)
to [out=down, in=left](0, -6.7);
\node [Vertex, scale=\vertexsize, vertex colour=white] at (0.8, 5.1) {};
\node [Vertex, scale=\vertexsize, vertex colour=white]at (-0.8, 5.1) {};
\node [scale=\labelsize] at (0, 7.2) {$a$};
\node [scale=\labelsize] at (0, 6.35) {$b$};
\end{tikzpicture}
\end{aligned}
\quad&=\quad
\begin{aligned}
\begin{tikzpicture} [thick, scale=0.5]
\draw [fill=\fillC, fill opacity=0.8] (0,1)
to (0,0)
to [out=down,in=down, looseness=2](1,0)
to (1,1)
to (0,1);
\draw [fill=\fillcomp, fill opacity=0.8] (0,2)
to (0,1)
to (1,1)
to (1,2)
to [out=up, in=up, looseness=2](0,2);
\node [draw, minimum width=40, 
minimum height=27,fill=white, scale=0.5, fill opacity=1]
 at (0.5, 1) {$g$}; 
   \node [scale=\labelsize]
 at (0.5, 0) {$a$};
   \node [scale=\labelsize]
 at (0.5, 2) {$b$};
\end{tikzpicture}
\end{aligned}
\quad+\quad\frac{1}{n}
\left(
\begin{aligned}
\begin{tikzpicture} [thick, scale=0.5]
\draw [fill=\fillC, fill opacity=0.8] (0,1)
to (0,0)
to [out=down,in=down, looseness=2](1,0)
to (1,1)
to (0,1);
\draw [fill=\fillcomp, fill opacity=0.8] (0,2)
to (0,1)
to (1,1)
to (1,2)
to [out=up, in=up, looseness=2](0,2);
\node [draw, minimum width=40, 
minimum height=27,fill=white, scale=0.5, fill opacity=1]
 at (0.5, 1) {$g$}; 
\draw [fill=\fillC, fill opacity=0.8] (2.5,1)
to [out=up, in=up, looseness=2](1.5,1)
to [out=down,in=down, looseness=2](2.5,1);
\draw [fill=\fillcomp, fill opacity=0.8] (3,1)
to [out=up, in=up, looseness=2](4,1)
to [out=down,in=down, looseness=2](3,1);
   \node [scale=\labelsize]
 at (2, 1) {$a$};
   \node [scale=\labelsize]
 at (3.5, 1) {$b$};
\end{tikzpicture}
\end{aligned}
\quad-\quad
\begin{aligned}
\begin{tikzpicture} [thick, scale=0.5]
\draw [fill=\fillC, fill opacity=0.8] (0,1)
to (0,0)
to [out=down,in=down, looseness=2](1,0)
to (1,1)
to (0,1);
\draw [fill=\fillcomp, fill opacity=0.8] (0,2)
to (0,1)
to (1,1)
to (1,2)
to [out=up, in=up, looseness=2](0,2);
\node [draw, minimum width=40, 
minimum height=27,fill=white, scale=0.5, fill opacity=1]
 at (0.5, 1) {$g$}; 
\draw [fill=\fillcomp, fill opacity=0.8] (2.5,1)
to [out=up, in=up, looseness=2](1.5,1)
to [out=down,in=down, looseness=2](2.5,1);
   \node [scale=\labelsize]
 at (0.5, 0) {$a$};
   \node [scale=\labelsize]
 at (2, 1) {$b$};
\end{tikzpicture}
\end{aligned}
\right)
\\[-20pt]
\nonumber
&= \quad
\def\quad{\hspace{0.3cm}}
\begin{aligned}
\begin{tikzpicture} [thick, scale=0.5]
\draw [fill=\fillC, fill opacity=0.8] (0,1)
to (0,0)
to [out=down,in=down, looseness=2](1,0)
to (1,1)
to (0,1);
\draw [fill=\fillcomp, fill opacity=0.8] (0,2)
to (0,1)
to (1,1)
to (1,2)
to [out=up, in=up, looseness=2](0,2);
\node [draw, minimum width=40, 
minimum height=27,fill=white, scale=0.5, fill opacity=1]
 at (0.5, 1) {$g$}; 
   \node [scale=\labelsize]
 at (0.5, 0) {$a$};
   \node [scale=\labelsize]
 at (0.5, 2) {$b$};
\end{tikzpicture}
\end{aligned}
\quad+\quad
\frac 1 n (n+1) \quad-\quad \frac 1 n
\quad= \quad
\begin{aligned}
\begin{tikzpicture} [thick, scale=0.5]
\draw [fill=\fillC, fill opacity=0.8] (0,1)
to (0,0)
to [out=down,in=down, looseness=2](1,0)
to (1,1)
to (0,1);
\draw [fill=\fillcomp, fill opacity=0.8] (0,2)
to (0,1)
to (1,1)
to (1,2)
to [out=up, in=up, looseness=2](0,2);
\node [draw, minimum width=40, 
minimum height=27,fill=white, scale=0.5, fill opacity=1]
 at (0.5, 1) {$g$}; 
   \node [scale=\labelsize]
 at (0.5, 0) {$a$};
   \node [scale=\labelsize]
 at (0.5, 2) {$b$};
\end{tikzpicture}
\end{aligned}
\quad+\quad
1
\end{align}

\noindent
{\bf Proof of Theorem~\ref{lemma: MKP correctness}.}
By Lemma~\ref{lemma: prime power} a suitable family of $n^2$ functions $f_i : [n+1] \to [n]$ and a complementary family of controlled measurements in $n+1$ bases exist. The latter by defining a controlled operation to pick one of the $n+1$ complementary bases to measure in. For each $f_i$ we define a state $\mu_{f_i}$ in accordance with Lemma~\ref{def: Alice's basis}. By Lemma~\ref{lemma:orthonormal basis} states $\ket{\mu _{f_i}}$ form an orthonormal basis $\mu$ that we use to solve the problem. 
The scheme $\mathrm{MK}_{\mdot, \mu, f}$ then simplifies to:
\begin{align*}
\begin{aligned}
\begin{tikzpicture} [scale=0.4,thick]
\draw [white] (-0.5, 6)
to [out=down, in=up, out looseness=1.5, in looseness=0.8] (3.5,2.5)
to [out=down, in=right](2, 1.5)
to [out=right, in=up](3.5,0.5)
to (3.5, 0)
to [out=down, in=right](2,-1)
to [out=right, in=up](3.5,-2)
to [out=down, in=down, looseness=2](4.5,-2)
to (4.5,10)
to (3.5, 10)
to (3.5, 5)
to [out=down, in=down, in looseness=1.3](0.5, 6);
\fill [fill=\fillcomp, fill opacity=0.8, draw] (0.75, 10) 
to (0.75, 7)
to (0, 7)
to (0, 10);
\fill [fill=\fillcomp, fill opacity=0.8, draw] (1.625, 10) 
to (1.625, 7)
to (2.375, 7)
to (2.375, 10);
\fill [fill=\fillC, fill opacity=0.8, draw] (3.25, 10) 
to (3.25, 7)
to (4, 7)
to (4, 10);
\node [draw, fill=white, minimum width=90pt, minimum height=25pt,scale=0.6, fill opacity=1]at (2, 7) {$\mathrm{MK}_{\mdot,\mu,f}$};
\end{tikzpicture}
\end{aligned}
\quad=\quad
\sum_{i,a,b}
\left[
\begin{aligned}
\begin{tikzpicture} [scale=0.4,thick]
\node (B) [Vertex, scale=\vertexsize] at (1,-1) {};
\node (D) [Vertex, scale=\vertexsize] at (0, 1.5) {};
\draw (-3.3,3) to [out=down, in=\swangle, in looseness=2, out looseness=0.5] (1,-1);
\draw [out=up, in=down] (0, 1.5) to (-0.7, 3);
\fill [fill=\fillcomp, fill opacity=0.8, draw] (1.5, 10)
to (1.5,5) 
to (1.5,1)
to [out=down, in=down, looseness=2](0.5,1)
to [out=up, in=up, looseness=2] (-0.5,1)
to [out=down, in=left](1,-1)
to [out=right, in=down](2.5,0.5)
to (2.5,5)
to (2.5, 10);
\fill [fill=\fillcomp, fill opacity=0.8, draw] (-1.25, 10) 
to (-1.25, 8)
to (-0.25, 8)
to (-0.25, 10);
\fill [fill=\fillC, fill opacity=0.8, draw=none] (-0.5, 8)
to [out=down, in=up, out looseness=1.5, in looseness=0.8] (3.5,4.5)
to (3.5, 2.5)
to [out=down, in=right](2, 1.5)
to [out=right, in=up](3.5,0.5)
to (3.5, 0)
to [out=down, in=right](2,-1)
to [out=right, in=up](3.5,-2)
to [out=down, in=down, looseness=2](4.5,-2)
to (4.5,10)
to (3.5, 10)
to (3.5, 7)
to [out=down, in=down, in looseness=1.3](0.5, 8);
\draw (-0.5, 8)
to [out=down, in=up, out looseness=1.5, in looseness=0.8] (3.5,4.5)
to (3.5, 2.5)
to [out=down, in=right](2, 1.5)
to [out=right, in=up](3.5,0.5)
to (3.5, 0)
to [out=down, in=right](2,-1)
to [out=right, in=up](3.5,-2)
to [out=down, in=down, looseness=2](4.5,-2)
to (4.5,10)
 (3.5, 10)
to (3.5, 7)
to [out=down, in=down, in looseness=1.3](0.5, 8);
\draw (2, -1) to (1, -1);
\draw (2, 1.5) to (0, 1.5);
\draw [fill=\fillC, fill opacity=0.8,yscale=-1] (-2,-6) 
to [out=right, in=down](0.9,-2.8)
to [out=up, in=up, looseness=2](0,-2.8)
to [out=down, in=right](-0.7, -3.1)
to [out=right, in=up](-0.1, -3.4)
to [out=down, in=right](-1.5,-5)
to [out=left, in=down](-1.75, -4.5)
to (-2.25, -4.5)
to [out=down, in=right](-2.5, -5)
to [out=left, in=down](-3.9, -3.4)
to [out=up, in=left](-3.3, -3.1)
to [out=left, in=down](-4,-2.8)
to  [out=up, in=up, looseness=2](-4.9,-2.8)
to [out=down, in=left](-2,-6);
\draw [fill=\fillcomp, fill opacity=0.8,yscale=-1](-1.75, -4.5)
to [out=up,in=down](-0.5,-3.5)
to [out=up, in=up, looseness=2](-1, -3.5)
to [out=down, in=down, looseness=0.6](-3,-3.5)
to [out=up, in=up, looseness=2](-3.5,-3.5)
to [out=down, in=up](-2.25, -4.5); 
\fill [fill=\fillcomp, fill opacity=0.8, draw](-2,8)
to (-2, 7.2)
to [out=down, in=down, looseness=2](-1, 7.2)
to (-1,8);
\node [scale=\labelsize] at (-1.5, 7.2) {$i$};
\node [scale=\labelsize] at (4, 5.7) {$a$};
\node [scale=\labelsize] at (2, 4) {$b$};
\node [draw, scale=0.4, minimum width=60, minimum height=17,fill=white, fill opacity=1] at (-2, 4.5) {${f_i}^{\dagger}$};
\node [Vertex, scale=\vertexsize, vertex colour=white] at (-0.7, 3.1) {};
\node [Vertex, scale=\vertexsize, vertex colour=white]at (-3.3, 3.1) {};
\node [draw, fill=white, minimum width=60pt, scale=0.6, fill opacity=1]at (-0.7, 8) {$f$};
\end{tikzpicture}
\end{aligned}
\quad-\quad
\begin{aligned}
\begin{tikzpicture} [scale=0.4,thick]
\node (B) [Vertex, scale=\vertexsize] at (1,-1) {};
\node (D) [Vertex, scale=\vertexsize] at (0, 1.5) {};
\draw  (0, 1.5) 
to [out=up, in=down](-0.7, 3)
to [out=up, in=up, looseness=2](-3.3,3)
to [out=down, in=\swangle, in looseness=2, out looseness=0.5] (1,-1);
\fill [fill=\fillcomp, fill opacity=0.8, draw](-2,8)
to (-2, 7.2)
to [out=down, in=down, looseness=2](-1, 7.2)
to (-1,8);
\fill [fill=\fillcomp, fill opacity=0.8, draw] (1.5, 10)
to (1.5,5) 
to (1.5,1)
to [out=down, in=down, looseness=2](0.5,1)
to [out=up, in=up, looseness=2] (-0.5,1)
to [out=down, in=left](1,-1)
to [out=right, in=down](2.5,0.5)
to (2.5,5)
to (2.5, 10);
\fill [fill=\fillcomp, fill opacity=0.8, draw] (-1.25, 10) 
to (-1.25, 8)
to (-0.25, 8)
to (-0.25, 10);
\fill [fill=\fillC, fill opacity=0.8, draw=none] (-0.5, 8)
to [out=down, in=up, out looseness=1.5, in looseness=0.8] (3.5,4.5)
to (3.5, 2.5)
to [out=down, in=right](2, 1.5)
to [out=right, in=up](3.5,0.5)
to (3.5, 0)
to [out=down, in=right](2,-1)
to [out=right, in=up](3.5,-2)
to [out=down, in=down, looseness=2](4.5,-2)
to (4.5,10)
to (3.5, 10)
to (3.5, 7)
to [out=down, in=down, in looseness=1.3](0.5, 8);
\draw (-0.5, 8)
to [out=down, in=up, out looseness=1.5, in looseness=0.8] (3.5,4.5)
to (3.5, 2.5)
to [out=down, in=right](2, 1.5)
to [out=right, in=up](3.5,0.5)
to (3.5, 0)
to [out=down, in=right](2,-1)
to [out=right, in=up](3.5,-2)
to [out=down, in=down, looseness=2](4.5,-2)
to (4.5,10)
 (3.5, 10)
to (3.5, 7)
to [out=down, in=down, in looseness=1.3](0.5, 8);
\draw (2, -1) to (1, -1);
\draw (2, 1.5) to (0, 1.5);
\node [scale=\labelsize] at (-1.5, 7.2) {$i$};
\node [scale=\labelsize] at (4, 5.7) {$a$};
\node [scale=\labelsize] at (2, 4) {$b$};
\node [draw, fill=white, minimum width=60pt, scale=0.6, fill opacity=1]at (-0.7, 8) {$f$};
\end{tikzpicture}
\end{aligned}
\,\,\right]\\
=\quad
\sum_{i,a,b}
\left[
\begin{aligned}
\begin{tikzpicture} [thick, scale=0.5]
\draw [fill=\fillC, fill opacity=0.8] (0,0.5) 
to [out=right, in=down, looseness=1.3](2,3.2)
to [out=up, in=up, looseness=2](1.5,3.2)
to [out=down, in=right](0.8, 2.9)
to [out=right, in=up](1.4, 2.6)
to [out=down, in=right](0.5,1)
to [out=left, in=down](0.25, 1.5)
to (-0.25, 1.5)
to [out=down, in=right](-0.5, 1)
to [out=left, in=down](-1.4, 2.6)
to [out=up, in=left](-0.8, 2.9)
to [out=left, in=down](-1.5,3.2)
to  [out=up, in=up, looseness=2](-2,3.2)
to [out=down, in=left, looseness=1.3](0,0.5);
\draw [fill=\fillcomp, fill opacity=0.8] (0.25, 1.5)
to [out=up,in=down](1,2.5)
to [out=up, in=up, looseness=2](0.5, 2.5)
to [out=down, in=down, looseness=1.1](-0.5,2.5)
to [out=up, in=up, looseness=2](-1,2.5)
to [out=down, in=up](-0.25, 1.5);
\draw [thick](0.8,3) to (0.8,5);  
\draw [thick](-0.8,3) to (-0.8,5);  
\node [draw, scale=0.4, minimum width=30, minimum height=17,fill=white, fill opacity=1] at (0, 1.4) {$f_i$};
\node [Vertex, scale=\vertexsize, vertex colour=white] at (0.8, 2.9) {};
\node [Vertex, scale=\vertexsize, vertex colour=white]at (-0.8, 2.9) {};
\draw [fill=\fillC, yscale=-1, fill opacity=0.8] (0,-7.5) 
to [out=right, in=down, looseness=1.1](2,-4.8)
to [out=up, in=up, looseness=2](1.5,-4.8)
to [out=down, in=right](0.8, -5.1)
to [out=right, in=up](1.4, -5.4)
to [out=down, in=right](0, -7.)
to [out=left, in=down](-1.4, -5.4)
to [out=up, in=left](-0.8, -5.1)
to [out=left, in=down](-1.5,-4.8)
to  [out=up, in=up, looseness=2](-2,-4.8)
to [out=down, in=left, looseness=1.1](0,-7.5);
\draw [fill=\fillcomp, yscale=-1, fill opacity=0.8](0, -6.7)
to [out=right,in=down](1.1,-5.5)
to [out=up, in=up, looseness=2](0.5, -5.5)
to [out=down, in=down, looseness=2](-0.5,-5.5)
to [out=up, in=up, looseness=2](-1.1,-5.5)
to [out=down, in=left](0, -6.7);
\node [Vertex, scale=\vertexsize, vertex colour=white] at (0.8, 5.1) {};
\node [Vertex, scale=\vertexsize, vertex colour=white]at (-0.8, 5.1) {};
\node [scale=\labelsize] at (0, 7.2) {$a$};
\node [scale=\labelsize] at (0, 6.35) {$b$};
\draw [fill=\fillC, fill opacity=0.8] (3,4)
to (3,3)
to [out=down,in=down, looseness=2](4,3)
to (4,4)
to (3,4);
\draw [fill=\fillcomp, fill opacity=0.8] (5,5.5)
to (5,4.5)
to [out=down,in=down, looseness=2](6,4.5)
to (6,5.5);
\draw [fill=\fillC, fill opacity=0.8] (7,5.5)
to (7,4.5)
to [out=down,in=down, looseness=2](8,4.5)
to (8,5.5);
\draw [fill=\fillcomp, fill opacity=0.8] (3,5.5)
to (3,4)
to (4,4)
to (4,5.5);
\node [draw, minimum width=45, minimum height=27,fill=white, scale=0.5, fill opacity=1] at (3.5, 4) {$f_i$}; 
\node [scale=\labelsize] at (3.5, 3) {$a$};
\node [scale=\labelsize] at (7.5, 4.75) {$a$};
\node [scale=\labelsize] at (5.5, 4.75) {$b$};
\end{tikzpicture}
\end{aligned}
\quad-\quad
\begin{aligned}
\begin{tikzpicture} [thick, scale=0.5]
\draw (-3, 0) 
to [out=down,in=down, looseness=2](-4, 0)
to (-4, 2)
to [out=up,in=up, looseness=2](-3, 2);
\node [Vertex, scale=\vertexsize, vertex colour=white] at (-3, 0) {};
\node [Vertex, scale=\vertexsize, vertex colour=white]at (-3, 2) {};
\draw [fill=\fillcomp, fill opacity=0.8] (-3.5,1.5)
to (-3.5,0.5)
to [out=down,in=down, looseness=2](-2.5,0.5)
to (-2.5,1.5)
to [out=up,in=up, looseness=2](-3.5, 1.5);
\draw [fill=\fillC, fill opacity=0.8] (0,1)
to (0,0)
to [out=down,in=down, looseness=2](1,0)
to (1,1)
to (0,1);
\draw [fill=\fillcomp, fill opacity=0.8] (2,2.5)
to (2,1.5)
to [out=down,in=down, looseness=2](3,1.5)
to (3,2.5);
\draw [fill=\fillC, fill opacity=0.8] (4,2.5)
to (4,1.5)
to [out=down,in=down, looseness=2](5,1.5)
to (5,2.5);
\draw [fill=\fillcomp, fill opacity=0.8] (0,2.5)
to (0,1)
to (1,1)
to (1,2.5);
\draw [fill=\fillC, fill opacity=0.8]
(-2, 2.5)
to [out=up,in=up, looseness=2](-1, 2.5)
to (-1, -0.5)
to [out=down,in=down, looseness=1.5](-2, -0.5)
to [out=up, in=right](-3, 0)
to [out=right, in=down](-2, 0.5)
to (-2, 1.5)
to [out=up, in=right](-3, 2)
to [out=right, in=down](-2, 2.5);
\node [draw, minimum width=45, minimum height=27,fill=white, scale=0.5, fill opacity=1] at (0.5, 1) {$f_i$}; 
\node [scale=\labelsize] at (0.5, 0) {$a$};
\node [scale=\labelsize] at (4.5, 1.75) {$a$};
\node [scale=\labelsize] at (2.5, 1.75) {$b$};
\node [scale=\labelsize] at (-1.5, 1) {$a$};
\node [scale=\labelsize] at (-3, 1) {$b$};
\end{tikzpicture}
\end{aligned}
\right]
\end{align*}
By Lemma~\ref{lemma: auxilliary MKP complementarity result}, this simplifies as follows:
\begin{align*}
&\sum_{i,a,b}
\left[
\left(
\begin{aligned}
\begin{tikzpicture} [thick, scale=0.5]
\draw [fill=\fillC, fill opacity=0.8] (0,1)
to (0,0)
to [out=down,in=down, looseness=2](1,0)
to (1,1)
to (0,1);
\draw [fill=\fillcomp, fill opacity=0.8] (0,2)
to (0,1)
to (1,1)
to (1,2)
to [out=up, in=up, looseness=2](0,2);
\node [draw, minimum width=50, minimum height=27,fill=white, scale=0.5, fill opacity=1] at (0.5, 1) {$f_i$}; 
\node [scale=\labelsize] at (0.5, 0) {$a$};
\node [scale=\labelsize] at (0.5, 2) {$b$};
\end{tikzpicture}
\end{aligned}
+1
\right)
\left(
\begin{aligned}
\begin{tikzpicture} [thick, scale=0.5]
\draw [fill=\fillC, fill opacity=0.8] (0,1)
to (0,0)
to [out=down,in=down, looseness=2](1,0)
to (1,1)
to (0,1);
\draw [fill=\fillcomp, fill opacity=0.8] (2,2.5)
to (2,1.5)
to [out=down,in=down, looseness=2](3,1.5)
to (3,2.5);
\draw [fill=\fillC, fill opacity=0.8] (4,2.5)
to (4,1.5)
to [out=down,in=down, looseness=2](5,1.5)
to (5,2.5);
\draw [fill=\fillcomp, fill opacity=0.8] (0,2.5)
to (0,1)
to (1,1)
to (1,2.5);
\node [draw, minimum width=45, minimum height=27,fill=white, scale=0.5, fill opacity=1] at (0.5, 1) {$f_i$}; 
\node [scale=\labelsize] at (0.5, 0) {$a$};
\node [scale=\labelsize] at (4.5, 1.75) {$a$};
\node [scale=\labelsize] at (2.5, 1.75) {$b$};
\end{tikzpicture}
\end{aligned}
\right)
\quad-\quad
\begin{aligned}
\begin{tikzpicture} [thick, scale=0.5]
\draw [fill=\fillC, fill opacity=0.8] (0,1)
to (0,0)
to [out=down,in=down, looseness=2](1,0)
to (1,1)
to (0,1);
\draw [fill=\fillcomp, fill opacity=0.8] (2,2.5)
to (2,1.5)
to [out=down,in=down, looseness=2](3,1.5)
to (3,2.5);
\draw [fill=\fillC, fill opacity=0.8] (4,2.5)
to (4,1.5)
to [out=down,in=down, looseness=2](5,1.5)
to (5,2.5);
\draw [fill=\fillcomp, fill opacity=0.8] (0,2.5)
to (0,1)
to (1,1)
to (1,2.5);
\node [draw, minimum width=45, minimum height=27,fill=white, scale=0.5, fill opacity=1] at (0.5, 1) {$f$}; 
\node [scale=\labelsize] at (0.5, 0) {$a$};
\node [scale=\labelsize] at (4.5, 1.75) {$a$};
\node [scale=\labelsize] at (2.5, 1.75) {$b$};
\end{tikzpicture}
\end{aligned}
\right]
\quad=\quad
\sum_{i,a,b}
\left[
\begin{aligned}
\begin{tikzpicture} [thick, scale=0.5]
\draw [fill=\fillC, fill opacity=0.8] (0,1)
to (0,0)
to [out=down,in=down, looseness=2](1,0)
to (1,1)
to (0,1);
\draw [fill=\fillcomp, fill opacity=0.8] (2,2.5)
to (2,1.5)
to [out=down,in=down, looseness=2](3,1.5)
to (3,2.5);
\draw [fill=\fillC, fill opacity=0.8] (4,2.5)
to (4,1.5)
to [out=down,in=down, looseness=2](5,1.5)
to (5,2.5);
\draw [fill=\fillcomp, fill opacity=0.8] (0,2.5)
to (0,1)
to (1,1)
to (1,2.5);
\node [draw, minimum width=45, minimum height=27,fill=white, scale=0.5, fill opacity=1] at (0.5, 1) {$f_i$}; 
\draw [fill=\fillC, fill opacity=0.8] (-2,1)
to (-2,0)
to [out=down,in=down, looseness=2](-1,0)
to (-1,1)
to (-2,1);
\draw [fill=\fillcomp, fill opacity=0.8] (-2,2)
to (-2,1)
to (-1,1)
to (-1,2)
to [out=up, in=up, looseness=2](-2,2);
\node [draw, minimum width=45, minimum height=27,fill=white, scale=0.5, fill opacity=1] at (-1.5, 1) {$f_i$}; 
\node [scale=\labelsize] at (-1.5, 0) {$a$};
\node [scale=\labelsize] at (0.5, 0) {$a$};
\node [scale=\labelsize] at (4.5, 1.75) {$a$};
\node [scale=\labelsize] at (2.5, 1.75) {$b$};
\node [scale=\labelsize] at (-1.5, 2) {$b$};
\end{tikzpicture}
\end{aligned}
\right]
\\
&\qquad=\quad
\sum_{i,a,b}
\left[
\begin{aligned}
\begin{tikzpicture} [thick, scale=0.5]
\draw [fill=\fillC, fill opacity=0.8] (0,1)
to (0,0)
to [out=down,in=down, looseness=2](1,0)
to (1,1)
to (0,1);
\draw [fill=\fillC, fill opacity=0.8] (4,2.5)
to (4,1.5)
to [out=down,in=down, looseness=2](5,1.5)
to (5,2.5);
\draw [fill=\fillcomp, fill opacity=0.8] (0,2.5)
to (0,1)
to (1,1)
to (1,2.5);
\draw [fill=\fillC, fill opacity=0.8] (2,1)
to (2,0)
to [out=down,in=down, looseness=2](3,0)
to (3,1)
to (2,1);
\draw [fill=\fillcomp, fill opacity=0.8] (2,2.5)
to (2,1)
to (3,1)
to (3,2.5);
\node [draw, minimum width=45, minimum height=27,fill=white, scale=0.5, fill opacity=1] at (0.5, 1) {$f_i$}; 
\node [draw, minimum width=45, minimum height=27,fill=white, scale=0.5, fill opacity=1] at (2.5, 1) {$f_i$}; 
\node [scale=\labelsize] at (0.5, 0) {$a$};
\node [scale=\labelsize] at (2.5, 0) {$a$};
\node [scale=\labelsize] at (.5, 2) {$b$};
\node [scale=\labelsize] at (4.5, 1.75) {$a$};
\end{tikzpicture}
\end{aligned}
\right]
\quad=\quad
\sum_{i,a,b}
\begin{aligned}
\begin{tikzpicture} [thick, scale=0.5]
\draw [fill=\fillC, fill opacity=0.8] (4,2.5)
to (4,0.5)
to [out=down,in=down, looseness=2](3, 0.5)
to (3, 0.75)
to (2, 0.75)
to (2, 0.5)
to [out=down,in=down, looseness=2](1, 0.5)
to (1, 0.75)
to (0, 0.75)
to (0, 0.5)
to [out=down,in=down, looseness=1.2](5,0.5)
to (5,2.5);
\draw [fill=\fillcomp, fill opacity=0.8] (0,2.5)
to (0,1)
to (1,1)
to (1,2.5);
\draw [fill=\fillcomp, fill opacity=0.8] (2,2.5)
to (2,1)
to (3,1)
to (3,2.5);
\node [draw, minimum width=45, minimum height=27,fill=white, scale=0.5, fill opacity=1] at (0.5, 1) {$f_i$}; 
\node [draw, minimum width=45, minimum height=27,fill=white, scale=0.5, fill opacity=1] at (2.5, 1) {$f_i$}; 
\node [scale=\labelsize] at (2.5, -0.5) {$a$};
\node [scale=\labelsize] at (.5, 2) {$b$};
\end{tikzpicture}
\end{aligned}
\quad=\quad
\sum_i
\begin{aligned}
\begin{tikzpicture} [thick, scale=0.5]
\draw [fill=\fillC, fill opacity=0.8] (4,2.5)
to (4,0.5)
to [out=down,in=down, looseness=2](3, 0.5)
to (3, 0.75)
to (2, 0.75)
to (2, 0.5)
to [out=down,in=down, looseness=2](1, 0.5)
to (1, 0.75)
to (0, 0.75)
to (0, 0.5)
to [out=down,in=down, looseness=1.2](5,0.5)
to (5,2.5);
\draw [fill=\fillcomp, fill opacity=0.8] (0,2.5)
to (0,1)
to (1,1)
to (1,2.5);
\draw [fill=\fillcomp, fill opacity=0.8] (2,2.5)
to (2,1)
to (3,1)
to (3,2.5);
\node [draw, minimum width=45, minimum height=27,fill=white, scale=0.5, fill opacity=1] at (0.5, 1) {$f_i$}; 
\node [draw, minimum width=45, minimum height=27,fill=white, scale=0.5, fill opacity=1] at (2.5, 1) {$f_i$}; 
\end{tikzpicture}
\end{aligned}
\\
&\qquad=\quad
\sum_i
\begin{aligned}
\begin{tikzpicture} [thick, scale=0.5]
\draw [fill=\fillC, fill opacity=0.8] (4,2.5)
to (4,0.5)
to [out=down,in=down, looseness=2](2, 0.5)
to (2, 0.75)
to (1, 0.75)
to (1, 0.5)
to [out=down,in=down, looseness=1.7](5,0.5)
to (5,2.5);
\draw [fill=\fillcomp, fill opacity=0.8, draw=none] (3,2.5)
to [out=down, in=up](2,0.5)
to (1,0.5)
to [out=up, in=down](0,2.5)
to (1, 2.5)
to [out=down,in=down, looseness=2.3](2,2.5);
\draw  (3,2.5)
to [out=down, in=up](2,0.5)
to (1,0.5)
to [out=up, in=down](0,2.5)
(1, 2.5)
to [out=down,in=down, looseness=2.3](2,2.5);
\node [draw, minimum width=60, minimum height=27,fill=white, scale=0.5, fill opacity=1] at (1.6, 0.5) {$f_i$}; 
\end{tikzpicture}
\end{aligned}
\quad=\quad
\sum_i
\begin{aligned}
\begin{tikzpicture} [thick, scale=0.5]
\draw [fill=\fillC, fill opacity=0.8] (4,2.5)
to (4,0.5)
to [out=down,in=down, looseness=2](2, 0.5)
to (2, 0.75)
to (1, 0.75)
to (1, 0.5)
to [out=down,in=down, looseness=1.7](5,0.5)
to (5,2.5);
\draw [fill=\fillcomp, fill opacity=0.8, draw=none] (3,2.5)
to [out=down, in=up](2,0.5)
to (1,0.5)
to [out=up, in=down](0,2.5)
to (1, 2.5)
to [out=down,in=down, looseness=2.3](2,2.5);
\draw  (3,2.5)
to [out=down, in=up](2,0.5)
to (1,0.5)
to [out=up, in=down](0,2.5)
(1, 2.5)
to [out=down,in=down, looseness=2.3](2,2.5);
\draw [fill=\fillcomp, fill opacity=0.8] (-0.25, 0.5)
to (-0.25, -0.5)
to [out=down,in=down, looseness=2](0.75, -0.5)
to (0.75, 0.5);
\node [draw, minimum width=90, minimum height=27,fill=white, scale=0.5, fill opacity=1] at (1.0, 0.5) {$f$}; 
\node [scale=\labelsize] at (0.25, -0.5) {$i$};
\end{tikzpicture}
\end{aligned}
\quad=\quad
\begin{aligned}
\begin{tikzpicture} [thick, scale=0.5]
\draw [fill=\fillC, fill opacity=0.8] (4,2.5)
to (4,0.5)
to [out=down,in=down, looseness=2](2, 0.5)
to (2, 0.75)
to (1, 0.75)
to (1, 0.5)
to [out=down,in=down, looseness=1.7](5,0.5)
to (5,2.5);
\draw [fill=\fillcomp, fill opacity=0.8, draw=none] (3,2.5)
to [out=down, in=up](2,0.5)
to (1,0.5)
to [out=up, in=down](0,2.5)
to (1, 2.5)
to [out=down,in=down, looseness=2.3](2,2.5);
\draw  (3,2.5)
to [out=down, in=up](2,0.5)
to (1,0.5)
to [out=up, in=down](0,2.5)
(1, 2.5)
to [out=down,in=down, looseness=2.3](2,2.5);
\draw [fill=\fillcomp, fill opacity=0.8] (-0.25, 0.5)
to (-0.25, -0.5)
to [out=down,in=down, looseness=2](0.75, -0.5)
to (0.75, 0.5);
\node [draw, minimum width=90, minimum height=27,fill=white, scale=0.5, fill opacity=1] at (1.0, 0.5) {$f$}; 
\end{tikzpicture}
\end{aligned}
\end{align*}
The final diagram clearly remains unchanged under application of the projector as per Definition~\ref{Def:Mean king problem specification}, hence the result is established.
$\hfill \square$

\end{document}